\newcommand{\Oh}{\mathcal{O}}
\newcommand{\degree}{\mathsf{deg}}
\newcommand{\nodelabel}{\ell}
\newcommand{\heavy}{\mathsf{heavy}}
\newcommand{\head}{\mathsf{head}}
\newcommand{\start}{\mathsf{start}}
\newcommand{\bound}{\mathsf{bound}}
\newcommand{\interval}{\mathsf{I}}
\newcommand{\nodespan}{\mathsf{span}}
\newcommand{\treeroot}{\mathsf{root}}
\newcommand{\level}{\mathsf{level}}
\newcommand{\seglen}{\mathsf{sl}}
\newcommand{\seglenp}{\mathsf{sl'}}
\newcommand{\seglenb}{\mathsf{sl''}}
\newcommand{\rt}{\mathsf{rt}}
\newcommand{\lightweight}{\mathsf{lw}}
\newcommand{\lt}{\mathsf{local}}
\newcommand{\sma}{\mathsf{small}}
\newcommand{\smap}{\mathsf{small'}}
\newcommand{\rank}{\mathsf{rank}}
\newenvironment{proofs}{%
  \proof}{\endproof}
\newcolumntype{C}{>{\centering\arraybackslash}X} 
\newtheorem{theorem}{Theorem}
\newtheorem{lemma}[theorem]{Lemma}
\newtheorem{proposition}[theorem]{Proposition}
\newtheorem{claim}[theorem]{Claim}
\newtheorem{fact}[theorem]{Fact}
\newtheorem{definition}[theorem]{Definition}
\newtheorem{corollary}[theorem]{Corollary}
\title{Shorter Labels for Routing in Trees}
\author{Pawe{\l} Gawrychowski}
\author{Wojciech Janczewski}
\author{Jakub Łopuszański}
\affil{University of Wrocław, Poland}
\date{}
\begin{document}

\maketitle

\begin{abstract}
A routing labeling scheme assigns a binary string, called a label, to each
node in a network, and chooses a distinct port number from $\{1,\ldots,d\}$
for every edge outgoing from a node of degree $d$.
Then, given the labels of $u$ and $w$ and no other information about
the network, it should be possible to determine the port number corresponding
to the first edge on the shortest path from $u$ to $w$.
In their seminal paper, Thorup and Zwick [SPAA 2001] designed several
routing methods for general weighted networks.
An important technical ingredient in their paper that according
to the authors ``may be of independent practical and theoretical interest''
is a routing labeling scheme for trees of arbitrary degrees.
For a tree on $n$ nodes, their scheme constructs labels consisting
of $(1+o(1))\log n$ bits such that the sought port number can be
computed in constant time. Looking closer at their construction,
the labels consist of $\log n + \Oh(\log n\cdot \log\log\log n / \log\log n)$ bits.
Given that the only known lower bound is $\log n+\Omega(\log\log n)$, a natural
question that has been asked for other labeling problems in trees is to determine
the asymptotics of the smaller-order term.

We make the first (and significant) progress in 19 years on determining the correct
second-order term for the length of a label in a routing labeling scheme for trees on $n$ nodes.
We design such a scheme with labels of length $\log n+\Oh((\log\log n)^{2})$.
Furthermore, we modify the scheme to allow for computing the port number
in constant time at the expense of slightly increasing the length to $\log n+\Oh((\log\log n)^{3})$.
\end{abstract}

\thispagestyle{empty}
\clearpage
\setcounter{page}{1}

\section{Introduction}

The notion of informative labeling schemes, formally introduced by Peleg~\cite{Peleg05}, elegantly
captures the idea that, for a very large and scattered network which needs to be accessed from
different locations, it is desirable to select the identifiers of the nodes as to encode some information
about the underlying topology. Of course, to make this notion useful one needs to restrict the length
of each identifier, as otherwise it would be possible to simply store a description of the whole network there.
A particularly natural example considered by Kannan et al.~\cite{Kannan} is assigning the identifiers
in such a way that one can decide if two nodes are adjacent by inspecting their identifiers, without using
any additional information about the network. This can be seen as pushing the distributed aspect
of representing a network to the extreme. 

An informative labeling scheme assigns a short binary string, called a label, to each node in a network,
so that a function defined on subsets of nodes can be calculated for any subset by inspecting the labels
of the nodes. Formally, such a scheme consists of an encoder that assigns the labels given a description
of the whole network, and a decoder that evaluates the function on a subset given the labels of the
nodes. The primary goal is to minimise the maximum length of a label assigned to a node,
while efficient implementations of the decoder and the encoder are the secondary goals. 
We stress that the only information available to the decoder are the given labels, and it cannot
access any other information about the network. This question has been considered for functions such
as adjacency~\cite{Kannan,alstrup2015optimal,petersen2015near,alstrup2015adjacency,AlonN17,bonichon2007short}, distance~\cite{AbboudGMW17,GawrychowskiU16,GawrychowskiKU16,alstrup2015distance,alstrup2005labeling,gavoille2004distance,alstrup2016simpler,FGNW16,GavoilleKKPP01,gavoille2007local,peleg2000proximity,kaplan2001short}, flows and connectivity~\cite{KatzKKP04,HsuL09,Korman10} or Steiner tree~\cite{Peleg05}.
See~\cite{rotbart2016new} for an up-to-date survey.
Interestingly, as observed by Kannan et al.~\cite{Kannan}, the question of designing an adjacency labeling
scheme for a given class of graphs is in fact equivalent to constructing the so-called induced universal graph,
that is, a larger graph containing each graph from the class as a node-induced subgraph. This purely
combinatorial problem has been been already studied in 60s~\cite{moon_1965}.

\paragraph{Routing.}
A fundamental challenge related to networks is that of designing efficient routing mechanisms that
are able to transfer information between any two nodes of the network along the shortest (or, at least,
a reasonably short) path. A routing scheme formalises this as follows. Each node of the network
receive packets of information and needs to decide whether they have already reached their destination or should
be forwarded to one of its neighbours. This decision is made based on a header attached to the
packet and a local routing table. The edges outgoing from each node have their associated port numbers
and, if the decision is to forward the packet, the node needs to determine a port number and forward
the packet along the corresponding edge. The stretch of a routing scheme is the maximum ratio
between the length of a path determined by the scheme and the length of the shortest path. In most
routing schemes the header is simply the label of the destination node chosen by the designer of the network.

In this model, Cowen designed a stretch-3 routing scheme with headers of size $\Oh(\log n)$ and local
routing tables of size $\tilde\Oh(n^{2/3})$~\cite{Cowen01}. Eilam et al.~\cite{EilamGP03} decreased the
size of the local routing tables to $\tilde\Oh(n^{1/2})$ at the expense of increasing the stretch to 5.
Then, in their seminal paper Thorup and Zwick~\cite{thorup2001compact} designed a stretch-3 routing scheme with headers of size
$(1+o(1))\log n$ and routing tables of size $\tilde\Oh(n^{1/2})$ (essentially optimal for such a stretch).
An important technical ingredient introduced in their paper is a routing scheme for trees of unbounded
degree that assigns a label of length $(1+o(1))\log n$ bits to every node in such a way that, given only the label
of a source node and the label of a destination node, it is possible to determine in constant time
the port number of the first edge on the unique path from the source to the destination. According
to the authors, this ingredient ``may be of independent practical and theoretical interests''. Indeed,
follow-up works on this topic also reduce routing in a general graph to routing in (multiple) trees.

\paragraph{Labeling schemes in trees.}
Arguably, trees are one of the most important classes of graphs considered in the context of labeling schemes.
Functions studied in the literature on labeling schemes in trees include
adjacency~\cite{alstrup2015optimal,alstrup2002small,bonichon2007short},
ancestry~\cite{abiteboul2006compact,fraigniaud2010compact,FraigniaudAncestry},
routing~\cite{FraigniaudG01,FraigniaudG02,thorup2001compact},
distance~\cite{peleg2000proximity,gavoille2004distance,alstrup2005labeling,alstrup2015distance,gavoille2007distributed},
and nearest common ancestors~\cite{Peleg05,AlstrupGKR02,fischer2009short,AHL14,GawrychowskiL17}.
See Table~\ref{tab:treesbest} for a summary of the state-of-the-art bounds for these problems.

\begin{table}
\centering
 \begin{tabular}{|c|c|c|} 
 \hline
 \textbf{Function} & \textbf{Lower bound} & \textbf{Upper bound}\\ 
 \hline
 Distance & $1/4\log^2 n - \Oh(\log n)$ \cite{alstrup2015distance} & $1/4\log^2 n + o(\log^2 n)$ \cite{FGNW16} \\
 \hline
 Fixed-port routing & $\Omega(\log^2 n/\log{\log{n}})$ \cite{FraigniaudG02} & $\Oh(\log^2 n/\log{\log{n}})$ \cite{FraigniaudG01} \\
 \hline
 $(1+\epsilon)$-approximate distance & $\Omega(\log{(1/\epsilon)}\log{n})$ \cite{FGNW16} & $\Oh(\log{(1/\epsilon)}\log{n})$ \cite{FGNW16} \\
 \hline
 Nearest common ancestor & $1.008\log{n}-\Oh(1)$ \cite{AHL14} & $2.318\log{n}+\Oh(1)$ \cite{GawrychowskiL17} \\
 \hline
 Designer-port routing & $\log{n}+\Omega(\log{\log{n}})$ \cite{alstrup2005labeling} & $\pmb{\log{n}+\Oh((\log{\log{n}})^2)}$ \\
 \hline
 Ancestry & $\log{n}+\Omega(\log\log n)$~\cite{alstrup2005labeling} & $\log n+\Oh(\log\log n)$~\cite{FraigniaudAncestry} \\
 \hline
 Siblings/connectivity (unique) & $\log{n}+\Omega(\log{\log{n}})$ \cite{alstrup2005labeling} & $\log{n}+\Oh(\log{\log{n}})$ \cite{alstrup2005labeling} \\
 \hline
 Adjacency & $\log{n}$ (trivial) & $\log{n}+\Oh(1)$ \cite{alstrup2015optimal} \\
 \hline
\end{tabular}
\caption{Summary of the state-of-the-art bounds for labeling schemes in trees.}
\label{tab:treesbest}
\end{table}

For the first three problems,
it was not too hard to design schemes with labels of length
$\Oh(\log n)$, and lower bounds of $\log n$ are also quite straightforward.
In case of adjacency labeling, we can store the preorder numbers of the node and its parent
in the label, while for ancestry we can store the pre- and post-order numbers of the node.
For routing, \cite{FraigniaudG01} gives a fairly simple solution with labels of
length $7\log{n}$. Thus, the first challenge was to determine the exact coefficient in the
first-order terms. For all three problems, this turned out to be 1. This brought the necessity
of considering the second-order terms to differentiate the complexities of these problems.
After a series of intermediate results, Alstrup, Dahlgaard, and Knudsen~\cite{alstrup2015optimal}
presented an adjacency labeling scheme with labels of length $\log n+\Oh(1)$.
On the other hand, ancestry labeling is known to require $\log n+\Omega(\log\log n)$ bits~\cite{alstrup2005labeling},
and schemes with labels of length $\log n+\Oh(\log\log n)$ are known~\cite{FraigniaudAncestry}.
Thus, we have a separation between adjacency and ancestry labeling in trees.
For routing, the ancestry lower bound of $\log n+\Omega(\log\log n)$ essentially carries over,
but from the upper bound perspective there was no progress after Thorup and Zwick~\cite{thorup2001compact}
presented their scheme with labels of length $\log n+\Oh(\log n \cdot \log\log\log n/\log\log n)$.
Table~\ref{tab:treeshistory} summarizes the subsequent improvements for all three mentioned problems.

\begin{table}
\begin{minipage}{0.48\textwidth}
 \begin{tabularx}{6.5cm}{|C|} 
 \hline
 \textbf{Adjacency in trees} \\
 \hline 
 $2\log{n}$ (trivial) \\
 \hline
 $\log{n}+\Oh(\log{\log{n}})$ \cite{ChungUniversal}\\
 \hline 
 $\log{n}+\Oh(\log^{*}{n})$ \cite{alstrup2002small} \\
 \hline 
 $\log{n}+\Oh(1)$ \cite{alstrup2015optimal}\\
 \hline
 \end{tabularx}
 
 \medskip
 
 \begin{tabularx}{6.5cm}{|C|} 
 \hline
 \textbf{Routing in trees} \\
 \hline 
 $\Oh(\log{n})$ \cite{FraigniaudG01} \\
 \hline 
 $\log n+\Oh(\log n\cdot \log\log\log n / \log\log n)$ \cite{thorup2001compact} \\
 \hline
 $\pmb{\log{n}+\Oh((\log{\log{n}})^2)}$ \\
 \hline
\end{tabularx}
\end{minipage}
\begin{minipage}{0.48\textwidth}
 \begin{tabularx}{6.5cm}{|C|} 
 \hline
 \textbf{Ancestry in trees} \\
 \hline 
 $2\log{n}$ (trivial) \\
 \hline
 $1.5\log{n}+\Oh(\log{\log{n}})$ \cite{AbiteboulAncestor} \\
 \hline 
 $\log{n}+\Oh(\sqrt{\log{n}})$ \cite{AlstrupAncestorSqrt} \\
 \hline 
 $\log n+\Oh(\log n\cdot \log\log\log n / \log\log n)$ \cite{thorup2001compact} \\
 \hline 
 $\log{n}+4\log{\log{n}}+\Oh(1)$ \cite{FraigniaudAncestry} \\
 \hline 
 $\log{n}+2\log{\log{n}}+3$ \cite{DahlgaardKR15} \\
 \hline 
 \end{tabularx}
 \hfill
\end{minipage}
\caption{Progress on improving the smaller-order term for different labeling problems in trees.}
\label{tab:treeshistory}
\end{table}

\paragraph{Routing labeling schemes in trees.}
There are two models for routing labeling schemes in trees: fixed-port and designer-port. In both versions, each undirected edge is replaced
with two directed edges and, for every node $u$ of degree $\degree(u)$, the edges outgoing from $u$ have their assigned
port numbers that form a permutation of $\{1,\ldots,\degree(u)\}$. Given the labels of $u$ and $w$ we should output the port number
corresponding to the first edge on the path from $u$ to $w$. The difference between the models
is that in the fixed-port model the port numbers are predetermined and cannot be modified,
while in the designer-port model we are free to select them while assigning the labels, as long
as the port numbers assigned to the edges outgoing from $u$ form a permutation of
$\{1,\ldots,\degree(u)\}$. Intuitively, the additional degree of freedom could allow for shorter labels.
For trees on $n$ nodes it is known that labels of length $\Oh(\log^{2}n/\log\log n)$
are enough in the fixed-port model~\cite{FraigniaudG01}, and this is asymptotically tight~\cite{FraigniaudG02}.
In the designer-port model, Fraigniaud and Gavoille~\cite{FraigniaudG01} showed that labels
of length $4.8\log n$ are enough.
Then, Thorup and Zwick~\cite{thorup2001compact} achieved labels of length $\log n+\Oh(\log n\cdot \log\log\log n / \log\log n)$.
For the designer-port model the only known lower bound of  $\log n+\Omega(\log\log n)$ bits follows by
a simple reduction from ancestry labeling\footnote{\cite{rotbart2016new} explicitly mentions that such a lower bound holds,
but does not describe the (simple) necessary modification. For completeness, we provide a self-contained description in Section~\ref{sec:lower_bound}.},
for which an upper bound of $\log n+\Oh(\log\log n)$
does exist~\cite{FraigniaudAncestry}, thus such a reduction cannot result in a better lower bound.

\paragraph{Our results.}
In the designer-port model,
we construct a labeling scheme for routing in trees on $n$ nodes with labels of length
$\log n+\Oh((\log\log n)^{2})$ (Theorem~\ref{th:main}).
Thus, we make significant progress on determining the correct second-order term,
exponentially improving on the previous results.
Furthermore, our scheme is canonical, meaning that the ports are assigned in the natural order
corresponding to the sizes of the subtrees.
We also show how to extend our scheme so that the query can be answered in constant time,
assuming the standard word RAM model with words of logarithmic size,
at the expense of slightly increasing the length of the labels to
$\log n+\Oh((\log\log n)^{3})$ bits (Theorem~\ref{th:const}).
Additionally, we describe how our schemes work for trees of bounded degree or depth,
and present some trade-offs if local tables at nodes are allowed to be larger than single headers.

\paragraph{Overview of the methods.}
Similarly as in many other papers on labeling schemes in trees, we extensively use the idea of rounding up
numbers to a power of $2^{1/b}$.
For integer parameter $b$, any integer $x\in [0,n-1]$ can be rounded up to $x' = \lfloor 2^{t/b} \rfloor$,
for the smallest integer $t$ such that $x' \geq x$ holds.
Assuming that $b$ is known, only the value of $t$ needs to be stored using only $\log(b\log n)=\log b+\log\log n$ bits instead of $\log n$.
This is useful when combined with the notion of an interval-based scheme,
in which we assign a range $[\start(u),\start(u)+\bound(u))$ to every node, and then round up $\bound(u)$.

We follow the particularly clean version of such an approach
used by Dahlgaard, Knudsen, and Rotbart~\cite{DahlgaardKR15} in their ancestry labeling scheme.
This allows us to decide if we should route up, but the real challenge is to route down.
In Section~\ref{sec:preliminary}, we overcome
this difficulty by partitioning the children of $u$ into small and big. The intuition is that the former can be
rounded up more aggressively, and we can afford to store the number of children for
every possible value of $\bound$ corresponding to the latter. By appropriately adjusting the parameters
in this construction, we obtain a canonical routing labeling scheme with labels of length $\log n+\Oh((\log^{3/4}n)\sqrt{\log\log n})$.
This is already a significant improvement on the state-of-the-art, and serves as an introduction
to the better scheme presented in Section~\ref{sec:improved2}.
This section introduces our main technical
contribution, which is a compact encoding of the sizes of the subtrees rooted at the children.
Of course we cannot afford to store the sizes exactly, and hence proceed in two steps. Firstly, we round up
every size, with smaller subtrees being rounded more aggressively. Secondly, we partition the children
into groups and round up the size of every subtree in a group to the same value.
This results in a canonical routing labeling scheme with labels of length $\log n+\Oh(\sqrt{\log n\cdot \log\log n})$.

Finally, in Section~\ref{sec:not_so_final} we combine this with an additional tool used for adjacency labeling~\cite{alstrup2015optimal}:
we are able to guarantee that, for some nodes $u$, there are many trailing zeroes in $\start(u)$, which
allows us to hide some additional information there.
With that we obtain a canonical routing labeling scheme with labels of length $\log n+\Oh((\log\log n)^{2})$.
While we are mostly concerned with determining the asymptotics of the second-order term, we observe
that the decoder for this scheme can be implemented in polylogarithmic time. In Appendix~\ref{sec:const} 
we describe how to modify the labeling to allow answering queries in constant time at the expense
of increasing the length of the labels to $\log n+\Oh((\log\log n)^{3})$.

\section{Preliminaries}
\label{sec:preliminaries}

We consider rooted trees. For a rooted tree $T$, $|T|$ denotes its size (the number of nodes).
This will be denoted by $n$ for the whole input tree.
$\treeroot(T)$ denotes the root of $T$.
$\degree(u)$ denotes the degree (number of children) of $u$.
The subtree rooted at node $u\in T$ is denoted by $T_{u}$.
We define the level of a node $u$, denoted $\level(u)$, to be the unique integer $l$ such that $|T_{u}|\in [2^{l},2^{l+1})$.
Thus, $\level(u)\in \{0,1,\ldots,\log n\}$.  

We need the standard notion of heavy path decomposition of a tree $T$. Each
non-leaf node $u\in T$ selects its child $\heavy(u)$ with the largest subtree (if there is a tie,
we choose any largest subtree).
All other children are called light.
The edge from $u$ to $\heavy(u)$ is called heavy, and all other edges outgoing from $u$ are called light.
A heavy path is a maximal path $P$ consisting of heavy edges, with its rootmost node
denoted by $\head(P)$ (note that such a path must terminate at a leaf).
Heavy paths form a partition of the nodes of $T$.
The light depth of a node $u\in T$ is the number of light edges on the
path from $u$ to the root and is at most $\log n$~\cite{sleator1983data}.

If $s$ is a binary string, $|s|$ denotes its length.
For binary strings $s_1,s_2$ we denote their concatenation by $s_1 \circ s_2$.
$s^k$ is $k$ copies of $s$ concatenated together.
Our labeling schemes will use labels composed of constant number of parts.
The length of each part will be $\Oh(\log{n})$, and as our bounds for length of labels are $\log{n}+\Omega(\log{\log{n}})$,
we will organise them internally as follows.
Let $s_i$ be the $i$-th part of a label, and $l_i$ be a string storing binary representation of $|s_i|$
(so $|l_i| = \Oh(\log{\log{n}})$ in our case).
Then the label is stored as $0^{|l_1|} \circ 1 \circ l_1 \circ s_1 \circ
0^{|l_2|} \circ 1 \circ l_2 \circ s_2 \circ\ldots$, from which any part can be extracted by checking the position
of the most significant bit and then taking prefixes of indicated lengths.
In fact, one can find the most significant bit in constant time even without dedicated operation,
just with basic arithmetics, as presented in the paper of Fredman and Willard on fusion trees~\cite{FusionTrees}.

Let $\mathcal{T}$ be a family of rooted trees. A routing labeling scheme for $\mathcal{T}$ consists of an encoder and a decoder.
\begin{definition}
The encoder takes a tree $T\in \mathcal{T}$ and assigns a
label (binary string) $\nodelabel(u)$ to every node $u\in T$. Additionally, every edge from
$u$ to its child is labeled with a distinct integer from $\{1,2,\ldots,\degree(u)\}$, called
the port number.
\end{definition}
\begin{definition}
The decoder receives labels $\nodelabel(u)$ and
$\nodelabel(w)$, such that $u,w\in T$ for some $T\in \mathcal{T}$ and $u \neq w$.
If the next node on the path from $u$ to $w$ is the parent of $u$, the decoder should return 0\footnote{We might also allow assigning
a port number to the edge from $u$ to its parent. However, we find this formulation cleaner and sufficient for our purposes.}.
Otherwise, it should return the port number corresponding to the first edge
on the path.
\end{definition}
The decoder is not aware of $T$ and only knows that $u$
and $w$ come from the same tree belonging to $\mathcal{T}$.
Our schemes will always create unique labels, so in fact they do not need the assumption that $u\neq w$.
We are interested in minimising the maximum length of a label, that is, $\max_{T\in \mathcal{T}}\max_{u\in T} |\ell(u)|$.
For convenience, we will assume that the value of $\lceil \log{n} \rceil$ is known to the decoder, where $n$ is the size
of the tree. This is without loss of generality, as we can simply include it in every label.
To avoid clutter in the description of our schemes, we will sometimes ignore floors and ceilings.

\begin{definition}
(as in~\cite{thorup2001compact})
We say that the assignment of port numbers to the edges of $T$ is
\emph{canonical} if it is obtained in the following way. Let $v$ be a node of $T$ whose parent, if any, is $v_0$,
and whose children arranged in non-increasing order of size are $v_1, v_2, \ldots v_i$,
i.e. $|T_{v_1}| \geq |T_{v_2}| \geq \ldots \geq |T_{v_i}|$.
Then, for every $j$, the edge from $v$ to $v_{j}$ should be assigned port $j$.
There might be many canonical assignments if subtrees of some children are of equal sizes.
\end{definition}

\paragraph{Computational model. } We assume the standard word RAM model with words consisting
of $w=\Omega(\log n)$ bits. Basic arithmetical operations (addition, subtraction, bitwise operations, multiplication and division)
on such words are assumed to work in constant time. A label consisting of $b$ bits is assumed to be given to the decoder
packed in an array consisting of $\lceil b/w\rceil$ words.

\section{Framework for labeling schemes}
\label{sec:preliminary}
In this section, we first present the framework used by all of our labeling schemes,
based on the ancestry labeling scheme of Dahlgaard et al.~\cite{DahlgaardKR15}.
Then we describe how to extend it to routing for bounded degree trees,
and finally an introductory routing labeling scheme for general trees is presented,
already with better second-order term than the previously known results.

Let $b$ be an integer parameter to be chosen later.
We will assign an interval of integers $\interval(u)=[\start(u),\start(u)+\bound(u))$ to every node $u\in T$.
$\start(u)$ can be thought of as the unique ID of $u$ and we will sometimes refer to it as such.
We shall guarantee the following properties:
\begin{enumerate}
\item If $u$ and $w$ belong to different heavy paths, then $\interval(u)\cap\interval(w)=\emptyset$,
$\interval(u)\subset \interval(w)$, or $\interval(w)\subset \interval(u)$. Furthermore, 
$\interval(u)\subset \interval(w)$ iff $w$ is an ancestor of $u$.
\item If $u$ and $w$ belong to the same heavy path and $u$ is closer to the root of $T$ than
$w$, then $\start(w)\in \interval(u) \textbackslash \{\start(u)\}$. 
\item $\bound(u)=\lfloor 2^{t/b}\rfloor$ for some nonnegative integer $t$, depending on $u$.
\end{enumerate}

From the properties of intervals stated above we have that
$u$ is a proper ancestor of $w$ iff $\start(w)\in \interval(u) \textbackslash \{\start(u)\}$.
Additionally, let $\nodespan(u)=\max_{u'\in T_{u}}\start(u')+\bound(u')-\start(u)$.
To facilitate routing, we will need a specific
assignment with more properties, but first we are going to present a simpler version that is only able to answer ancestry queries.
The ancestry scheme described below is just a small adjustment of~\cite{DahlgaardKR15}.

\subsection{Labeling for ancestry queries}
The intervals are assigned in a bottom-up order.
Consider a heavy path $P=u_{1}-u_{2}-\ldots-u_{p}$, where $\head(P)=u_{1}$,
and assume that by recursion we have already properly assigned intervals to all nodes in the subtrees hanging off $P$,
but for now these intervals are not disjoint, violating property $1$.
We will appropriately shift all of them to guarantee
that the intervals of nodes belonging to different subtrees hanging off $P$ are disjoint.
This is readily done by scanning the path from top to bottom while maintaining
an accumulator initially set to 0. After reaching a node $u_{i}$,
we set $\start(u_{i})$ to be the current accumulator,
increase the accumulator by 1 and iterate over the light children $v_{i,1},v_{i,2},\ldots$
of $u_{i}$. For each such node $v_{i,j}$, we shift all intervals assigned to nodes of $T_{v_{i,j}}$ by
the current accumulator, just by adding the value of the accumulator to $\start$ values for the nodes in $T_{v_{i,j}}$,
and then increase the accumulator by $\nodespan(v_{i,j})$. Finally, we need
to adjust $\bound(u_{i})$ for every $i=1,2,\ldots,p$. It can be seen that we only need that
$\start(u_{i})+\bound(u_{i})$ is at least as large as the final accumulator. Furthermore, the final
accumulator is simply $A=\sum_{i=1}^{p}(1+\sum_{j\geq 1}\nodespan(v_{i,j}))$. Then, because the possible values
of $\bound(u_{i})$ are all numbers of the form $\lfloor 2^{t/b}\rfloor$, we can always
adjust $t$ so that $\start(u_{i})+\bound(u_{i}) \leq 2^{1/b}A$. Thus, we obtain that
$\nodespan(u_{1}) \leq 2^{1/b}\sum_{i=1}^{p}(1+\sum_{j\geq 1}\nodespan(v_{i,j}))$.
We will call this whole step for a single heavy path the shifting procedure.

By unwinding the recurrence and using the fact that the light depth of any node is at most $\log n$, we conclude that:
\[ \nodespan(\treeroot(T)) \leq 2^{(\log n)/b}n ,\]
and by construction $\start(u) + \bound(u) \leq \nodespan(\treeroot(T))$ for any node $u$.
Consequently, storing any $\start(u)$ takes only $\log n + (\log{n}) / b$ bits.
Storing $\bound(u)$ requires $\Oh(\log(b\cdot \log n))$ bits.
This results in an ancestry labeling scheme with labels of length $\log{n}+\Oh(\log{\log{n}})$,
for example by taking $b=\log{n}$.

Consult Figure~\ref{fig:example} for an example with assigned $\start$ and $\bound$ values.

\begin{figure}
  \centering
  \includegraphics[scale=0.8]{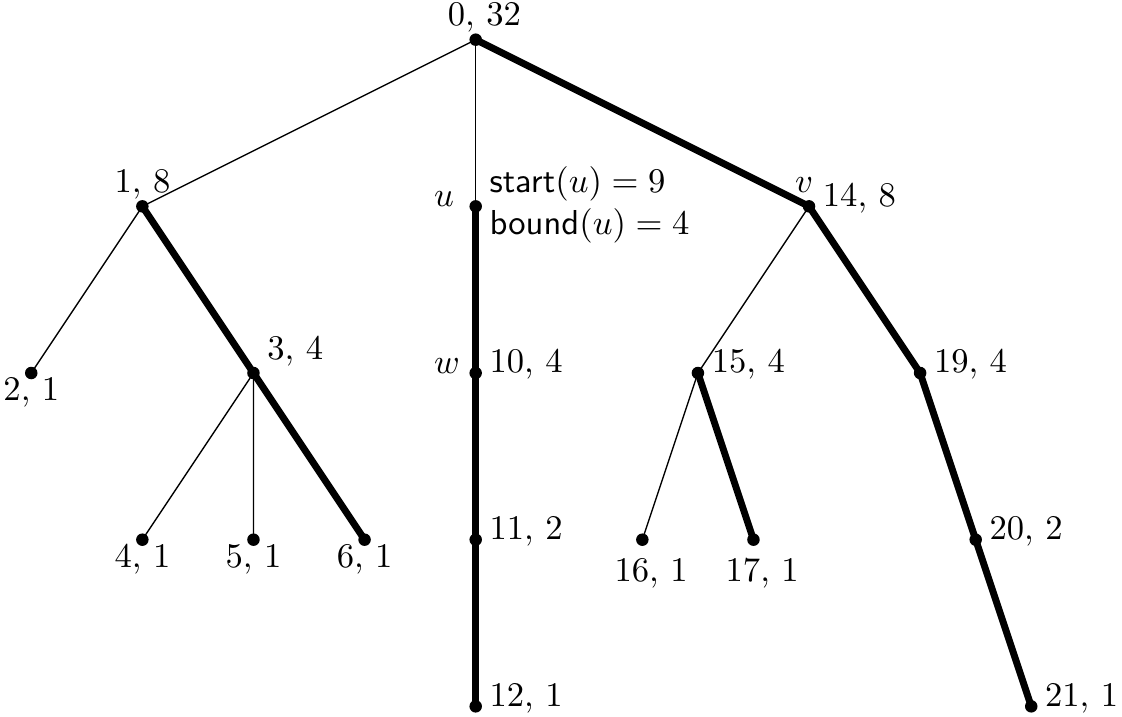}
  \caption{$\start$ and $\bound$ values for a given tree, with $b=1$ for simplicity.
  Thicker lines represent heavy paths.
  Note that $\start(u)+\bound(u)=13$ is smaller than $\start(w)+\bound(w)=14$, even though $w$ is a (heavy) child of $u$.}
  \label{fig:example}
\end{figure}

\subsection{Routing in trees of bounded degree}

The ancestry labeling scheme from the previous subsection extends easily to a routing labeling scheme for
bounded degree trees with labels of length $\log{n}+\Oh(\log{\log{n}})$. As explained in Section~\ref{sec:lower_bound},
this is optimal up to the asymptotics of the smaller-order term, even for very simple trees.

\begin{theorem}
There exists a canonical labeling scheme for routing in bounded degree trees on $n$ nodes
with labels of length $\log{n}+\Oh(\log{\log{n}})$ bits.
\end{theorem}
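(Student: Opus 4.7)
The plan is to augment each label from the ancestry scheme of Section~\ref{sec:preliminary} with a constant number of $O(\log\log n)$-bit fields that let the decoder locate, within a subtree $T_u$, the child of $u$ whose subtree contains $w$. Because $\degree(u) \leq d = O(1)$, this extra information fits in $O(\log\log n)$ bits, keeping the total label length at $\log n + O(\log\log n)$.

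I would instantiate the framework with $b = \log n$, so that $\start(u)$ fits in $\log n + O(1)$ bits and $\bound(u)$ in $O(\log\log n)$ bits, and assign ports canonically: the heavy child of $u$ gets port~$1$, and the remaining children get ports $2, 3, \ldots, \degree(u)$ in non-increasing order of subtree size. In the shifting procedure, I would process the light children of $u$ in port order and place the heavy child immediately after the last light child, so that the child with port $j \geq 2$ occupies a subinterval of $\interval(u)$ starting at $\start(u) + 1 + \sum_{2 \leq k < j}\widehat{\nodespan}(v_k)$, where $v_k$ is the child with port $k$ and $\widehat{\nodespan}(v) = \lfloor 2^{t/b}\rfloor$ is the smallest such value not smaller than $\nodespan(v)$.

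The label of $u$ stores $\start(u)$, $\bound(u)$, and the values $\widehat{\nodespan}(v_2), \widehat{\nodespan}(v_3), \ldots$ of the light children in port order, each occupying $O(\log\log n)$ bits; with the self-delimiting encoding from Section~\ref{sec:preliminaries}, the total length is $\log n + O(\log\log n)$. To decode a query, the algorithm extracts these fields together with $\start(w)$ from $\nodelabel(w)$: if $\start(w) \notin [\start(u), \start(u) + \bound(u))$ then $w$ is not a proper descendant of $u$ and it returns $0$ (route to parent); otherwise, it reconstructs the starts of the children of $u$ by prefix-summing the stored spans and returns the port of the unique child whose subinterval contains $\start(w)$.

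The main subtlety is that the shifting procedure of Section~\ref{sec:preliminary} increments the accumulator by $\nodespan(v)$, which in general does not admit a compact encoding. I would therefore modify the procedure to increment by $\widehat{\nodespan}(v)$ instead, so that the decoder's reconstruction matches the actual interval layout. Each such increment grows by at most a factor of $2^{1/b}$ compared to the original, and the induction from Section~\ref{sec:preliminary} still yields $\nodespan(\treeroot(T)) \leq 2^{O(\log n / b)} n$; with $b = \log n$ this keeps $\start$ at $\log n + O(1)$ bits, so the final label length remains $\log n + O(\log\log n)$.
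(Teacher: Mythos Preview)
Your proposal is correct and follows essentially the same approach as the paper: store, alongside $\start(u)$ and $\bound(u)$, the rounded spans of the (constantly many) light children in port order, and modify the shifting procedure so that the accumulator is advanced by exactly these rounded values. The only cosmetic difference is that the paper phrases the rounding as adjusting $\bound(u_1)$ at each heavy-path head so that $\nodespan(u_1)=\lfloor 2^{t/b}\rfloor$, whereas you round the accumulator increment $\widehat{\nodespan}(v)$ directly at the parent; both yield the same layout and the same $\nodespan(\treeroot(T))\le n\,2^{O(\log n/b)}$ bound with $b=\log n$.
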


\begin{proofs}
We apply the ancestry scheme described in the previous section.
Observe that for a head of a heavy path $u_{1}$ we can make sure that $\nodespan(u_1) = \lfloor 2^{t/b}\rfloor$
for some $t$ by increasing $\bound(u_1)$ if necessary.
Such an operation introduces rounding by at most $2^{1/b}$ for every light edge.
With this property, every node $u_i$ can save in its label a \emph{routing table} $\rt(u_i)$,
storing a constant number of (rounded) numbers $\rt(u_i)[j]=\nodespan(v_{i,j})$
that represent rounded sizes of subtrees rooted in the light children of $u_i$.
Additionally, we assign number $j+1$ to the port leading to $v_{i,j}$, and number $1$ to the port leading to the heavy child.
Then, the decoder can answer a query for routing from $u$ to $w$ in the following way.
Firstly, it checks whether $u$ is an ancestor of $w$, and if not port number $0$ leading to the parent of $u$ is chosen.
Secondly, if $\start(w)-\start(u) > \sum_{j=1} \rt(u)[j]$, port number $1$ leading to the heavy child is chosen.
In the last case, the decoder finds the smallest $k$ such that $\start(w)-\start(u) \leq \sum_{j=1}^{k} \rt(u)[j]$ holds
and then port number $k+1$ is chosen.
Pseudocode for the encoder is presented in Algorithm~\ref{alg:prelimenc}.

Again, by a straightforward induction on the light depth of a node, it can be shown that $\nodespan(u) \leq |T_{u}| 2^{2\level(u)/b}$.
Using $b=\log{n}$, we get that the final length of a label is as stated, since $\start(u)$ is stored on $\log{n}+\Oh(1)$
bits and there is constant number of other parts, each stored on $\Oh(\log{\log{n}})$ bits.
The scheme can be made canonical by simply ordering light children by the size of their subtrees.
\end{proofs}

\begin{algorithm}[h]
\begin{algorithmic}[1]
  \Function{Create-labels}{$P$}
  \State \textbf{Input:} heavy path $P = u_1,\ldots,u_p$. $b$ is a fixed parameter.
  \State \textbf{Output:} labels for every node $u \in T_{u_1}$.
  \State
  \State $A \gets 0$ \Comment{Accumulator}
  \For{$i=1 \ldots p$}
    \State $\start(u_i) \gets A, A \gets A + 1$
    \For{$j=1 \ldots \degree(u_i)-1$} \Comment{Light children of $u_i$}
      \State \Call{Create-labels}{$P'$}, where $P'$ is a heavy path with $v_{i,j}$ as the head     
      \State Add $A$ to $\start(w)$ for every $w \in T_{v_{i,j}}$ \Comment{Shifting procedure}
      \State $\rt(u)[j] \gets \nodespan(v_{i,j})$ \Comment{Routing table}
      \State $A \gets A + \nodespan(v_{i,j})$
    \EndFor
  \EndFor
  \For{$i=1 \ldots p$}
  	\State Let $t$ be the smallest natural number such that $\lfloor 2^{t/b} \rfloor + \start(u_i) \geq A$
    \State $\bound(u_i) \gets \lfloor 2^{t/b} \rfloor$
  \EndFor
  \State Let $t$ be the smallest natural number such that $\lfloor 2^{t/b} \rfloor \geq \nodespan(u_1)$
  \State $\bound(u_1) \gets \lfloor 2^{t/b} \rfloor$
  \EndFunction
\end{algorithmic}
\caption{The encoder for trees of bounded degree.}
\label{alg:prelimenc}
\end{algorithm}

\begin{algorithm}[h]
\begin{algorithmic}[1]
  \Function{Get-port}{$\ell(u),\ell(w)$}
  \State \textbf{Input:} labels of $u$ and $w$.
  \State \textbf{Output:} port number corresponding to the first edge on the path from $u$ to $w$.
  \State
  \State Unpack $\start(w), \start(u), \bound(u), \rt(u)$ from the labels
  \If{$\start(w) \not \in (\start(u),\start(u)+\bound(u))$}
  	\State \Return{0} \Comment{Not an ancestor, routing up}
  \EndIf
  \State $S \gets 0$
  \For{$j=1 \ldots \deg(u)-1$} \Comment{Iterating through the routing table}
  	\State $S \gets S + \rt(u)[j]$
  	\If{$\start(w)-\start(u) \leq S$} \Return{$j+1$}
  	\EndIf
  \EndFor
  \State \Return{1} \Comment{Routing to the heavy child}
  \EndFunction
\end{algorithmic}
\caption{The decoder for trees of bounded degree.}
\label{alg:prelimdec}
\end{algorithm}

In an example presented in Figure~\ref{fig:example}, we have $\nodespan(u)=5$,
so (for $b=1$) it would be rounded to $\nodespan(u)=8$ just by setting $\bound(u)=8$.
Then the accumulator would carry that shift further, and for example $\start(v)$ would be set to $17$.

\subsection{Simpler preliminary routing scheme}
The goal of this subsection will be to prove the following theorem:

\begin{theorem}
There exists a canonical labeling scheme for routing in trees on $n$ nodes with labels of length $\log n + \Oh((\log^{3/4}n)\sqrt{\log{\log{n}}})$.
\end{theorem}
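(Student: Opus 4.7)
\begin{proofs}
The plan is to extend the bounded-degree construction by encoding a \emph{compact} routing table inside the label of each node. As in Section~\ref{sec:preliminary}, each label carries $\start(u)$ on $\log n+\Oh((\log n)/b)$ bits and $\bound(u)$ on $\Oh(\log\log n)$ bits, so the remaining task is to append enough information about the light children of $u$ that, given the offset $\start(w)-\start(u)$, the decoder can identify which light subtree it belongs to (and therefore which canonical port to output). Doing this naively costs one entry per light child, which is prohibitive when $\degree(u)$ is large.

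To avoid that, fix a threshold $\tau$ and classify each light child $v$ of $u$ as \emph{big} if $|T_v|\geq \tau$ and \emph{small} otherwise. There are at most $|T_u|/\tau$ big children, so few enough that we can write down, for each of them in canonical order, its rounded $\nodespan(v)\in\{\lfloor 2^{t/b}\rfloor\}$; each such value lives in a set of size $\Oh(b\log(n/\tau))$ and is therefore stored in $\Oh(\log\log n)$ bits. For small children the number may be large but each $\nodespan(v)=\Oh(\tau)$ is small, so we round $\nodespan$ of small children \emph{more aggressively} using a parameter $b'\le b$; the rounded values then belong to a set of size only $\Oh(b'\log\tau)$, and we can afford to store, for each such value, the count of small children having it. The shifting procedure is unchanged except that it now shifts by the rounded $\nodespan$ (big children first, then the small groups in non-increasing value order), which keeps the port ordering canonical and keeps the decoder's prefix-sum reconstruction consistent with the actual positions of the descendants.

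With this encoding in place, the decoder uses $\start(u),\bound(u)$ to test ancestry (otherwise return $0$); if $u$ is an ancestor of $w$, it walks through the recovered list of (rounded value, count) pairs in canonical order, maintaining a running prefix sum, and returns the port number of the first entry into whose block $\start(w)-\start(u)$ falls; if the offset exceeds the total light contribution, the heavy child is returned. Correctness follows because the shifts during encoding are exactly the rounded values stored.

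The final step is to balance the three contributions to label length: $\Oh((\log n)/b)$ from $\start$, $\Oh((|T_u|/\tau)\log\log n)$ from the big table, and $\Oh(b'\log\tau\cdot\log\log n)$ from the small one, subject to the constraint that the aggressive rounding inflates $\nodespan(\treeroot(T))$ by at most $2^{\Oh((\log n)/b')}$ via the bound on light depth proved in Section~\ref{sec:preliminary}. A calculation (choosing $b,b',\tau$ so that all three terms equal the target excess) gives $\log n+\Oh((\log^{3/4}n)\sqrt{\log\log n})$. The main obstacle is precisely this parameter balancing, and in particular verifying uniformly over all $u$ (including those near the root, where $|T_u|/\tau$ is largest) that the big-children table fits in the budget — this forces $\tau$ to be chosen close to $n$, which in turn forces the small-table and rounding parameters to be carefully coordinated so the induction on light depth in Section~\ref{sec:preliminary} still closes.
\end{proofs}
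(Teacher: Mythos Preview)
Your proposal has the right high-level shape---partition the light children into big and small and store compact information about each class---but the small-children table is under-costed, and this is a genuine gap. You claim $\Oh(b'\log\tau\cdot\log\log n)$ bits for it, i.e.\ $\log\log n$ bits per count. But a single count can be as large as $\degree(u)-1=\Theta(n)$ (think of a star, where all $n-1$ leaves round to the same value), so each count needs $\Theta(\log n)$ bits; even information-theoretically, distributing $\Theta(n)$ children among $b'\log\tau$ buckets costs $\Theta(b'\log\tau\cdot\log n)$ bits. With that correction the balance collapses: your own observation that the big-table forces $\tau$ close to $n$ makes $\log\tau\approx\log n$, and then minimising $(\log n)/b'+b'(\log n)^{2}$ would require $b'<1$. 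The problem is not just the accounting; counts for small children really can carry $\Theta(\log n)$ bits of information needed to name the exact port, so no tweak of the encoding fixes this while keeping the rest of your plan intact.

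The paper sidesteps the issue by treating small children with a different mechanism. First, ``small'' is defined relatively---$\level(v)\le\level(u)-c$ for a parameter $c$---rather than by an absolute threshold $\tau$. This uniformly bounds the number of big children by $2^{c}$, so the big table stores, for each of the $\Oh(bc)$ possible rounded $\nodespan$ values, a count of at most $2^{c}$, giving $\Oh(bc^{2})$ bits (counts for \emph{big} children are cheap precisely because big children are few). Second, for small children nothing like a per-size count is stored: the label carries a single rounded value $\smap(u)$, and the encoder packs all small children into one interval of length $\smap(u)\log\smap(u)$ partitioned harmonically, so the $j$-th small child (in size order) gets a slot of length $\lfloor\smap(u)/j\rfloor$; the decoder recovers $j$ from the offset by inverting a partial harmonic sum. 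The $\log n$-factor blowup from this packing is paid only when traversing to a small child, and the level-gap definition guarantees this happens at most $(\log n)/c$ times on any root-to-leaf path, contributing $\Oh((\log n)\log\log n/c)$ bits. Balancing $(\log n)/b$, $bc^{2}$, and $(\log n)\log\log n/c$ with $b=(\log^{1/4}n)/\sqrt{\log\log n}$ and $c=(\log^{1/4}n)\sqrt{\log\log n}$ gives the stated bound.
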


Although already better than previously known results, it can be seen as an extended warm-up
and methods used in this subsection are not essential for the rest of the paper.
Therefore the reader comfortable enough with the framework can safely skip this part.

We start with a general overview of a routing labeling scheme.
Given the value of $\start$, we need to be able to determine the child whose subtree contains the corresponding node.
With this goal in mind, we partition the light children of every node into small and big.
To define the partition, we say the light child $v$ of $u$ is \emph{big} when $\level(v)\in (\level(u)-c,\level(u))$,
where $c$ is a parameter to be chosen later.
Otherwise light child $v$ is \emph{small}.
We arrange the light children of $u$ in non-increasing order by the sizes of their subtrees,
so that we first have all big children and then all small children.
The intervals of big children are processed exactly as in an ancestry scheme,
but for the intervals of all small children we introduce an intermediate step.

Observe that on any path from the root to a leaf there are at most $(\log n) / c$ small children, 
because every small child implies that the level decreases by at least $c$.
This allows us to deal with small children via a costly but simple approach.
For a node $u_{i}$ with small children $v_{i,k+1},v_{i,k+2},\ldots,v_{i,k+s}$
let $\sma(u_{i}) = \sum_{j=1}^{s} \nodespan(v_{i,k+j})$.
As this value will be stored in a label, we need to round it up to $\smap(u_i)$,
the smallest number $x = \lfloor 2^{t/b} \rfloor$ such that $x \geq \sma(u_i)$.
Then we conceptually replace all intervals corresponding to the small children
of $u_{i}$ with a single interval of length $\smap(u_{i})\log{\smap(u_{i})}$.
This interval is partitioned according to a harmonic sequence, so at the beginning there is
a subinterval of size $\smap(u_{i})$, then subinterval $\lfloor \smap(u_{i})/2 \rfloor$,
in general the $j$-th subinterval has size $\lfloor \smap(u_{i})/j \rfloor$ for $j=1,2,\ldots,\smap(u_{i})$.
Clearly, due to the ordering of the children, $\nodespan(v_{i,k+j})$ can fit into the $j$-th subinterval.
Moreover, knowing $\smap(u_i)$ the decoder can compute the length $\smap(u_{i})\log{\smap(u_{i})}$
of the created interval. Then, given any number from such an interval, the decoder can determine which 
subinterval contains this number, so also which child contains the corresponding node in its subtree.
The final size of the single interval reserved for storing all small children is at most $\sma(u_{i})2^{1/b}(\log{\sma(u_{i})}+1)$.

After processing the small children, we apply the shifting procedure to all intervals corresponding to the big children of $u_{i}$
and the single interval corresponding to its small children.
The shift of the single interval corresponding to all small children of $u_{i}$
is then used to define the shifts of all intervals corresponding to these children, according to a harmonic sequence.
Consult Figure~\ref{fig:line} for an illustration of these intervals on the number axis.
Finally, we make sure that $\nodespan(u_{1})=\lfloor 2^{t/b} \rfloor $ for some $t$ by increasing $\bound(u_{1})$ if necessary.
This is to guarantee that there is a small number of the possible values of $\nodespan$ for the heads of heavy paths, which will be useful later.
By the same argument as before, the procedure guarantees the properties stated at the beginning of Section~\ref{sec:preliminary},
but now we need to analyze the resulting $\nodespan(\treeroot(T))$ more carefully.

\begin{figure}
  \includegraphics[width=\textwidth]{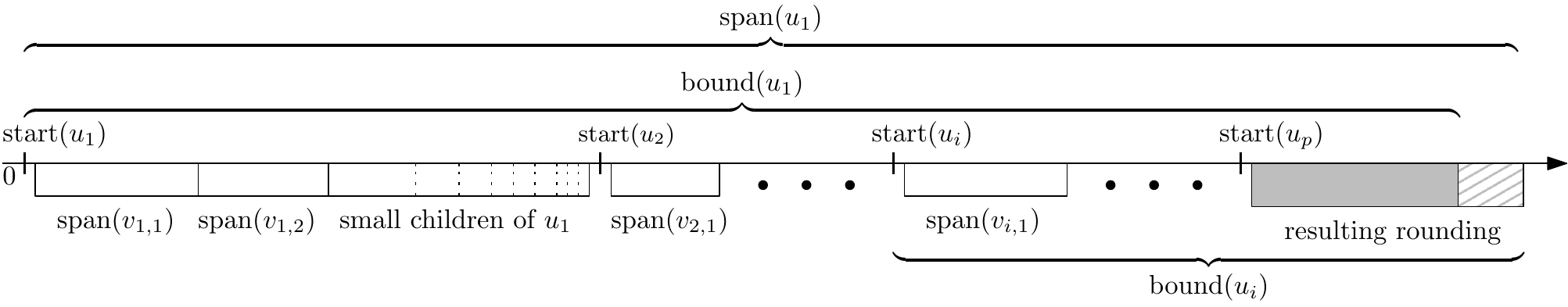}
  \caption{Intervals corresponding to the nodes of a single heavy path and the subtrees that hang off this path.
  $\nodespan(u)$ might be larger than $\bound(u)$, although for the head of the heavy path $u_1$
  we will make sure that $\nodespan(u_1)=2^{t/b}$.}
  \label{fig:line}
\end{figure}

\begin{claim}
Let $u_1$ be the head of its heavy path.
For $n$ large enough the shifting procedure ensures that, after having processed $u_1$, the following holds:
\begin{align*}
\nodespan(u_1) \leq |T_{u_1}| 2^{2\level(u_1)/b} (2\log n)^{\lfloor \level(u_1)/c \rfloor}
\end{align*}
\label{prop:size0}
\end{claim}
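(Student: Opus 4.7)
The plan is to prove the claim by induction on the level $L = \level(u_1)$. The base case $L = 0$ is immediate since then $|T_{u_1}| = 1$, $u_1$ is a leaf, and $\nodespan(u_1) = 1$ matches the right-hand side. For the inductive step I would fix a heavy path $P = u_1,\ldots,u_p$ with $\level(u_1) = L$ and assume the claim for every head of a heavy path of smaller level. Since any light child $v$ of some $u_i$ satisfies $|T_v| \leq (|T_{u_i}|-1)/2 < |T_{u_1}|/2$, its own level is at most $L-1$, so the inductive hypothesis gives
\[
\nodespan(v) \leq |T_v|\,2^{2\level(v)/b}(2\log n)^{\lfloor\level(v)/c\rfloor}.
\]

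Next I would analyze the final value $A$ of the accumulator after the shifting procedure, which satisfies $\nodespan(u_1) \leq 2^{2/b} A$ (one factor of $2^{1/b}$ for adjusting every $\bound(u_i)$ to a value of the form $\lfloor 2^{t/b}\rfloor$, and an additional $2^{1/b}$ from rounding $\nodespan(u_1)$). By construction,
\[
A = \sum_{i=1}^{p}\Bigl(1 + \sum_{\text{big } v} \nodespan(v) + \smap(u_i)\log\smap(u_i)\Bigr).
\]
For a big child $v$ of $u_i$, the only fact I need is $\level(v) \leq L-1$, so the inductive bound yields $\nodespan(v) \leq |T_v|\,2^{2(L-1)/b}(2\log n)^{\lfloor L/c\rfloor}$. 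For a small child $v$, by definition $\level(v) \leq \level(u_i) - c \leq L - c$, hence $\lfloor \level(v)/c\rfloor \leq \lfloor L/c\rfloor - 1$, and the inductive bound gives $\nodespan(v) \leq |T_v|\,2^{2(L-c)/b}(2\log n)^{\lfloor L/c\rfloor - 1}$. Summing yields $\sma(u_i) \leq 2^{2(L-c)/b}(2\log n)^{\lfloor L/c\rfloor-1}\sum_{\text{small}}|T_v|$, from which $\smap(u_i) \leq 2^{1/b}\sma(u_i)$ and, because $\sma(u_i)$ is at most polynomial in $n$ for $n$ large enough, $\log\smap(u_i) \leq 2\log n$. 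This is precisely the place where the harmonic blow-up of the small-children interval consumes exactly one factor of $(2\log n)$ — the same factor that the inductive hypothesis has just saved for us thanks to the level drop of at least $c$.

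Combining the two cases, each summand is bounded by $1 + 2^{2(L-1)/b}(2\log n)^{\lfloor L/c\rfloor}\sum_{\text{light } v \text{ of } u_i}|T_v|$, provided $c \geq 1$ so that $(1+2L-2c)/b \leq 2(L-1)/b$. Because the nodes of $P$ together with the light subtrees hanging off $P$ partition $T_{u_1}$, we have $\sum_{i=1}^{p}(1 + \sum_{\text{light } v \text{ of } u_i}|T_v|) = |T_{u_1}|$, and therefore
\[
A \leq 2^{2(L-1)/b}(2\log n)^{\lfloor L/c\rfloor}\,|T_{u_1}|.
\]
Multiplying by the $2^{2/b}$ factor from the two rounding steps gives the claimed bound $\nodespan(u_1) \leq |T_{u_1}|\,2^{2L/b}(2\log n)^{\lfloor L/c\rfloor}$.

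The main obstacle is bookkeeping the exponents so that the three sources of multiplicative loss — the two roundings at the heavy path, the $2^{1/b}$ slack from $\smap(u_i) \leq 2^{1/b}\sma(u_i)$, and the harmonic $\log$-factor — together exactly cancel the savings from the inductive hypothesis at the right rate. The key identity underlying the whole argument is that each small child costs exactly one multiplicative $(2\log n)$ but also forces the level to drop by at least $c$, so the $(2\log n)^{\lfloor L/c\rfloor}$ exponent correctly accounts for the total number of small steps possible along any root-to-leaf path; ensuring $c \geq 1$ and $n$ large enough is what makes the $2^{(1+2L-2c)/b}$ factor from the small-children term absorb into $2^{2(L-1)/b}$.
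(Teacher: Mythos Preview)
Your proof is correct and follows essentially the same inductive argument as the paper's, the only cosmetic difference being that you induct on $\level(u_1)$ whereas the paper inducts on light depth. One small arithmetic slip: the inequality $(1+2L-2c)/b \leq 2(L-1)/b$ actually requires $c \geq 3/2$, not $c \geq 1$, but this is harmless since in context $c = \omega(\log\log n)$.
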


\begin{proof}
Observe that the above expression can be upper bounded as follows:
\begin{align*}\nodespan(u_1) &\leq |T_{u_1}| 2^{2\level(u_1)/b} (2\log n)^{\lfloor \level(u_1)/c \rfloor} \\
& \leq n 2^{2\log n / b} (2\log n)^{\log n / c} 
 = n 2^{2\log n / b + (1 + \log\log n) \log n / c}. \end{align*} 
As long as $b=\omega(1)$ and $c=\omega(\log\log n)$, for $n$ large enough this is at most $n^{1.5}$.
We will use this crude bound on $\nodespan(u)$ to simplify some of the calculations.

Claim~\ref{prop:size0} is proved by induction on the light depth of a node.
The induction basis is a heavy path with no light children hanging off it, for which we have
just $\nodespan(u_1) = 1$ if $|T_{u_1}|=1$ and $\nodespan(u_1) \leq |T_{u_1}|2^{1/b}$ otherwise.
Now let us consider a heavy path $P=u_{1}-u_{2}-\ldots -u_{p}$
and assume that the inequality holds for the root $v_{i,j}$ of every subtree hanging off the path:
\[ \nodespan(v_{i,j}) \leq |T_{v_{i,j}}| 2^{2\level(v_{i,j})/b} (2\log n)^{\lfloor \level(v_{i,j})/c \rfloor} \]
Then, replacing the intervals of all small children of $u_{i}$ with a single interval
increases their total length by a factor of $2^{1/b}(\log{\sma(u_{i})}+1)$, where for $n$ large enough
$\sma(u_{i}) \leq n^{1.5}$ and $\log{n^{1.5}+1} \leq 2\log{n}$, so the increase is by at most a factor of $2^{1/b}2\log{n}$.
If $v_{i,j}$ is a small child, then $\lfloor \level(v_{i,j})/c\rfloor \leq \lfloor\level(u_{1})/c\rfloor -1$, so we have
\begin{linenomath}\begin{align*}
\smap(u_{i})\log{\smap(u_{i})} &\leq 2^{1/b}2\log{n} \sum_{j: v_{i,j} \textrm{ is small}} \nodespan(v_{i,j}) \\
& \leq (2\log n)^{\lfloor \level(u_{1})/c \rfloor} \sum_{j: v_{i,j} \textrm{ is small}} |T_{v_{i,j}}| 2^{(2\level(v_{i,j})+1)/b} \\
& \leq 2^{(2\level(u_1)-1)/b}(2\log n)^{\lfloor \level(u_{1})/c \rfloor} \sum_{j: v_{i,j} \textrm{ is small}} |T_{v_{i,j}}|
\end{align*}\end{linenomath}
For any big child $v_{i,j}$, we have $\nodespan(v_{i,j}) \leq 2^{(2\level(u_1)-1)/b} (2\log n)^{\lfloor \level(u_{1})/c \rfloor}|T_{v_{i,j}}|$.
After that, $\bound$ rounding at the end of the shifting procedure adds another factor of $2^{1/b}$, and at this moment:
\begin{linenomath}\begin{align*}
\nodespan(u_{1}) &\leq 2^{(2\level(u_{1})-1) / b} (2\log n)^{\lfloor \level(u_{1})/c\rfloor}\sum_{i}^{p} (1+ \sum_{j}|T_{v_{i,j}}|) \\
& = |T_{u_{1}}| 2^{(2\level(u_{1})-1) / b} (2\log n)^{\lfloor \level(u_{1})/c\rfloor}
\end{align*}\end{linenomath}
But we might increase $\bound(u_{1})$ by another factor of $2^{1/b}$, so indeed finally we have:
\[\nodespan(u_{1}) \leq |T_{u_{1}}| 2^{2\level(u_{1})/b} (2\log n)^{\lfloor \level(u_{1}) /c\rfloor} \qedhere \] 
\end{proof}

Thus:
\begin{linenomath}\begin{align*}
\nodespan(\treeroot(T)) &\leq n 2^{2\log n /b} (2\log n)^{\log n/c} 
 \leq 2^{\log n + \Oh(\log n / b + \log \log n \cdot \log n / c) }
\end{align*}\end{linenomath}
and storing the final $\start(u)$ for any $u\in T$ takes $\log n + \Oh(\log n / b + \log\log n \cdot \log n / c)$ bits,
while storing $\bound(u)$ takes only $\Oh(\log (b\cdot\log n))$ bits, under earlier assumptions on $b$ and $c$.
Furthermore, now by inspecting the proof we obtain that, for every head of a heavy path $u_1$,
$\bound(u_1)$ value can be always modified so that
$\nodespan(u_1) / ( |T_{u_1}| 2^{2\level(u_1) /b} (2\log n)^{\lfloor \level(u_1)/c \rfloor} ) \in (2^{-1/b},1]$.

\paragraph{Encoder.}
We are now ready to present the details of our routing scheme.
Ports are numbered just according to the non-increasing size of subtrees, with port 1 leading to the heavy child.
$\start(u)$ and $\bound(u)$ are stored inside $\ell(u)$.
Additionally, every node $u\in T$ stores information about its big children that we call its routing table.
For any big child $v_{j}$, we have that $\level(v_{j})\in (\level(u)-c,\level(u))$.
We also made sure that $\nodespan(v_{j})=2^{t/b}$ for some $t$, as $v_j$ is a head of its heavy path.
Finally, $\nodespan(v_{j}) / ( |T_{v_{j}}| 2^{2\level(v_{j})/b} (2\log n)^{\lfloor \level(v_{j})/c \rfloor} ) \in (2^{-1/b},1]$.
Thus, the number of possible distinct values of $\nodespan(v_{j})$
is not larger than $b\cdot \log(2^{c+1}2^{2c/b}\log{n})$, which is $\Oh(b \cdot c)$ assuming $c = \Omega(\log{\log{n}})$.
For each of these $\Oh(b\cdot c)$ distinct values, we store how many big children $v_{j}$ of $u$
have such $\nodespan(v_{j})$. This number is at most $2^{c}$, so all these values take
$\Oh(b\cdot c^{2})$ bits in total.

\paragraph{Final structure of a label.}
Label $\nodelabel(u)$ of node $u$ is composed of several parts:
\begin{enumerate}
\item $\start(u)$,
\item $\bound(u)=\lfloor 2^{t/b}\rfloor$ stored as $t$ in binary,
\item $\smap(u)$,
\item routing table $\rt(u)$ storing the number of big children of every possible size,
\item $\level(u)$.
\end{enumerate}
\noindent
Constants $b$ and $c$ will be computable from the value of $\lceil \log{n} \rceil$.
All labels are distinct because $\start$ values are distinct.

\paragraph{Decoder.}
As for decoding, with $\nodelabel(u)$ and $\nodelabel(w)$ we first check whether we should choose port 0
to the parent of $u$ (when $\start(w) \not \in [\start(u),\start(u)+\bound(u))$).
In the other case, $u$ is an ancestor of $w$, and we need to route down.
Entries of the routing table storing the number of big children of given size are checked one by one,
while maintaining the sum of the values of $\nodespan$ for the already considered children,
until this sum is at least $\start(w)-\start(u)$, indicating the correct port number.
If this does not happen, known value of $\smap(u)\log{\smap(u)}$ allows us to check whether $w$ is in the subtree of some small child.
If so, the correct port can be computed from a harmonic sequence.
The last remaining option is port $1$ leading to the heavy child, which is chosen when $\start(w)-\start(u)$ is greater than the
sum of spans of all big children and the interval of light children.

\paragraph{Length of a label.}
Now we can analyse the total length of a label:
\begin{align*}
 & \log n + \Oh(\log n / b + \log\log n \cdot \log n / c)
+ \Oh(\log (b\cdot\log n)) + \Oh(b\cdot c^{2}) \\
&=  \log n + \Oh(\log n / b + b\cdot c^{2} + \log\log n \cdot \log n / c)
\end{align*}
We minimise the above expression by setting $b=(\log^{1/4}n)/\sqrt{\log{\log{n}}}$ and $c=(\log^{1/4}n)\sqrt{\log{\log{n}}}$.
This results in labels of total length $\log n + \Oh((\log^{3/4}n)\sqrt{\log{\log{n}}})$.
The obtained labeling scheme is canonical, as the encoder assigns port 1 to the heavy child even though it is
processed after all the other children.

\section{Intermediate scheme with double rounding}
\label{sec:improved2}

In this section we will prove the following theorem:

\begin{theorem}
There exists a canonical labeling scheme for routing in trees on $n$ nodes with labels of length
$\log{n} + \Oh(\sqrt{\log{n}\log{\log{n}}})$.
\end{theorem}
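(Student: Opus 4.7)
\begin{proofs}
The plan is to strengthen the preliminary scheme of Section~\ref{sec:preliminary} by replacing its $\Oh(b\cdot c^2)$-bit routing table with a more compact encoding of the sizes of the subtrees hanging off a heavy path, obtained by two successive rounding steps applied to the $\nodespan$ values of the light children.

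The first step is an \emph{individual} rounding. For each light child $v$ of $u$, I would round $\nodespan(v)$ to $\lfloor 2^{t/b_\ell}\rfloor$ with a precision $b_\ell$ depending on $\ell=\level(v)$, rounding subtrees of smaller levels more aggressively. The intuition is that small subtrees can absorb a larger multiplicative error because their contribution to $\nodespan(u)$ is itself small. The second step is a \emph{group} rounding: I would partition the light children of $u$ into groups according to their post-rounding size class and, within each group, round every $\nodespan$ up to the maximum inside the group. After this the routing table of $u$ needs only to store, for each nonempty group, the common rounded $\nodespan$ (encoded compactly since it is a power of $2^{1/b_\ell}$) and the number of children; by a careful choice of the group widths, the total number of groups per node will be small.

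The port assignment remains canonical: children are sorted by size with ties broken arbitrarily, and each receives a port equal to its position in that order (with port~$1$ for the heavy child). On a query $(u,w)$ with $\start(w)\in\interval(u)\setminus\{\start(u)\}$, the decoder enumerates the groups of $u$ in non-increasing order of $\nodespan$; for a group of $g$ children of common rounded size $s$, it either divides the remaining difference $\start(w)-\start(u)$ minus the previous cumulative sum by $s$ to recover the correct port inside that group, or moves on to the next group. When no group matches, the answer is port~$1$, corresponding to the heavy child.

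The key technical step is an analogue of Claim~\ref{prop:size0} bounding $\nodespan(\treeroot(T))$: I would prove by induction on light depth that along every root-to-leaf path the accumulated multiplicative blowup from the per-level individual roundings, the group roundings, and the $\bound$ rounding inside the shifting procedure is at most $2^{\Oh(\sqrt{\log n \log\log n})}$. The main obstacle is balancing these three sources of error: the $b_\ell$ must be small enough at low levels to limit the number of distinct sizes that appear (and hence keep the routing tables short), but not so small that the cumulative blowup over up to $\log n$ light edges is excessive. Picking the per-level precisions and the group widths so that all three error sources contribute equally at the level $\Theta(\sqrt{\log n \log\log n})$ yields both a $\start(u)$ fitting in $\log n + \Oh(\sqrt{\log n \log\log n})$ bits and a routing table of the same order, from which the theorem follows.
\end{proofs}
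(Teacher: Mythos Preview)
Your high-level plan---two successive roundings of the light children's segment sizes, leading to a compact routing table---matches the paper's strategy, but your description of the \emph{group} rounding contains a real gap. You propose to ``partition the light children of $u$ into groups according to their post-rounding size class'' and then to store, for each group, ``the common rounded $\nodespan$ \ldots and the number of children''. The problem is that the number of children sharing a given rounded size can be as large as $\Omega(n)$: a star has $n-1$ light children all on the same level. Storing these counts therefore costs $\Theta(\log n)$ bits per group, and with $\Theta(b\log\log n)$ groups this alone already blows the budget. Indeed, the paper singles out exactly this obstacle (``there can be as many as $\Omega(n)$ children with a given rounded segment length, which prohibits us from just storing their cardinality directly'').

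The paper's fix is the missing idea in your sketch: the groups are \emph{not} defined by size class but by \emph{rank} in the sorted order of children. The $k$-th pregroup consists of $2^k$ consecutive children in non-increasing size order, and these pregroups are subdivided or merged just like the size-based preclasses, so that the group sizes depend only on $b$ (and not on the tree). Hence the decoder already knows how many children sit in each group and nothing about counts needs to be stored. After rounding every child in a positional group up to the largest $\seglenp$ in that group, the routing table is simply a non-increasing sequence of at most $z=O(b\log\log n)$ class indices in $[0,z]$, encodable in $2z$ bits (Proposition~\ref{fact:plot}). The amortized analysis showing this second rounding inflates segments by at most $2^{O(k/b)}$ for a child $k$ levels below $u$ (Lemma~\ref{lem:potentials}) is the other nontrivial ingredient your sketch waves past. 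A smaller point: the paper uses a single global $b$ rather than level-dependent $b_\ell$; the level dependence of the rounding factor emerges structurally from how classes subdivide or merge preclasses (Lemma~\ref{lem:roundclasses}), not from varying the precision parameter.
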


The method will be further refined in the next section to arrive at our final result.
From now on we slightly change the high-level way in which assignment of labels works,
splitting it into two phases for the encoder.
The main first phase works in a bottom-up manner and provides the necessary statistics for creating labels.
Every node in a tree is assigned the size of its \emph{reserved segment}, and the \emph{routing tables} are created.
Then the second top-down phase deals with assigning proper values of $\start(u)$ and $\bound(u)$,
with the sizes of the reserved segments guaranteeing such assignment to be possible.
For this section, we could do with just one phase, but this distinction will be useful later.

We use a positive integer parameter $b$ to be fixed later.
The main concept is that, like before, we assign to nodes intervals of numbers from range
$[1,n2^{\Oh(\log{n}/b)}]$, so that the IDs can be stored on $\log{n}+\Oh(\log{n}/b)$ bits,
and we allow $\nodespan$ to be rounded up by a factor of $2^{\Oh(1)/b}$ for every light edge.
We will use about $b\log{\log{n}}$ bits in each node to store a routing table.
Moreover, we denote by $\seglen(u_1)$ the length of the reserved segment for $u_1$ being the head of its heavy path.
Our intention is that $\seglen(u_1)$ should be guaranteed to be at least as big as $\nodespan(u_1)$,
so given any interval of length at least $\seglen(u_1)$ we should be able to assign unique 
$\start$ and proper $\bound$ values in the whole subtree $T_{u_1}$.
To facilitate efficient encoding, $\seglen(u_1)$ will be rounded up two times when processing the parent of $u_1$.
Firstly, there is rounding to $\seglenp(u_1)$, when we ensure there is only a small set of possible values for sizes of the children's segments.
Secondly, rounding to $\seglenb(u_1)$ happens, when we gather children in groups and make the segment size of everyone in a group equal.
$\rt(u)$ will denote a binary string representing the routing table for $u$ used to route down to the children of $u$.

\subsection{Encoder}
\paragraph{First phase --- segments assignment.}
First phase of the encoder generates rounded segment sizes and routing tables.
The respective values are assigned in a bottom-up manner, guided by the heavy path decomposition.
Consider a heavy path $P=u_{1}-u_{2}-\ldots-u_{p}$,
where $\head(P)=u_{1}$, and assume that we have already visited all nodes in the subtrees hanging off $P$.
By $\lightweight(u)$ we denote the \emph{light weight} of a node $u$,
that is the sum of subtrees' sizes of its light children.
To make calculations less technical and reduce the number of cases, in the following we will 
assume that for every node $u$ $\lightweight(u) \geq c'$, for some constant $c'$.
This is obviously not true for the input tree, but as explained later
it can be achieved for example by adding $c'$ leaf children to every node of the input tree
and making some simple adjustments to the encoder to handle these leaves.
For now, we assume that $\lightweight(u) \geq 4$, so that $\lfloor \log{\log{\lightweight(u)}} \rfloor$ is non-zero.
With a scheme from this section we are able to achieve the following:

\begin{claim}
Under assumption on $\lightweight$ values made above and for $b \geq 6$, using $\seglen(u_{1}) = |T_{u_{1}}|2^{12\level(u_{1})/b}$
for every $u_1$ being the head of a heavy path is sufficient for the encoder to be able to store $\rt(u)$
on only $\Oh(b\log{\log{\lightweight(u)}})$ bits for every node $u$.
\label{prop:size1}
\end{claim}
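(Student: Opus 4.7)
I would proceed by strong induction on the light depth of the heavy path head. In the base case ($u_1$ a leaf), $\seglen(u_1)=1$ trivially suffices and $\rt(u_1)$ is empty. For the inductive step, fix a heavy path $P = u_1 - u_2 - \cdots - u_p$; by hypothesis, for every light child $v_{i,j}$ of each $u_i$ (being a head of a heavy path of strictly smaller light depth) we have $\seglen(v_{i,j}) = |T_{v_{i,j}}| 2^{12 \level(v_{i,j})/b}$, and the routing tables already stored inside $T_{v_{i,j}}$ meet the claimed bit budget.

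At each $u_i$ I then perform the two promised rounding stages on the segment sizes of its light children. Stage~1 rounds each $\seglen(v_{i,j})$ up to $\seglenp(v_{i,j}) = \lfloor 2^{t/b}\rfloor$ for the smallest fitting $t$, incurring a factor of at most $2^{1/b}$ per child and restricting the possible values to the structured grid of powers of $2^{1/b}$. Stage~2 partitions the light children of $u_i$ into groups, rounding smaller subtrees more aggressively, and sets $\seglenb(v_{i,j})$ to the largest $\seglenp$ in the group of $v_{i,j}$. The partition should be designed to yield $\Oh(\log\log \lightweight(u_i))$ groups (to meet the routing table budget) while limiting the per-child blowup to $2^{\Oh(1)/b}$; a natural candidate is to group children by a double-logarithmic function of their subtree sizes, which is why the claim assumes $\lightweight(u)\geq 4$. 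After both stages, I apply the shifting procedure of Section~\ref{sec:preliminary} along $P$ and round $\bound(u_i)$ and $\seglen(u_1)$ up to powers of $2^{1/b}$.

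Plugging everything in, $\nodespan(u_1) \leq 2^{\Oh(1)/b} \sum_{i=1}^{p}\bigl(1 + \sum_j \seglenb(v_{i,j})\bigr)$, and using the inductive bound on each $\seglenb$, the identity $\sum_{i,j} |T_{v_{i,j}}| + p = |T_{u_1}|$, and $\level(v_{i,j}) \leq \level(u_1) - 1$, this simplifies to at most $|T_{u_1}| 2^{12 \level(u_1)/b}$. The constant $12$ is chosen to absorb the constantly many $2^{\Oh(1)/b}$ factors coming from Stages~1 and 2, the shifting procedure, and the bound rounding; the hypothesis $b\geq 6$ is precisely what makes each such absorption safe. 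For the routing table, $\rt(u_i)$ stores, per group, the common $\seglenb$ value (as its exponent $t$, using $\Oh(\log\log n)$ bits) together with the number of children in the group; with $\Oh(\log\log \lightweight(u_i))$ groups and compact binary encodings these total $\Oh(b \log\log \lightweight(u_i))$ bits, and the decoder scans the groups in order, subtracting cumulative segment lengths from $\start(w)-\start(u_i)$ until the right group is identified, then dividing by the common $\seglenb$ within that group to pinpoint the child. The main technical obstacle is calibrating Stage~2: coarse enough to keep the group count at $\Oh(\log\log \lightweight(u_i))$, yet fine enough that the per-child blowup stays $2^{\Oh(1)/b}$, so that the cumulative factor along the heavy path does not exceed $2^{12 \level(u_1)/b}$. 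This balance is the technical heart of the section.
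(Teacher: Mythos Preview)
Your induction scaffold is right, but the calibration you sketch for the two rounding stages does not work, and the paper's actual mechanism is structurally different. Your Stage~1 (uniform rounding to powers of $2^{1/b}$) leaves $\Theta(b\log\lightweight(u_i))$ possible $\seglenp$ values, not $\Theta(b\log\log\lightweight(u_i))$. Your Stage~2 target of $\Oh(\log\log\lightweight(u_i))$ groups with per-child blowup $2^{\Oh(1)/b}$ is unachievable: take a node whose light children have sizes $1,2,4,\ldots,2^{\log\lightweight}$; after your Stage~1 the $\seglenp$ values are essentially these sizes, and any partition into $\Oh(\log\log\lightweight)$ groups forces some group to span a size ratio of $2^{\Omega(\log\lightweight/\log\log\lightweight)}$, so even the \emph{total} blowup is polynomial in $\lightweight$, not $2^{\Oh(1)/b}$. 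Your routing-table arithmetic is also off: $\Oh(\log\log\lightweight)$ groups times $\Oh(\log\log n)$ bits per exponent is $\Oh((\log\log n)^2)$, which equals $\Oh(b\log\log\lightweight)$ only for one particular $b$.

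What the paper actually does is allow the rounding factor to \emph{grow with the level gap}: a light child $k$ levels below $l=\lfloor\log\lightweight(u_i)\rfloor$ is rounded by up to $2^{3k/b}$ in Stage~1 (Lemma~\ref{lem:roundclasses}) and a further $2^{8k/b}$ in Stage~2 (Lemma~\ref{lem:potentials}, via a potential argument with groups defined by \emph{rank}, not by size). This is harmless because the level drops by $k$, so the product along any root-to-leaf path telescopes to $2^{\Oh(\log n)/b}$; the constant $12$ absorbs $3+8+1$. Crucially, both the number of classes and the number of groups are $\Theta(b\log\log\lightweight(u_i))$, not $\Theta(\log\log\lightweight(u_i))$, and the routing table is encoded not by storing an exponent per group but as a non-increasing sequence of at most $z$ integers in $[0,z]$ with $z=\Oh(b\log\log\lightweight(u_i))$, taking $2z$ bits (Proposition~\ref{fact:plot}). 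The ``technical heart'' you correctly identify is resolved by these three coupled ideas, none of which appears in your proposal.
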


This will be proved inductively for the procedure described below.
The procedure traverses the path from its head, one node at a time.
Let $v_{i,1},v_{i,2},\ldots,v_{i,\degree(u_i)-1}$ be the light children of $u_{i}$,
sorted in the non-increasing order by the values of $\seglen(v_{i,j})$
(note that this is also non-increasing by the values of $|T_{v_{i,j}}|$).
In the following, we will often refer to light children as just children, since the heavy child $u_{i+1}$ will
be handled separately later and it does not introduce rounding.
Important observation is that often we can afford more rounding than just by a factor of $2^{\Oh(1)/b}$.
In fact, if $\level(v_{i,j}) = \level(u_{i}) - k$, then rounding by $2^{\Oh(k)/b}$
still guarantees a good final bound, as we allow rounding by $2^{\Oh(1)/b}$ for every skipped level.
We will now use this observation to achieve small number of possible children sizes.

\paragraph{Rounding in classes.}
Let $l = \min(\lfloor \log{\lightweight(u_{i})} \rfloor + 1, \level(u_i))$,
so that $\level$ of any light child of $u_i$ is less than $l$. 
We define preclasses as (possibly empty) sets of children of $u_{i}$ with the same value of $\level$,
with children from the first preclass having $\level$ equal to $l-1$, from the second preclass $\level$ $l - 2$ and so on.
Then classes are defined as follows: for the first $b-1$ preclasses, the $k$-th preclass is evenly divided into $\lceil b/k \rceil$ classes.
So if the $k$-th preclass consists of children with $\level$ equal to $x$, then the sizes of their subtrees are in $[2^{x},2^{x+1})$,
and the first class created by dividing this preclass consists of children with the size of the subtree in $[2^{x},2^{x+k/b})$,
the second class in $[2^{x+k/b},2^{x+2k/b})$, and so on, the last one possibly having smaller interval.
This process of subdividing the first $b-1$ preclasses creates at most $\sum_{j=1}^{b-1} \lceil b/i \rceil \leq b(\ln{b}+1) + b \leq b(\log{b}+2)$ classes.
Preclasses with rank at least $b$ are not divided but merged into classes.
Preclasses with ranks from $b$ to $2b-1$ are just left as $b$ separate classes, then preclasses 
$2b, \ldots, 4b-1$ are merged in pairs into $b$ classes, the next $4b$ preclasses are merged in quadruples to obtain another $b$ classes, and so on.
The last class might be composed of a number of preclasses not being power of two.
It can be seen that at most $b \lceil \log{(\log{\lightweight(u_{i})}/b)} \rceil$ classes are created.
In total we have no more than  $b(\log{b} + \log{(\log{\lightweight(u_{i})}/b)} + 3) \leq 
b(\log{\log{\lightweight(u_{i})}}+3)$ classes.
Pseudocode for this procedure is presented in Algorithm~\ref{alg:classes}. 

\begin{algorithm}[h]
\begin{algorithmic}[1]
  \Function{Construct-classes}{$u_i$, $b$}
  \State \textbf{Input:} node $u_i$ and its light children $v_{i,1},\ldots,v_{i,\degree(u_i)-1}$, parameter $b$.
  \State \textbf{Output:} partition of light children into classes.
  \State
  \State $PC \gets \emptyset$ \Comment{Division into preclasses}
  \State $l \gets \min(\lfloor \log{\lightweight(u_{i})} \rfloor + 1, \level(u_i))$
  
  \For{$k=1 \ldots l$}
    \State $pc_k \gets \{v_{i,j} : \level(v_{i,j}) = l-k\}$ \Comment{Preclass is the set of children with the same $\level$}
    \State $PC \gets PC \cup \{pc_k\}$
  \EndFor
   
  \State
  \State $C \gets \emptyset$ \Comment{Division into classes}
  
  \For{$k=1 \ldots b-1$} \Comment{First $b-1$ preclasses are subdivided}
  	\For{$p=1 \ldots b/k$}
  		\State $cl \gets \{v_{i,j} : |T_{v_{i,j}}| \in [2^{l-k+(p-1)k/b},
  		2^{l-k+pk/b})\}$
  		\State Set $2^{l-k+pk/b}2^{12(l-k)/b}$ as the boundary value for class $cl$
  		\State $C \gets C \cup \{cl\}$
  	\EndFor
  \EndFor
  
  \State $j \gets b, z \gets 1$
  \State $cap \gets l-b+1$
  \While{$j \leq l$} \Comment{The remaining preclasses are merged into classes}
  	\For{$k=1 \ldots b$}
  	\State $cl \gets \bigcup\limits_{m=j+(k-1)z}^{\text{min}(j+kz-1,l)} pc_m$
  	\State Set $2^{cap(1+12/b)}$ as the boundary value for class $cl$
  	\State $C \gets C \cup \{cl\}$
  	\State $cap \gets cap-z$
  	\EndFor
  	\State $j \gets 2j, z \gets 2z$
  \EndWhile
  \State \Return{C}
  \EndFunction
\end{algorithmic}
\caption{Dividing light children of a node into classes according to their sizes.}
\label{alg:classes}
\end{algorithm}

A \emph{boundary value} of a class is an upper bound on the possible $\seglen$ value in this class.
If a class consists of every light child $v_{i,j}$ with $|T_{v_{i,j}}| \in [x_1,x_2+1)$,
then the boundary value for this class is $x_2 2^{12 \lfloor \log{x_2} \rfloor /b}$,
as $\level(u) = \lfloor \log{|T_u|} \rfloor$ and $\seglen(u) = |T_{u}|2^{12\level(u)/b}$.
With division into classes as described, we achieve the promised property
that rounding factor depends heavily on the level of a child:

\begin{lemma}
Assuming $b \geq 6$, if a node $v_{i,j}$ with $\level(v_{i,j}) = l - k$ is part of class $C$,
then the boundary value for this class is no larger than $\seglen(v_{i,j})2^{3k/b}$.
\label{lem:roundclasses}
\end{lemma}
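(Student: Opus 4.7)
The plan is to prove the lemma by case analysis on where the node $v_{i,j}$ sits within the partition produced by Algorithm~\ref{alg:classes}: there are two regimes to handle separately, namely the subdivided regime (preclasses $1,\ldots,b-1$, each split by subtree size) and the merged regime (preclasses $b,\ldots,l$, grouped into geometrically growing bundles). In both regimes the goal is the same: lower-bound $\seglen(v_{i,j})$ in terms of $l$ and $k$, then upper-bound the boundary value the algorithm assigns to the class containing $v_{i,j}$, and finally compare. The key underlying identity is that for any node $v$ with $\level(v) = l - k$ we have $|T_v| \in [2^{l-k},2^{l-k+1})$, so $\seglen(v) = |T_v|\,2^{12(l-k)/b} \geq 2^{(l-k)(1+12/b)}$.

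For the subdivided regime I would observe directly from the algorithm that the $p$-th class cut out of preclass $k$ contains exactly those children with $|T_v| \in [2^{l-k+(p-1)k/b},2^{l-k+pk/b})$ and has boundary value $2^{l-k+pk/b}\,2^{12(l-k)/b}$. Since every member of this class has $\level = l-k$, its $\seglen$ is at least $2^{l-k+(p-1)k/b}\,2^{12(l-k)/b}$, giving a boundary-to-$\seglen$ ratio of at most $2^{k/b}$, which is comfortably below $2^{3k/b}$.

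For the merged regime I would first pin down $\mathsf{cap}$ as a function of the outer iteration index $m$ (where $j = 2^{m-1}b$ and $z = 2^{m-1}$) and the inner index $k' \in [1,b]$. A short telescoping argument using $\sum_{m'<m} b\cdot 2^{m'-1} = 2^{m-1}b - b$ shows that at the moment the $(m,k')$-class is created, $\mathsf{cap} = l - 2^{m-1}(b+k'-1) + 1$. The class corresponds to preclass indices $k'' \in [2^{m-1}(b+k'-1),\,2^{m-1}(b+k')-1]$, so the worst case for a member $v_{i,j}$ (with $k_{\text{lemma}} = k''$) gives a ratio of at most
\[
2^{(k'' - 2^{m-1}(b+k'-1) + 1)(1+12/b)} \;\leq\; 2^{2^{m-1}(1+12/b)}.
\]
The target is $2^{3k''/b} \geq 2^{3\cdot 2^{m-1}}$ because $k'' \geq 2^{m-1}b$, so the inequality we need reduces to $1 + 12/b \leq 3$, which is precisely the assumption $b \geq 6$.

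The main obstacle is the bookkeeping in the merged regime: one has to keep track of how $\mathsf{cap}$ decreases across both loops and correctly identify the range of preclass indices inside a single class, since the last class of an outer iteration may straddle a doubling of $z$. Once this indexing is nailed down the remaining arithmetic is routine, and the two regimes combine to yield the stated bound.
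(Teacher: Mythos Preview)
Your proposal is correct and follows essentially the same two-case approach as the paper: the subdivided regime gives ratio at most $2^{k/b}$ directly, and in the merged regime both you and the paper reduce the bound to the inequality $1+12/b\le 3$, i.e.\ $b\ge 6$. The only difference is presentational---you trace the algorithm's $\mathsf{cap}$ variable explicitly to get $2^{2^{m-1}(1+12/b)}$, whereas the paper argues more abstractly that a class merging $r$ preclasses (your $r=2^{m-1}$) has boundary value at most $\seglen(v_{i,j})\,2^{r}2^{12r/b}\le \seglen(v_{i,j})\,2^{3r}$ and then uses $k\ge rb$.
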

\begin{proof}
If $C$ is one of the classes subdividing some preclass of rank at most $b-1$,
then we have that the boundary value for this class is actually no larger than $\seglen(v_{i,j})2^{k/b}$,
as interval of $C$ spans inside just a single level subdivided into $b/k$ classes.
Now assume that $C$ is a class constructed by merging $r$ preclasses of rank between $rb$ and $2rb-1$.
Then $\level$ of two children in $C$ must differ by less than $r$, which means
the sizes of their subtrees differ by less than a factor of $2^r$.
Thus, the boundary value for $C$ is less than $\seglen(v_{i,j})2^r2^{12r/b}$.
Since $b \geq 6$, this is at most $\seglen(v_{i,j})2^{3r}$.
As $C$ is constructed by merging preclasses of rank at least $rb$,
then $\level(v_{i,j}) \leq l - rb$, so we have that $k \geq rb$ and
finally the boundary value for $C$ is at most $\seglen(v_{i,j})2^{3k/b}$.
\end{proof}

For every class $C$, the encoder rounds the segment length of every child in $C$ up to the boundary value for $C$.
This way, rounding in classes will increase the size of a reserved segment of $v_{i,j}$ to $\seglenp(v_{i,j})$, with
\[ \seglenp(v_{i,j}) \leq \seglen(v_{i,j})2^{3(l - \level(v_{i,j}))/b} \leq \seglen(v_{i,j})2^{3(\level(u_i) - \level(v_{i,j}))/b} \]
and the last inequality being sufficient for this section.
By rounding in classes, we achieved a small number of different possible segment lengths without too much rounding.

Note that the set of possible values of $\seglenp(v_{i,j})$ (boundary values for classes) 
depends only on $b$ and $l$, not the actual children sizes.
This is crucial, as it allows for compact encoding of the routing table if we know these two values.
Still, there can be as many as $\Omega(n)$ children with a given rounded segment length,
which prohibits us from just storing their cardinality directly in $\rt(u_{i})$.
This problem can be once again solved by grouping and rounding up.

\paragraph{Rounding in groups.}
After rounding in classes, we have at most $z = b(\log{\log{\lightweight(u_{i})}}+3)$ different possible sizes of segments ($\seglenp$ values).
Our second goal is to divide children into at most $z$ groups of fixed sizes and make their segment lengths equal.
Then we will be able to use the following:

\begin{proposition}
\label{fact:plot}
A non-increasing sequence of at most $z$ integers from $[0 , z]$ can be encoded on $2z$ bits.
\end{proposition}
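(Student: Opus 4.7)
The plan is to exhibit an explicit bijection between such non-increasing sequences and binary strings of length $2z$, so that every sequence gets a distinct $2z$-bit code. The natural vehicle is a lattice-path / Young-diagram representation.

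First I would pad the sequence to length exactly $z$ by appending zeros at the end; this does not change the original information since we know $z$. Call the padded sequence $a_1\geq a_2\geq\ldots\geq a_z$ with all $a_i\in[0,z]$. Such a sequence is in bijection with a Young diagram that fits inside a $z\times z$ box (the $i$-th row has $a_i$ cells). The boundary of the diagram, traced from the top-left corner of the box to the bottom-right corner, is a monotone lattice path consisting of exactly $z$ ``right'' steps and $z$ ``down'' steps, hence $2z$ steps in total.

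The encoding is then just the binary string obtained by writing $0$ for each right step and $1$ for each down step (or vice versa) in the order they appear along the path. This string has length exactly $2z$, and different non-increasing sequences correspond to different Young diagrams, hence to different lattice paths, hence to different bit strings. Decoding simply retraces the path bit by bit, reconstructing the row lengths $a_1,\ldots,a_z$, after which trailing zeros can be dropped if one wants to recover the original (possibly shorter) sequence.

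There is no real obstacle here: the only thing to verify is the counting sanity check that the number of non-increasing sequences in question is at most $\binom{2z}{z}\leq 2^{2z}$, which matches the $2z$-bit budget, and the bijection above realizes this bound constructively. An equivalent way to see the same bound, if one prefers a purely combinatorial phrasing, is to observe that the consecutive differences $z-a_1,\, a_1-a_2,\,\ldots,\, a_{z-1}-a_z,\, a_z$ form a tuple of $z+1$ non-negative integers summing to $z$, and such tuples are encoded in $2z$ bits by the standard stars-and-bars scheme.
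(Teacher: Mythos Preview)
Your proof is correct and is essentially the same encoding as the paper's, just phrased in combinatorial rather than operational language. The paper maintains a counter initialised to $z$ that is decremented on each $0$-bit and whose current value is emitted on each $1$-bit; this is exactly your lattice-path encoding (down steps versus right steps along the boundary of the Young diagram), so the two arguments coincide.
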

\begin{proof}
The decoder starts with a counter set to $z$.
Then a sequence is stored as a bit string in which 0 tells the decoder to decrease the counter by one,
and 1 tells the decoder that the next element of the sequence is equal to the current counter.
\end{proof}

Elements of the sequence will correspond to consecutive groups of children, and their values to
the set of boundary values defined by the classes.
Thus, these values correspond to the rounded sizes of reserved segments in groups.
To achieve the stated goal, we define pregroups and groups, similarly to preclasses and classes.
Children of $u_{i}$ are divided into at most $l$ pregroups,
the $k$-th one containing $2^k$ consecutive children, sorted in the non-increasing order by the subtree size, or equivalently $\seglenp$.
The last pregroup is padded with dummy children of size 0 if needed, as we cannot afford storing the exact degree of a node.
Then, as before, groups are defined in the following way:
for the $b-1$ first pregroups, the $k$-th pregroup is divided into $\lceil b/k \rceil$ groups with roughly equal number of children
(some are possibly empty).
Pregroups with rank at least $b$ are not divided but merged into groups.
Pregroups with ranks from $b$ to $2b-1$ are just left as $b$ separate groups, then pregroups 
$2b \ldots 4b-1$ are merged in pairs into $b$ groups, the next $4b$ pregroups are merged in quadruples
for another $b$ groups, and so on.
Total number of created groups is not greater than the number of classes.
The encoder, for every group, rounds the reserved segment length of every child in this group up to the length of the largest one.
Pseudocode for rounding in groups is presented in Algorithm~\ref{alg:groups},
and Figure~\ref{fig:plot1} depicts rounding in both classes and groups.

\begin{algorithm}
\begin{algorithmic}[1]
  \Function{Construct-groups}{$u_i$, $b$}
  \State \textbf{Input:} node $u_i$ and its light children $v_{i,1},\ldots,v_{i,\degree(u_i)-1}$, parameter $b$.
  \State \textbf{Requirement:} light children in the non-increasing order by their size.
  \State \textbf{Output:} partition of light children into groups.
  \State
  \State $PG \gets \emptyset$ \Comment{Division into pregroups}
  \State $j \gets 1, c \gets 0$
  \While{$j < \degree(u_i)$}
    \State $c \gets c+1$
    \State $pg_{c} \gets \{v_{i,j},\ldots,v_{i,\text{min}(\degree(u_i)-1,2j)}\}$
    \State $PG \gets PG \cup \{pg_c\}$
  	\State $j \gets 2j+1$
  \EndWhile
  \If{$|pg_c|$ is not power of two}
  	\State Add to $pg_c$ sufficiently many dummy children of size 0
  \EndIf
  \State
  
  \State $G \gets \emptyset$ \Comment{Division into groups}
  
  \For{$j=1 \ldots b-1$} \Comment{First $b-1$ pregroups are subdivided}
    \State Order the children in $pg_{j}$ by size
  	\State Divide $pg_j$ into $b/j$ consecutive groups of roughly equal number of children and add them to $G$
  \EndFor
  \State $j \gets b, z \gets 1$
  \While{$j \leq c$} \Comment{The remaining pregroups are merged into groups}
  	\For{$k=1 \ldots b$}
  	\State $g \gets \bigcup\limits_{m=j+(k-1)z}^{\min(j+kz-1,c)} pg_m$
  	\State $G \gets G \cup \{g\}$
  	\EndFor
  	\State $j \gets 2j, z \gets 2z$
  \EndWhile
  \State \Return{G}
  \EndFunction
\end{algorithmic}
\caption{Dividing light children of a node into groups in which the segment lengths are rounded up to the same value.}
\label{alg:groups}
\end{algorithm}

Note that the number of children in any group depends only on $b$, and the number of pregroups
depends only on $\degree(u_i)$, which is crucial for encoding of the routing table.
Only the number of dummy children cannot be reproduced from these values.
Moreover, double rounding still preserves the monotonicity of sizes of the children subtrees, so our routing scheme will be canonical.
Now we need to prove a bound on the total increase in lengths of segments.

\begin{figure}
\begin{center}
  \includegraphics[scale=0.75]{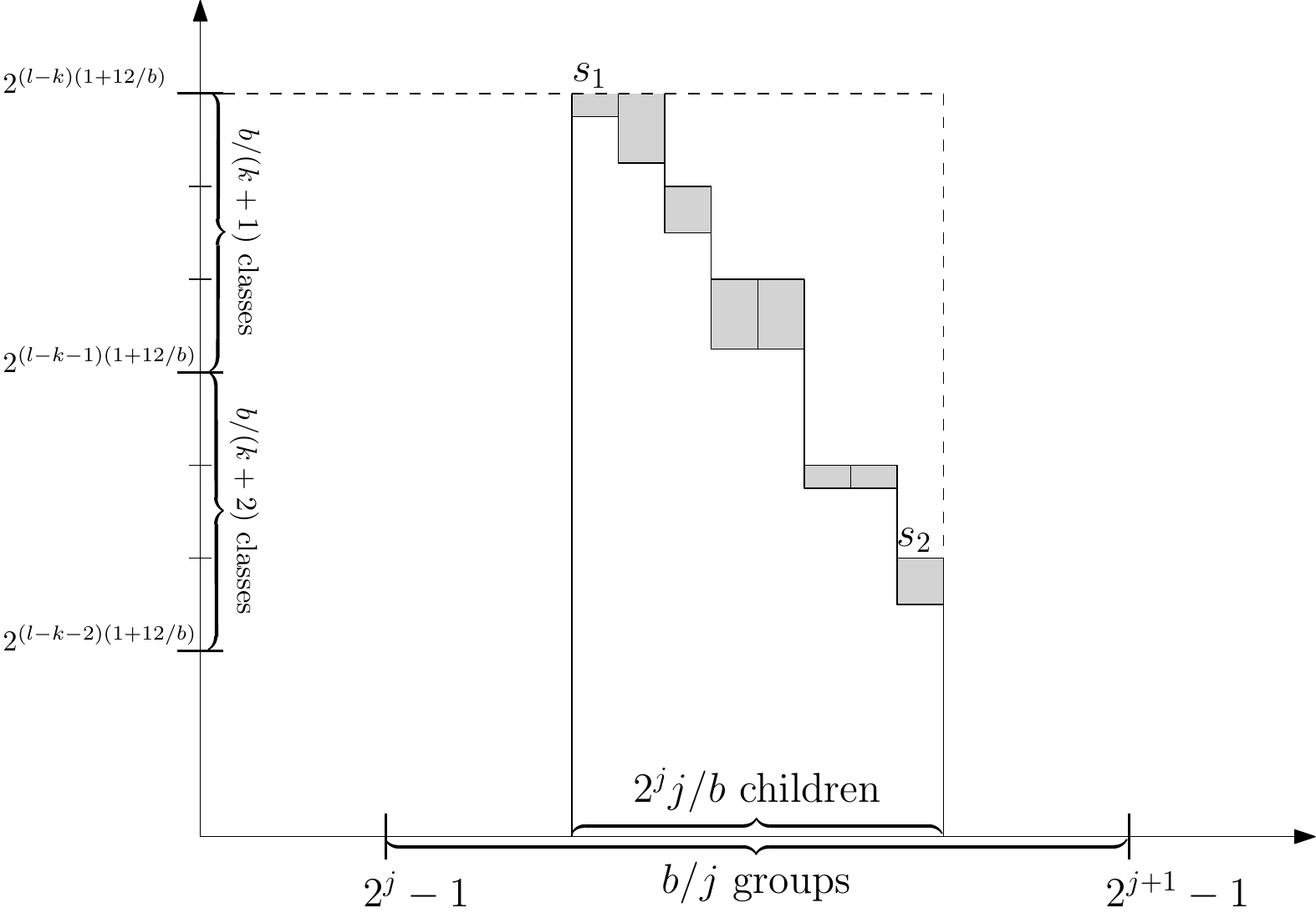}
\end{center}
  \caption{Rounding of one group, with children on the horizontal axis and their segment lengths on the vertical axis, $k < b$.
  Children are rounded up to the boundary value for their class,
  and then the whole group is rounded up to maximum $\seglenp$ value in this group, $s_1$.}
  \label{fig:plot1}
\end{figure}

\begin{lemma}
Let $\seglenp(v_{i,j})$ be the size of the reserved segment of $v_{i,j}$ after rounding in classes
and $l = \min(\lfloor \log{\lightweight(u_{i})} \rfloor + 1, \level(u_i))$.
Then the total size of segments after rounding in groups for light children of $u_{i}$, including dummy ones, is bounded by
$\sum_{j\geq 1} \seglenp(v_{i,j})2^{8(l-\level(v_{i,j}))/b}$,
provided that $b \geq 6$.
\label{lem:potentials}
\end{lemma}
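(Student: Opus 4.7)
Plan. I would show that for each group $G$ formed in the rounding step, the rounded cost $|G|\cdot M(G)$ --- where $M(G)$ denotes the largest $\seglenp$ in $G$ --- is covered by the budgets $\seglenp(v)\cdot 2^{8(l-\level(v))/b}$ of the real children in (or adjacent to) $G$, and then summing over $G$ yields the lemma. Since the double rounding preserves the non-increasing order of subtree sizes, $M(G)=\seglenp(v_{\mathrm{first}(G)})$, i.e.\ the $\seglenp$ of the smallest-index child of $G$.

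The first ingredient is the position-to-level inequality: for the $p$-th real light child (when sorted by subtree size) we have $|T_{v_{i,p}}|\le \lightweight(u_i)/p$ because the top $p$ subtrees all have size at least $|T_{v_{i,p}}|$ and their sizes sum to at most $\lightweight(u_i)$. Combined with $l\le \lfloor\log\lightweight(u_i)\rfloor+1$ this gives $\level(v_{i,p})\le l-\lfloor\log p\rfloor$. In particular, every real child in pregroup $k$ (which sits at position $p\in[2^k-1,2^{k+1}-2]$) satisfies $l-\level\ge k$, so the multiplier of its budget in the right-hand side is at least $2^{8k/b}$.

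Next I would split into two cases matching the two regimes in the group construction. For a subdivision group inside a pregroup $k<b$, the group has at most $\lceil 2^k/\lceil b/k\rceil\rceil=O(k\,2^k/b)$ children, and I charge its cost to the first real child of each constituent subgroup using the $2^{8k/b}$ multiplier just established; when the per-group inequality fails for $k$ close to $b$, I redistribute to all real children of the pregroup and, if necessary, borrow from the preceding pregroup, whose children have $\seglenp$ at least as large by monotonicity and multiplier surplus of order $2^{8(k-1)/b}$. For a merged group combining $2^e$ consecutive pregroups of ranks $[r,r+2^e-1]$ with $r\ge 2^e b$, the group size is $\sum_{i=0}^{2^e-1} 2^{r+i}\le 2^{r+2^e}$; I charge the cost to the first real child of each constituent pregroup, whose multipliers $2^{8(r+i)/b}$ grow geometrically in $i$ and absorb the geometric growth of the pregroup sizes. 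Dummy children, which only occur in the last pregroup and contribute zero to the right-hand side, are accounted for separately: whenever $M(G)>0$ the same group contains a real child, and its $2^{8k/b}$ multiplier slack (with $k$ the relevant pregroup rank) is enough to pay for the rounding of the dummies in the same group, since the ratio of dummies to real children per group is bounded by the pregroup structure and $b\ge 6$.

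The main obstacle will be that a naive per-group charge to $v_{\mathrm{first}(G)}$ alone does not always succeed: for $k$ near $b$ or merged scale $e\ge 1$ the group size may exceed $2^{8(l-\level(v_{\mathrm{first}(G)}))/b}$, so the proof must spread the cost across all real constituents of $G$ and use the monotonicity of $\seglenp$ together with the geometric growth of the multiplier across pregroups within a merged group. The assumption $b\ge 6$ enters here exactly as in Lemma~\ref{lem:roundclasses}: it guarantees that the class-rounding factor $2^{3(l-\level)/b}$ already baked into $\seglenp$ leaves enough room inside the group budget $2^{8(l-\level)/b}$ for the additional factors incurred by merging pregroups.
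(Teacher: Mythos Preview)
Your plan captures the right target and the position-to-level inequality, but there is a genuine gap in the charging argument. The difficulty you yourself flag---that charging a group's cost $|G|\cdot M(G)$ to the budgets of its own members (or just its first child) fails when $|G|$ is large---is real, and ``redistribute to all real children of the pregroup and, if necessary, borrow from the preceding pregroup'' does not resolve it as stated. The preceding pregroup may itself need its budget for its own rounding (and for being borrowed from by several groups of the current pregroup simultaneously), so you must explain why the borrowing never double-charges. No such mechanism appears in your outline; for the merged regime the same issue arises, and your ``charge to the first child of each constituent pregroup'' fails outright for large $b$, since a merged group at scale $2^z$ can have size of order $2^{b\,2^z}$ while the first child's multiplier is only $2^{8\cdot 2^z}$.

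The paper supplies exactly the missing device. It does not charge $|G|\cdot M(G)$; it charges only the \emph{increase} due to rounding, which for a group with largest value $s_1$ and smallest value $s_2$ is at most $(s_1-s_2)\,|G|$. It assigns to each child in pregroup $k$ a potential of $2^{8k/b}-1$ per unit of height of its $\seglenp$-column (so total potential $\seglenp(v)(2^{8k/b}-1)$), and charges the increase of a group inside pregroup $k$ to the unit heights \emph{in the range $[s_2,s_1]$} of the children in pregroup $k-1$. Those children all have height at least $s_1$, there are $2^{k-1}$ of them, and each relevant unit carries potential $2^{8(k-1)/b}-1\ge 4(k-1)/b$, enough against $|G|\le 2^k k/b$. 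The crucial observation---absent from your sketch---is that distinct groups within pregroup $k$ have \emph{disjoint} $[s_2,s_1]$ intervals (they are consecutive in sorted order), so they charge disjoint height ranges of pregroup $k-1$ and never collide; pregroup $k-1$'s own rounding is in turn paid by pregroup $k-2$. For merged groups the same height-stratified trick charges the previous merged group; here $b\ge 6$ enters to show each unit there carries potential at least $2^{2^{z}+1}$, absorbing the at most $2^{2^{z}+1}$-fold blow-up in group size. Summing, total increase $\le$ total potential, and adding back $\sum_v\seglenp(v)$ gives the bound.
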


\begin{proof}
Observe that for child $v_{i,j}$ in the $k$-th pregroup we have $\level(v_{i,j}) \leq l - k$.
We assign potentials to children, child $v_{i,j}$ in the $k$-th pregroup is assigned value
$\seglenp(v_{i,j})(2^{8k/b}-1)$, so if we think about the plot (as in Figure~\ref{fig:plot1})
it corresponds to $(2^{8k/b}-1)$ units of potential assigned to every unit length of the column
representing $\seglenp$ value of a child from the $k$-th pregroup.
Then we claim that the sum of children's sizes increases during the rounding is bounded by the sum of all assigned potentials.

To see this, first consider the case of a group inside the $k$-th pregroup, for $k < b$.
It consists of at most $2^kk/b$ children.
Let $s_1$ be the size of the first (the largest) of them, and $s_2$ the last (the smallest) one.
Then the size of every child is rounded up to $s_1$.
Note that in previous groups the size of every child was at least $s_1$ even before rounding in these groups,
and that children from further groups will not be rounded to more than $s_2$.
Thus for rounding in the considered group we can charge unit lengths from $s_1$ to $s_2$ for children in all prior groups.
Charging only the $(k-1)$-th pregroup will be enough for us.
In the first pregroup there is no rounding at all, provided that $b>1$.
As $1 + x \leq e^{x}$ and thus $2^{2x} - 1 \geq x$,
sum of the potentials of unit lengths from $s_1$ to $s_2$ for elements from the $(k-1)$-th pregroup is equal to
\[ (s_1-s_2)2^{k-1}(2^{8(k-1)/b} -1) \geq (s_1-s_2)2^{k-1}4(k-1)/b \geq (s_1-s_2)2^kk/b \]

Each of the remaining groups is created by merging some pregroups.
Let $g$ be the number of groups created from the first $b-1$ pregroups (we already dealt with them).
For the increase of segments in a group of rank $g+r$, we can similarly charge only the previous group $g+r-1$,
except for the special case of $r=1$ where we would charge two groups $g$ and $g-1$ (which together are just the $(b-1)$-th pregroup).
Let $r \in (zb, zb+b]$ for $z \geq 0$, then the group of rank $g+r$ is created by merging
$2^{z}$ pregroups.
It can be seen that the children from group $g+r$ are from pregroups of rank at least $b2^z$,
and that the children from group $g+r-1$ are from pregroups of rank at least $b2^z-2^{z-1}-1$.
Thus, we will have a large reserve of potentials, as due to $b \geq 6$ every unit length for a child in group $g+r-1$ has potential at least
\begin{linenomath} \begin{align*}
& 2^{8(b2^z-2^{z-1}-1)/b}-1 = 2^{8 \cdot 2^z - 8(2^{z-1}+1)/b} -1 \\
 &\geq 2^{8 \cdot 2^z - 8(2^{z-1}+1)/6} -1
 \geq 2^{8 \cdot 2^z - 2^z - 2} -1 \\
 &\geq 2^{5 \cdot 2^z} -1 \geq 2^{2^z+1} 
\end{align*} \end{linenomath}
Since the number of children in group $g+r$ is at most $2^{2^{z}+1}$ times larger than the number 
of children in group $g+r-1$, an increase in the size of segments in group $g+k$ can be 
bounded by the sum of potentials of respective unit lengths from $s_1$ to $s_2$ from group $g+k-1$, as before.
\end{proof}

\begin{algorithm}[h]
\begin{algorithmic}[1]
  \Function{Assign-$\seglen$}{$P$}
  \State \textbf{Input:} heavy path $P = u_1,\ldots,u_p$. $b$ is a fixed parameter.
  \State \textbf{Requirement:} all nodes in the subtrees hanging off $P$ are already processed.
  \State \textbf{Output:} $\rt(u_i)$ and $\seglenb(v_{i,j})$ values for every $u_{i}$.
  \State
  \State $r \gets 0$
  \For{$i=1 \ldots p$}
    \State $C \gets$ \Call{Construct-classes}{$u_i$, $b$}
    \For{$j=1 \ldots \degree(u_i)-1$} \Comment{Light children of $u_i$}
      \State Set $\seglenp(v_{i,j})$ to be the boundary value of the class of $v_{i,j}$ in $C$
      \Comment{First rounding}
    \EndFor
    \State $G \gets$ \Call{Construct-groups}{$u_i$, $b$} \Comment{Note that dummy nodes are added here}
    \State $segment_i = 1$
    \For{$j=1 \ldots \degree(u_i)-1$}
      \State Set $\seglenb(v_{i,j})$ to be $\seglenp$ of the largest child in group of $v_{i,j}$ in $G$
      \Comment{Second rounding}
      \State $segment_i \gets segment_i + \seglenb(v_{i,j})$
    \EndFor
    \State $r \gets r + segment_i$    
    \State Set $\rt(u_i)$ as a description of rounded groups \Comment{Create the routing table}
  \EndFor
  \State \textbf{Guaranteed property:} $r \leq |T_{u_{1}}|2^{(12\level(u_{1})-1)/b}$
  \State $\seglen(u_1) \gets |T_{u_{1}}|2^{12\level(u_{1})/b}$
  \EndFunction
\end{algorithmic}
\caption{First phase of the encoder, assigning $\seglenb$ and $\rt$ on a heavy path.}
\label{alg:1phase}
\end{algorithm}

\paragraph{Bound on the length of a segment.}
After rounding in classes and groups for a node $u_i$,
$\rt({u_i})$ is set to be description of the rounded groups as discussed in Proposition~\ref{fact:plot},
consisting of $\Oh(b\log{\log{\lightweight(u_{i})}})$ bits.
Let $\seglenb(v_{i,j})$ be the size of the reserved segment of child $v_{i,j}$ after rounding in groups.
Then it can be seen that segment of length $1 + \sum_{j\geq 1} \seglenb(v_{i,j})$
will allow us to set the IDs for $u_{i}$ and subtrees rooted at its light children.
As the length of the considered heavy path is $p$,
then similarly with a segment of length $p + \sum_{i=1}^{p} \sum_{j\geq 1} \seglenb(v_{i,j})$
it will be possible to assign IDs in the whole $T_{u_1}$.

\begin{lemma}
$\seglen(u_1) = |T_{u_{1}}|2^{12\level(u_{1})/b}$ is enough for the encoder to properly assign IDs in the whole $T_{u_1}$.
\label{lem:span1}
\end{lemma}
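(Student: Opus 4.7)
My plan is to prove the lemma by induction on the light depth of $u_1$. The inductive hypothesis is exactly the claimed bound: for every head $u_1'$ of a heavy path encountered recursively, $\seglen(u_1') = |T_{u_1'}|2^{12\level(u_1')/b}$ is large enough to accommodate all IDs in $T_{u_1'}$. The base case consists of heavy paths with no light children hanging off, for which a segment of length $p \le |T_{u_1}|$ suffices and the claim follows trivially.

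For the inductive step, consider a heavy path $P = u_1 - u_2 - \cdots - u_p$ with light children $v_{i,j}$. The plan is to show that a segment of length $p + \sum_{i=1}^p\sum_j \seglenb(v_{i,j})$ suffices for IDs in $T_{u_1}$ (one unit per node on $P$, plus recursive capacity guaranteed by induction applied to each light child's reserved segment $\seglenb(v_{i,j}) \ge \seglen(v_{i,j})$), and then bound this sum. Fix $u_i$ and let $l = \min(\lfloor \log \lightweight(u_i)\rfloor + 1, \level(u_i))$. By Lemma~\ref{lem:potentials}, the total size of segments after group rounding for children of $u_i$ is bounded by $\sum_j \seglenp(v_{i,j})2^{8(l-\level(v_{i,j}))/b}$. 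Chaining with Lemma~\ref{lem:roundclasses}, which gives $\seglenp(v_{i,j}) \le \seglen(v_{i,j})2^{3(l-\level(v_{i,j}))/b}$, and invoking the inductive hypothesis $\seglen(v_{i,j}) = |T_{v_{i,j}}|2^{12\level(v_{i,j})/b}$, I obtain
\begin{equation*}
\sum_j \seglenb(v_{i,j}) \;\le\; \sum_j |T_{v_{i,j}}|\, 2^{(12\level(v_{i,j}) + 11(l-\level(v_{i,j})))/b} \;=\; \sum_j |T_{v_{i,j}}|\, 2^{(\level(v_{i,j}) + 11l)/b}.
\end{equation*}
The key observation is that since $v_{i,j}$ is a light child of $u_i$, we have $|T_{v_{i,j}}|\le \lightweight(u_i)$, so $\level(v_{i,j}) \le \lfloor \log\lightweight(u_i)\rfloor < l$, whence $\level(v_{i,j}) + 11l \le 12l - 1 \le 12\level(u_1) - 1$ (using $l \le \level(u_i) \le \level(u_1)$). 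This yields $\sum_j \seglenb(v_{i,j}) \le \lightweight(u_i)\, 2^{(12\level(u_1)-1)/b}$.

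Summing along the heavy path and using the identity $|T_{u_1}| = p + \sum_{i=1}^p \lightweight(u_i)$, together with $2^{(12\level(u_1)-1)/b} \ge 1$, I get
\begin{equation*}
p + \sum_{i=1}^p\sum_j \seglenb(v_{i,j}) \;\le\; 2^{(12\level(u_1)-1)/b}\Bigl(p + \sum_{i=1}^p \lightweight(u_i)\Bigr) \;=\; |T_{u_1}|\, 2^{(12\level(u_1)-1)/b},
\end{equation*}
which is strictly less than $|T_{u_1}|\, 2^{12\level(u_1)/b} = \seglen(u_1)$, completing the induction and simultaneously proving the ``Guaranteed property'' asserted in Algorithm~\ref{alg:1phase}. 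The spare factor $2^{1/b}$ is what allows $\bound(u_i)$ for each $u_i$ to be rounded up to a legal value of the form $\lfloor 2^{t/b}\rfloor$ without overshooting $\seglen(u_1)$.

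The main subtlety I expect is the bookkeeping that shows $\level(v_{i,j}) < l$ holds uniformly — this is what lets the exponents $12\level(v_{i,j}) + 11(l-\level(v_{i,j}))$ telescope into something bounded by $12l$ rather than by a sum that grows with $\level(u_1)$ in an uncontrolled way. Once this monotonicity is in place, the rest is routine arithmetic combining Lemmas~\ref{lem:roundclasses} and~\ref{lem:potentials} with the inductive hypothesis.
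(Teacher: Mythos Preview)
Your proof is correct and follows essentially the same approach as the paper's: induction on the light depth, chaining Lemmas~\ref{lem:roundclasses} and~\ref{lem:potentials} with the inductive hypothesis, and then using $\level(v_{i,j}) \le \level(u_1)-1$ together with $p+\sum_i\lightweight(u_i)=|T_{u_1}|$ to collapse the sum. One tiny bookkeeping slip: your justification ``$\level(v_{i,j}) \le \lfloor\log\lightweight(u_i)\rfloor < l$'' only covers the case $l=\lfloor\log\lightweight(u_i)\rfloor+1$; when the minimum in the definition of $l$ is attained at $\level(u_i)$ you instead need that a light child satisfies $\level(v_{i,j})<\level(u_i)$, but this follows immediately from $|T_{v_{i,j}}|\le|T_{u_i}|/2$ and does not affect the argument.
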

\begin{proof}
By Lemma~\ref{lem:roundclasses}, Lemma~\ref{lem:potentials} and induction on the light depth of a node, we have:
\begin{align*}
p + \sum_{i=1}^{p} \sum_{j\geq 1} \seglenb(v_{i,j}) & \leq
p + \sum_{i=1}^{p} \sum_{j\geq 1} \seglen(v_{i,j})2^{11(\level(u_{1})-\level(v_{i,j}))/b} \\
& \leq p + \sum_{i=1}^{p} \sum_{j\geq 1} |T_{v_{i,j}}|2^{12\level(v_{i,j})/b}2^{11(\level(u_{1})-\level(v_{i,j}))/b}  \\
& \leq |T_{u_{1}}|2^{(12\level(u_{1})-1)/b}
\end{align*}
Then, another factor of $2^{1/b}$ factor is needed for rounding the value of $\bound$.
\end{proof}

Thus in the first phase, at the end of processing of a heavy path, $\seglen(u_1)$ is set to be just $|T_{u_{1}}|2^{12\level(u_{1})/b}$.
This way the bound from Claim~\ref{prop:size1} is satisfied.
Pseudocode for the first phase of the encoder is presented in Algorithm~\ref{alg:1phase}.

\begin{algorithm}[h]
\begin{algorithmic}[1]
  \Function{Assign-IDs}{$P, s$}
  \State \textbf{Input:} heavy path $P = u_1,\ldots,u_p$. $b$ is a fixed parameter.
  \State \textbf{Output:} $\start(u)$, $\bound(u)$ for every $u \in T_{u_1}$, $\start(u) \in [s,s + \seglen(u_1))$, $\nodespan(u_1) \leq \seglen(u_1)$.
  \State
  \State $A \gets s$ \Comment{Accumulator}
  \For{$i=1 \ldots p$}
    \State $\start(u_i) \gets A$
  	\State $A \gets A + 1$
    \For{$j=1 \ldots \degree(u_i)-1$} \Comment{Light children of $u_i$}
      \State \Call{Assign-IDs}{$P'$, $A$}, where $P'$ is a heavy path with $v_{i,j}$ as the head
      \State $A \gets A + \seglenb(v_{i,j})$
    \EndFor
  \EndFor
  \For{$i=1 \ldots p$}
  	\State Let $t$ be the smallest natural number such that $\lfloor 2^{t/b} \rfloor + \start(u_i) \geq A$
    \State $\bound(u_i) \gets \lfloor 2^{t/b} \rfloor$
  \EndFor
  \EndFunction
\end{algorithmic}
\caption{Second phase of the encoder, recursively assigning $\start$ and $\bound$ values.}
\label{alg:2phase}
\end{algorithm}

\paragraph{Second phase --- creating labels.}
The second phase of the encoder only unfolds the assigned $\seglenb$ into concrete values
of $\start$ and $\bound$ for every node.
It works in a top-down manner, using a recursive procedure scanning one heavy path in the tree at a time.
We start in the root with the accumulator $A$ set to $0$.
At node $u_{i}$ on a path, the procedure sets $\start(u_{i})$ to be the current value of the accumulator and then increases the accumulator by 1.
Then it iterates over the light children $v_{i,1},v_{i,2},\ldots$ of $u_{i}$.
For each $v_{i,j}$ we call the recursive procedure with a copy of the current accumulator as an argument,
and then increase the accumulator by $\seglenb(v_{i,j})$.
Finally, we need to set the proper $\bound(u_{i})$.
It can be seen that we only need that $\start(u_{i})+\bound(u_{i})$ is at least as large as the final accumulator.
Further, if the starting value of accumulator for a heavy path $P$ was $s$,
the final accumulator is $A = s + d$, where $d = \sum_{i=1}^{p} (1 + \sum_{j\geq 1}\seglenb(v_{i,j}))$.
Then, because $\start(u_i) \geq s$ and possible values of $\bound(u_{i})$ are all numbers of the form $\lfloor 2^{t/b}\rfloor$,
we can always choose $t$ so that $\start(u_{i}) + \bound(u_{i}) \leq s + d 2^{1/b}$.
At the end, by construction we obtain that $\nodespan(u_{1}) \leq \seglen(u_{1})$.
Note that in total we round the size of a given subtree twelve times for each passed
level: three times for rounding in classes to $\seglenp$,
(amortized) eight times for rounding in groups to $\seglenb$
and once when $\bound$ is increased to be power of $2^{1/b}$.
Pseudocode for the second phase of the encoder is presented in Algorithm~\ref{alg:2phase}.

\subsection{Resulting labels}
\paragraph{Final structure of a label.}
Label $\nodelabel(u)$ of node $u$ is composed of six parts: 
\begin{enumerate}
\item $\start(u)$,
\item $\bound(u)=\lfloor 2^{t/b}\rfloor$ stored as just $t$,
\item an encoding of $\rt(u)$,
\item $\lfloor \log{\lightweight(u)} \rfloor$,
\item $\level(u)$,
\item $\lceil \log{(\deg(u)-1)} \rceil$.
\end{enumerate}
$b$ will be computable from the known value of $\lceil \log{n} \rceil$.
The labels are distinct because $\start$ values are distinct.

\paragraph{Length of a label.}
From Lemma~\ref{lem:span1} we have $\nodespan(\treeroot(T)) \leq \seglen(\treeroot(T)) = 
n2^{12\log{n}/b} $.
Then every $\start$ value can be stored on $\log{n} + \Oh(\log{n}/b)$ bits, $\bound$ on
$\Oh(\log{(b\log{n})})$ bits and $\rt$ on $\Oh(b\log{\log{\lightweight(u_{i})}}) =  \Oh(b\log{\log{n}})$ bits.
Three last elements of a label can be stored on $\Oh(\log{\log{n}})$ bits.
By taking $b=\sqrt{\log{n}/\log{\log{n}}}$, which is at least $6$ for $n$ large enough,
we obtain labels of length $\log{n} + \Oh(\sqrt{\log{n}\log{\log{n}}})$.

\begin{figure}
  \includegraphics[scale=1.0]{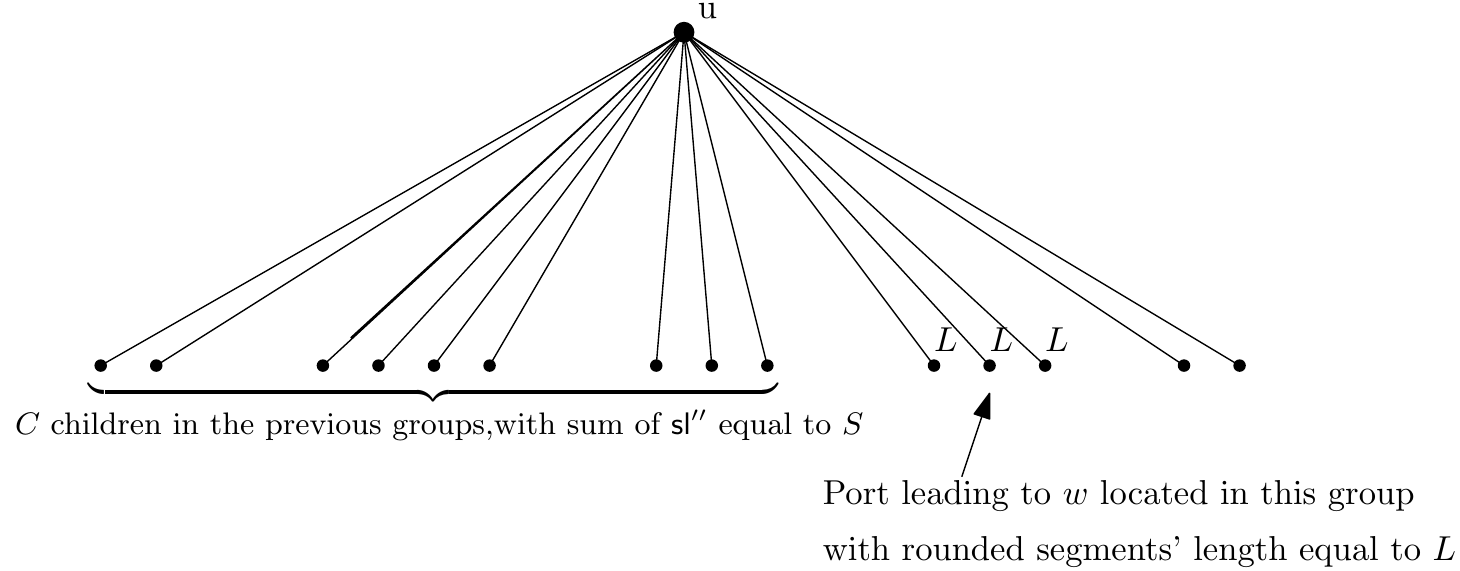}
  \caption{Information needed during a query: locating the correct group, the rounded size of a reserved segment there,
  and prefix sums of the number of ports and segment sizes in all previous groups.
  $\start(w) = \start(u) + S + kL + x$, and port $C+k+2$ is chosen.}
  \label{fig:querying}
\end{figure}

\subsection{Decoder}
The decoder can answer routing queries in polylogarithmic time, since the number of groups is polylogarithmic.
From $\nodelabel(u)$ and $\nodelabel(w)$ it can easily extract $\rt$, $\start$ and $\bound$ of both nodes.
If $\start(w) \not \in [\start(u),\start(u)+\bound(u))$, then $w \not \in T_{u}$ and port
number $0$ is chosen. Otherwise $\rt(u)$ is read and rounded $\seglenb$ of every light child of $u$ is retrieved.
We have groups of children of predefined sizes (depending only on $b$ and three last elements of a label),
with every child in a single group with the same $\seglenb$.
Let $v_1, v_2, \ldots$ be the light children of $u$ sorted by non-increasing size.
Then port $1$ leading to heavy child of $u$ is chosen if $\start(w) > \start(u) + \sum_{j=1} \seglenb(v_{j})$.
In the remaining case, port $i+1$ is chosen such that $i$ is smallest number for which $\start(w) \leq \start(u) + \sum_{j=1}^{i} \seglenb(v_{j})$ holds.
Port with a number greater than $\deg(u)$ will never be chosen.
Consult Figure~\ref{fig:querying} for the situation during a query.

\section{Final scheme with distribution of bits}
\label{sec:not_so_final}

In this section we design our final routing labeling scheme and prove the following theorem:

\begin{theorem}
There exists a canonical labeling scheme for routing in trees on $n$ nodes with labels of length $\log{n} + \Oh((\log{\log{n}})^2)$ bits,
the decoder answering queries in polylogarithmic time, and the encoder working in near-linear time.
\label{th:main}
\end{theorem}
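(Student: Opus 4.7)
The plan is to refine the intermediate scheme of Section~\ref{sec:improved2} by choosing the parameter $b$ much smaller and compensating for the resulting blow-up in $\start(u)$ by hiding the routing table inside trailing zeros of $\start(u)$, an idea borrowed from~\cite{alstrup2015optimal}.

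First I would set $b = \Theta(\log\log n)$. By the analysis of Section~\ref{sec:improved2}, the routing table then requires $\Oh(b\log\log n) = \Oh((\log\log n)^{2})$ bits, while $\seglen(\treeroot(T)) \le n\cdot 2^{12\log n/b} = n\cdot 2^{\Oh(\log n/\log\log n)}$, so a naive storage of $\start(u)$ would need $\log n + \Oh(\log n/\log\log n)$ bits---far too much. The idea is that this excess is only "real" if we insist on storing $\start(u)$ in isolation: if we can guarantee that $\start(u)$ has $q = \Theta((\log\log n)^{2})$ trailing zero bits for every node that actually carries a non-trivial $\rt(u)$, then those positions can be overwritten with the encoding of $\rt(u)$ itself, and the decoder recovers the true $\start(u)$ by masking out the low $q$ bits.

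To achieve this, I would modify the first phase of the encoder so that, at every node $u$ with enough light children to deserve a routing table, each $\seglenb(v_{i,j})$ produced by the double-rounding procedure is further rounded up to a multiple of $2^{q}$; this guarantees that the $\start$ value assigned to every child in the second phase is a multiple of $2^{q}$. For children whose subtree is already much larger than $2^{q}$ this is a $(1+o(1))$ factor, so it is absorbed by the $2^{\Oh(1)/b}$ budget per light edge that Section~\ref{sec:improved2} already allows. For children whose $\seglenb$ is below the threshold $2^{q}\polylog(n)$, the alignment cost is non-negligible, so instead of aligning each of them individually I would pack all such "tiny" siblings into a single aligned super-segment whose internal layout is described by a harmonic-style partition (as in Section~\ref{sec:preliminary}); the position of a tiny child inside the super-segment is then determined from the $\rt$ entry of the super-segment, which again costs only $\Oh((\log\log n)^{2})$ bits.

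The final label of $u$ is thus the aligned $\start(u)$ with $\rt(u)$ embedded in its $q$ trailing zero positions, concatenated with $\bound(u)$, $\level(u)$, $\lfloor\log\lightweight(u)\rfloor$ and $\lceil\log(\deg(u)-1)\rceil$, each $\Oh(\log\log n)$ bits; total length $\log n + \Oh((\log\log n)^{2})$. Encoding is near-linear because both phases of Section~\ref{sec:improved2} are linear up to the extra per-node alignment step, and decoding is polylogarithmic because the number of groups inside $\rt(u)$ is $\Oh(b\log\log\lightweight(u)) = \polylog(n)$, so the harmonic search for the correct port from Section~\ref{sec:improved2} still applies. The main obstacle is the second step above: verifying that the additional rounding to multiples of $2^{q}$, together with the special handling of tiny siblings, still fits inside the inductive $\seglen(u_{1}) \le |T_{u_{1}}|\, 2^{\Oh(\level(u_{1}))/b}$ bound---i.e.\ that the per-light-edge overhead remains $2^{\Oh(1)/b}$ despite the large absolute alignment granularity.
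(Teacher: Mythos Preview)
Your proposal has a genuine gap in the accounting. With a global $b=\Theta(\log\log n)$, the Section~\ref{sec:improved2} analysis gives $\seglen(\treeroot(T))\le n\cdot 2^{\Oh(\log n/b)}$, so every $\start(u)$ already requires $\log n+\Oh(\log n/\log\log n)$ bits. The trailing-zeros trick lets you overwrite $q=\Theta((\log\log n)^2)$ low bits of $\start(u)$ with $\rt(u)$, but this does not shorten $\start(u)$: you merely avoid paying \emph{again} for the $\Oh((\log\log n)^2)$ bits of $\rt(u)$. The dominant $\Oh(\log n/\log\log n)$ second-order term in $|\start(u)|$ remains, and your label is still $\log n+\Oh(\log n/\log\log n)$, not $\log n+\Oh((\log\log n)^2)$. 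The alignment of $\seglenb$ values to multiples of $2^q$ and the harmonic packing of tiny siblings can only \emph{increase} the span, never decrease it, so they do not help here either.

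The idea you are missing, and the one the paper uses, is to let $b$ vary \emph{per node}: at a node $u$ one takes $b_u=\Theta(\log\lightweight(u)/\log\log\lightweight(u))$. Then $\rt(u)$ has $\Oh(b_u\log\log\lightweight(u))\le \log\lightweight(u)$ bits, and one reserves an interval of length $2\lightweight(u)$ just for choosing $\start(u)$, which guarantees $\log\lightweight(u)$ trailing zeros and hence room for $\rt(u)$. The crucial gain is on the other side of the ledger: the rounding contributed by the light edges out of $u$ is now $2^{\Oh(1)/b_u}=2^{\Oh(\log\log\lightweight(u)/\log\lightweight(u))}$, and since $\lightweight(u)$ is bounded by (roughly) $2^{\level}$ at each level, the total multiplicative blow-up of $\start$ over all levels is at most $\prod_{k=1}^{\log n}2^{\Oh(\log k/k)}$, whose logarithm is $\Oh\bigl(\sum_{k\le\log n}\log k/k\bigr)=\Oh((\log\log n)^2)$. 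In other words, the per-node choice of $b$ is what brings the length of $\start(u)$ itself down to $\log n+\Oh((\log\log n)^2)$; the trailing-zeros trick is then used only to absorb the now node-dependent size of $\rt(u)$. With a fixed global $b$ this harmonic-sum collapse never occurs.
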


The overall approach is as in the previous section, but now we choose
$b$ to not be the same value in every node, and move $\start$ values a little
to make this idea work efficiently.
First, notice that it might be excessive for $\rt(u)$ to always have the same amount of reserved bits.
Indeed, in most cases the light weight of a node is small,
and then we can get away with only a few bits for the routing table.
On the other hand, for nodes with a very significant light weight, it would be beneficial to use much more bits
to reduce the impact of rounding, since this rounding affects a huge part of the tree.
Thus, we use a simple method for compressing $\start$ values depending on the light weight of a given node,
known from other works in this area.

The idea is that if the binary expansion of $\start(u)$ happens to have $l$ trailing zeroes, then we
can save $l$ bits by removing these trailing zeroes, assuming that we are additionally storing their number.
Thus, we add $\Oh(\log\log n)$ bits to every label, but then are able to substantially shorten some of them.
The saved space can be used to store a relatively big routing table.
As we cannot be sure that binary expansions have a lot of trailing zeroes, we will enforce such a property.
If the light weight of $u$ is $\lightweight(u) = l$, we reserve an interval of
length $2^{\lceil \log{l} \rceil}$ exclusively for choosing the value $start(u)$.
In such an interval surely there is always a number with at least $\log{l}$ trailing zeroes,
so using that many bits for storing $rt(u)$ will be possible, which translates into
$b=\Omega(\log{l} / \log{\log{l}})$ for node $u$.
Moreover, the total size of intervals reserved this way is not too big, as every node contributes
to the light weight of at most $\log{n}$ ancestors, so the sum of intervals reserved for shifting $\start$ values will be at most $2n\log{n}$.
Note that equivalently we could store $\rt(u)$ as a suffix of the shifted value of $\start$,
instead of explicitly creating additional space by enforcing the existence of $\log{l}$ trailing zeroes,
but we find the latter solution cleaner for our purposes.

\subsection{Encoder}
\paragraph{First phase.}
For the first phase, we need to modify the previous bound for $\seglen$.

\begin{claim}
Under assumptions on $b$ and $\lightweight$ values made earlier,
using for every $u_1$ being the head of a heavy path
$\seglen(u_1)=2|T_{u_1}|\level(u_1)2^{\level(u_1)/\log{n}}\prod_{k=1}^{\level(u_1)}2^{14c\log{k}/k}$, 
for some big enough constant $c$, is sufficient for the encoder to achieve both $|\rt(u)| \leq \log{\lightweight(u)}$
and $\start(u)$ having $\log{\lightweight(u)}$ trailing zeroes for every node $u$.
\label{prop:size2}
\end{claim}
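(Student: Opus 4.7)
\begin{proofs}
The plan is to proceed by induction on $\level(u_1)$, mirroring Lemma~\ref{lem:span1} but with two modifications tailored to the new scheme. First, each node $u$ now uses a local parameter $b_u=\Theta(\log\lightweight(u)/\log\log\lightweight(u))$ rather than a single global $b$; this collapses the $\Oh(b_u\log\log\lightweight(u))$ bound on the routing table from Section~\ref{sec:improved2} to $\Oh(\log\lightweight(u))$, delivering the first conclusion after absorbing constants. Second, when placing $\start(u)$ in the second phase of the encoder, we allocate immediately after it a dedicated slack interval of length $2^{\lceil\log\lightweight(u)\rceil}$; since any interval of such length contains a multiple of $2^{\lceil\log\lightweight(u)\rceil}$, the $\start(u)$ value can indeed be snapped so as to have the required $\log\lightweight(u)$ trailing zeroes.

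For the inductive step, fix a heavy path $P=u_1-\ldots-u_p$ and assume the bound for every subtree root $v_{i,j}$ hanging off $P$. Reapplying the two-phase rounding of Section~\ref{sec:improved2} at each $u_i$ with the local parameter $b_{u_i}$ in place of a global $b$, Lemmas~\ref{lem:roundclasses} and~\ref{lem:potentials} combine to contribute a rounding factor of at most $2^{\Oh(\level(u_i)-\level(v_{i,j}))/b_{u_i}}$ per light child. Summing $\sum_j |T_{v_{i,j}}|$ at a single node, then $\sum_i\lightweight(u_i)\leq|T_{u_1}|$ along the path, and augmenting by the trailing-zero reservation $2^{\lceil\log\lightweight(u_i)\rceil}\leq 2\lightweight(u_i)$, one obtains a recurrence whose solution can be matched term-by-term with the expression in the statement.

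The key identification is between the per-level rounding and the factor $\prod_{k=1}^{\level(u_1)}2^{14c\log k/k}$. Any node $u$ at level $k$ satisfies $\lightweight(u)\leq 2^{k+1}$, so $1/b_u=\Oh(\log k/k)$ and crossing level $k$ inflates segments by $2^{\Oh(\log k/k)}$. Since $\level$ strictly decreases along any light edge, each root-to-leaf path encounters at most one node at each given level, so after telescoping these contributions yield exactly the claimed product for a large enough constant $c$. The leading $2|T_{u_1}|\level(u_1)2^{\level(u_1)/\log n}$ absorbs the heavy-path $+1$ terms, the cumulative trailing-zero reservations across the $\level(u_1)$ levels traversed, and the final $\bound$-rounding by $2^{1/b_{u_i}}$ applied once per path head.

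The main obstacle is the bookkeeping caused by $b_u$ varying across nodes: at a node of small light weight the per-level rounding factor is large, but simultaneously both the reserved slack $2^{\lceil\log\lightweight(u)\rceil}$ and the number of children subject to rounding are correspondingly small. The delicate part of the argument is showing that these two effects cancel in precisely the right proportion so that contributions at a fixed level $k$ aggregate to $2^{\Oh(\log k/k)}$ regardless of how many nodes in the induction sit at that level. Once this matching is established, the remainder reduces to a routine calculation analogous to Lemma~\ref{lem:span1}.
\end{proofs}
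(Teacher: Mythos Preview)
Your overall plan matches the paper's approach, but the key step contains a genuine error. You claim that ``any node $u$ at level $k$ satisfies $\lightweight(u)\leq 2^{k+1}$, so $1/b_u=\Oh(\log k/k)$''. The implication runs the wrong way: since $x\mapsto(\log x)/x$ is decreasing, the bound $\log\lightweight(u)\leq k+1$ yields $1/b_u=c\cdot\log\log\lightweight(u)/\log\lightweight(u)\geq\Omega((\log k)/k)$, a \emph{lower} bound. A node at a high level $k$ with tiny $\lightweight(u)$ has $1/b_u=\Theta(1)$, far larger than $(\log k)/k$, so its per-level rounding cannot be charged to the factor $2^{14c\log k/k}$ sitting at level $k=\level(u)$ in the product. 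Your last paragraph senses that something delicate is happening at nodes of small light weight, but the proposed cancellation mechanism (``the number of children subject to rounding is correspondingly small'') is not what makes the argument go through.

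The paper's actual fix is that the class/group rounding at $u_i$ only spans levels from $\level(v_{i,j})$ up to $l_i=\min(\lfloor\log\lightweight(u_i)\rfloor+1,\level(u_i))$, not up to $\level(u_i)$; this is built into the definition of preclasses. Since $l_i\leq\log\lightweight(u_i)+1$, one now legitimately gets $1/b_{u_i}=\Oh((\log l_i)/l_i)$, and because $(\log k)/k\geq(\log l_i)/l_i$ for every $k\leq l_i$, the rounding factor $2^{14(l_i-\level(v_{i,j}))/b_{u_i}}$ telescopes into $\prod_{k=\level(v_{i,j})+1}^{l_i}2^{14c\log k/k}$ (this is Lemma~\ref{lem:roundclasses2}, which must replace Lemma~\ref{lem:roundclasses} because the boundary values change with the new $\seglen$ formula, together with Lemma~\ref{lem:size2}). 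A secondary issue: the $\bound$-rounding cannot use the local factor $2^{1/b_{u_i}}$, since it applies to all of $T_{u_1}$ rather than just the light children of $u_i$; the paper switches it to a global $2^{1/\log n}$, which is precisely the origin of the $2^{\level(u_1)/\log n}$ term in the claimed formula.
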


Claim~\ref{prop:size2} will be proved inductively with the modified procedure described below.
We need to make several changes to the encoder from Section~\ref{sec:improved2}.
Firstly, $b$ is no longer a global parameter, but rather a possibly different value for every node.
For a node $u$ we use $b=\log{\lightweight(u)}/c\log{\log{\lightweight(u)}}$, for some constant $c$.
The product in the bound for $\seglen(u)$ reflects such a choice of $b$, going over all levels up to $\level(u)$.
Note that rounding factors are decreasing, as the values of $b$ are increasing and thus $2^{1/b}$ are decreasing.
We require that $b \geq 6$,
meaning that for every node $\log{\lightweight(u)} / \log{\log{\lightweight(u)}} \geq 6c$,
which is realised if we assume that $\lightweight(u)$ is big enough.
Secondly, we increase the range for $\start$ value from just $1$ to $2\lightweight(u)$,
ensuring that the assignment of $\start(u)$ with sufficient number of trailing zeroes will be possible.

Thirdly, we need to adjust $\bound$ rounding.
As the value of $b$ is now set locally for a node $u$, depending solely on its light weight,
we can no longer round up $\bound(u)$ by a factor of $2^{1/b}$, since this rounding
applies to more than just the subtrees of light children of $u$.
Because of that, now for every node we just round $\bound$ by a factor $2^{1/\log{n}}$.
This is a minor issue, as storing $\bound$ was never main factor in the length of a label, and
such modified values of $\bound$ will still be stored on just $\Oh(\log{\log{n}})$ bits.
Lastly, we need to revisit rounding in classes.
Rounding in groups works exactly as before.
However, for classes, because of the new formula for length of segments,
now we have different boundary values than in Section~\ref{sec:improved2}.
Namely, if a class consists of every light child $v_{i,j}$ with $|T_{v_{i,j}}| \in [x_1,x_2+1)$,
then the boundary value for this class will be $2 x_2 \log{x_2} 2^{\log{x_2}/\log{n}}\prod_{k=1}^{\log{x_2}}2^{14c\log{k}/k}$.
After this change we have a bound analogous to Lemma~\ref{lem:roundclasses}:

\begin{lemma}
\label{lem:roundclasses2}
Consider node $u_i$ for which we have $b_i=\log{\lightweight(u_i)}/c\log{\log{\lightweight(u_i)}}$,
and let $l = \min(\lfloor \log{\lightweight(u_{i})} \rfloor + 1, \level(u_i))$.
If $v_{i,j}$, a light child of $u_i$ with $\level(v_{i,j}) = l - k$, is part of class $C$,
then the boundary value for this class is no larger than $\seglen(v_{i,j})2^{6k/b_i}$.
\end{lemma}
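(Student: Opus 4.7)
The plan is to mirror the case analysis of Lemma~\ref{lem:roundclasses}, carefully accounting for the richer $\seglen$ formula. Setting $L' = \level(v_{i,j}) = l - k$ and $L = \log x_2$, the ratio factors cleanly as
\[
\frac{\mathrm{boundary}(C)}{\seglen(v_{i,j})} = \frac{x_2}{|T_{v_{i,j}}|}\cdot\frac{L}{L'}\cdot 2^{(L - L')/\log n}\cdot\prod_{k'=L'+1}^{L} 2^{14c\log k'/k'},
\]
so the task is to bound these four contributions and show their product is at most $2^{6k/b_i}$. The case split is the same as before: classes that refine a preclass of rank $k \leq b_i - 1$ versus classes obtained by merging $2^z$ consecutive preclasses of rank in $[b_i 2^z, b_i 2^{z+1})$.

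In the subdivision case all children in $C$ share level $l-k$, so $L - L' < 1$; the log-factor ratio, the $2^{\cdot/\log n}$ factor and the tail product (empty or one term $2^{14c\log l/l} = 2^{o(1)}$) together contribute at most a small constant, while the subdivision width caps $x_2/|T_{v_{i,j}}|$ at $2^{k/b_i}$. Since $b_i \geq 6$, the slack between $2^{k/b_i}$ and $2^{6k/b_i}$ comfortably absorbs the residual $o(1)$ factor for all sufficiently large $n$.

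In the merged case $L - L' \leq 2^z$, giving $x_2/|T_{v_{i,j}}| \leq 2^{2^z}$, together with bounded contributions $2^{O(2^z/L')}$ and $2^{2^z/\log n}$. The tail product exponent is $14c \sum_{k'=L'+1}^{L} \log k'/k'$, which I would bound by splitting into two regimes: when $L' = \Omega(l)$ the estimate $\sum \log k'/k' = O(\log L \cdot (L-L')/L')$ gives $O(2^z \log l / l) = o(2^z)$; when $L'$ is close to $0$ (merging has reached the base of the tree) the full telescoping bound $\sum_{k'=1}^{L} \log k'/k' = O((\log L)^2)$ takes over, but then $k$ must be close to $l$ and the budget $6k/b_i = \Omega(l \log \log l / \log l)$ dominates $(\log l)^2$ for $n$ large enough. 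Combining the four factors in either regime yields the target $2^{6k/b_i}$, provided $c$ is a sufficiently large absolute constant and $b_i \geq 6$.

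The main obstacle is exactly this merged regime for the tail product: since $2^{14c\log k'/k'}$ is $\Theta(c)$-sized at small $k'$, a naive per-level estimate is too weak, and one has to leverage the structural constraint $k \geq b_i 2^z$ imposed by the merging rule together with the monotone decrease of $\log k'/k'$ to balance the tail against the $6k/b_i$ budget. The delicate quantitative point, and what ultimately fixes $c$, is choosing it large enough to absorb the subdivision-case slack and the constants hidden in the $L/L'$ and $2^{\cdot/\log n}$ terms, yet small enough that $14c \cdot 2^z \cdot \log l / L' \leq 5 \cdot 2^z$ holds in every merged configuration that the algorithm can produce.
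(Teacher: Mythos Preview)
Your second regime contains a concrete computational error that breaks the argument. With $k$ close to $l$ and $b_i = \log\lightweight(u_i)/(c\log\log\lightweight(u_i)) \approx l/(c\log l)$, the budget is $6k/b_i \approx 6c\log l$, not $\Omega(l\log\log l/\log l)$ as you claim; you appear to have conflated $l$ with $\lightweight(u_i)$ rather than $\log\lightweight(u_i)$. Against this correct budget of $\Theta(\log l)$, your stated tail bound of $(\log l)^2$ does not fit. The approach can be rescued by noting that in this regime $L \le L' + 2^z$ with $2^z \le l/b_i = \Theta(\log l)$, so in fact $\sum_{k'\le L}\log k'/k' = O((\log\log l)^2)$, which \emph{is} dominated by $\Theta(\log l)$; but that is not the estimate you wrote. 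Also, your final paragraph treats $c$ as a parameter you may tune in this lemma (``what ultimately fixes $c$'', ``small enough that \ldots''); it is not --- $c$ is fixed by the scheme (later taken to be $2$), and the lemma must hold for that $c$.

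More importantly, you are missing the one-line observation that makes the merged case trivial and renders the two-regime split unnecessary. The light-weight assumption applies to \emph{every} node, in particular to $v_{i,j}$: it forces $b_{v_{i,j}}\ge 6$, hence $\log\lightweight(v_{i,j})/\log\log\lightweight(v_{i,j})\ge 6c$, and since $\level(v_{i,j})\ge\log\lightweight(v_{i,j})$ and $z\mapsto z/\log z$ is increasing, every index $z\ge \level(v_{i,j})$ in the tail product satisfies $z/\log z\ge 6c$, i.e.\ $14c\log z/z\le 14/6$. Thus the whole product is at most $2^{14r/6}$ where $r$ is the number of merged preclasses, and together with the size-ratio factor $2^r$, the $\log$-ratio factor $<2$, and $2^{r/\log n}<2$, one gets boundary $\le \seglen(v_{i,j})\cdot 2^{6r}\le \seglen(v_{i,j})\cdot 2^{6k/b_i}$ directly from $k\ge rb_i$. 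This is the paper's argument; your case split and asymptotic balancing are not needed once you exploit the assumption at $v_{i,j}$ rather than only at $u_i$.
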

\begin{proof}
If $C$ is one of the classes subdividing some preclass of rank at most $b_i-1$,
then as before the boundary value for this class is actually no larger than $\seglen(v_{i,j})2^{k/b_i}$,
as interval of $C$ spans inside just a single level, and this level was subdivided into $b_i/k$ classes.
Now assume that $C$ is a class constructed by merging $r$ preclasses of rank between $rb_i$ and $2rb_i-1$.
Then $\level$ of two children in $C$ must differ by less than $r$, which means that their 
subtrees' sizes differ by less than a factor $2^r$.
Thus, the boundary value for $C$ is less than $\seglen(v_{i,j})2^r(1+1/b_i)2^{r/\log{n}}\prod_{z=\level(v_{i,j})}^{\level(v_{i,j})+r-1} 2^{14c\log{z}/z}$.
By assumption on the light weight of nodes, which guarantees that $b$ is always at least $6$,
and as $\level(u)$ is not less than $\log{\lightweight(u)}$,
$\prod_{z=\level(v_{i,j})}^{\level(v_{i,j})+r-1} 2^{14c\log{z}/z} \leq \prod_{z=\level(v_{i,j})}^{\level(v_{i,j})+r-1} 2^{14/6} \leq 2^{14r/6}$. 
Furthermore, $(1+1/b_i) < 2$ and $2^{r/\log{n}} < 2$ as $r < \log{n}$.
Therefore the boundary value of $C$ is less than $\seglen(v_{i,j})2^{r+2+14r/6} < \seglen(v_{i,j})2^{6r}$.
As $C$ is constructed by merging preclasses of rank at least $rb_i$,
then $\level(v_{i,j}) \leq l - rb_i$, so we have that $k \geq rb_i$ and
finally the boundary value for $C$ is at most $\seglen(v_{i,j})2^{6k/b_i}$.
\end{proof}

\begin{algorithm}
\begin{algorithmic}[1]
  \Function{Assign-$\seglen$}{$P$}
  \State \textbf{Input:} heavy path $P = u_1,\ldots,u_p$.
  \State \textbf{Requirement:} all nodes in the subtrees hanging off $P$ are already processed.
  \State \textbf{Output:} $\rt(u_i)$ and $\seglenb(v_{i,j})$ values for every $u_{i}$.
  \State $c$ --- fixed constant
  \State
  \State $r \gets 0$
  \For{$i=1 \ldots p$}	
  	\State \textcolor{blue}{$b_i \gets \log{\lightweight(u_i)}/c\log{\log{\lightweight(u_i)}}$}
  	\Comment{Dynamic value of $b$}
    \State $C \gets$ \Call{Construct-classes}{$u_i$, $b_i$}
    \For{$j=1 \ldots \degree(u_i)-1$}
      \State Set $\seglenp(v_{i,j})$ to be the boundary value of the class of $v_{i,j}$ in $C$
    \EndFor
	\State \textcolor{blue}{$segment_i \gets 2\lightweight(u_i)$}
	\Comment{Interval reserved for $\start(u_i)$}
    \State $G \gets$ \Call{Construct-groups}{$u_i$, $b_i$}
    \For{$j=1 \ldots \degree(u_i)-1$}
      \State Set $\seglenb(v_{i,j})$ to be size of the largest child in group of $v_{i,j}$ in $G$
      \State $segment_i \gets segment_i + \seglenb(v_{i,j})$
    \EndFor
    \State $r \gets r + segment_i$    
    \State Set $\rt(u_i)$ as the description of rounded groups
  \EndFor
  \State \textcolor{blue}{\textbf{Guaranteed property:} $r \leq 2|T_{u_1}|\level(u_1)2^{(\level(u_1)-1)/\log{n}}\prod_{k=1}^{\level(u_1)}2^{14c\log{k}/k}$}
  \State \textcolor{blue}{$\seglen(u_1) \gets 2|T_{u_1}|\level(u_1)2^{\level(u_1)/\log{n}}\prod_{k=1}^{\level(u_1)}2^{14c\log{k}/k}$}
  \EndFunction
\end{algorithmic}
\caption{First phase of the modified encoder, assigning $\seglenb$ and $\rt$ on a heavy path.
We do not include the change to the boundary values in function \textsc{Construct-classes} there.}
\label{alg:1phaseagain}
\end{algorithm}

Pseudocode for the first phase of the modified encoder is presented in Algorithm~\ref{alg:1phaseagain}.
It can be seen that with a segment of length $\sum_{i=1}^{p} (2\lightweight(u_i) + \sum_{j\geq 1} \seglenb(v_{i,j}))$,
it will be possible for the encoder to properly assign IDs in the whole $T_{u_1}$.
We need a bound on this value.

\begin{lemma}
$\seglen(u_1) = 2|T_{u_1}|\level(u_1)2^{\level(u_1)/\log{n}} \prod_{k=1}^{\level(u_1)}2^{14c\log{k}/k}$
is enough for the encoder to properly assign IDs in the whole $T_{u_1}$.
\label{lem:size2}
\end{lemma}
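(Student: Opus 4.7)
The plan is to adapt the inductive argument of Lemma~\ref{lem:span1} to the new setting where each node $u_i$ uses its own parameter $b_i = \log\lightweight(u_i)/(c\log\log\lightweight(u_i))$ and the $\bound$ rounding factor is the uniform $2^{1/\log n}$. I would induct on the light depth of $u_1$; the base case of a heavy path with no light subtrees hanging off it fits trivially in its bound, and for the inductive step I may assume
\[\seglen(v_{i,j}) \leq 2|T_{v_{i,j}}|\level(v_{i,j}) 2^{\level(v_{i,j})/\log n} \prod_{k=1}^{\level(v_{i,j})} 2^{14c\log k/k}\]
for every subtree $T_{v_{i,j}}$ hanging off the path $P = u_1, \ldots, u_p$.

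The quantity to bound is $r = \sum_{i=1}^p \bigl(2\lightweight(u_i) + \sum_{j \geq 1} \seglenb(v_{i,j})\bigr)$, after which one extra factor of $2^{1/\log n}$ must fit for the final $\bound$ rounding. The first key step is to combine Lemma~\ref{lem:roundclasses2} with the straightforward analogue of Lemma~\ref{lem:potentials} (whose proof transfers verbatim, as it only relies on $b_i \geq 6$) to obtain $\seglenb(v_{i,j}) \leq \seglen(v_{i,j}) \cdot 2^{14(l_i - \level(v_{i,j}))/b_i}$, where $l_i = \min(\lfloor \log\lightweight(u_i) \rfloor + 1, \level(u_i))$. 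Plugging in the inductive bound, summing over light children of $u_i$ using $\sum_j |T_{v_{i,j}}| \leq \lightweight(u_i)$, and summing over $i$ using $p + \sum_i \lightweight(u_i) \leq |T_{u_1}|$, would then yield $r \leq 2|T_{u_1}|\level(u_1) 2^{(\level(u_1)-1)/\log n} \prod_{k=1}^{\level(u_1)} 2^{14c\log k/k}$; the extra $2^{1/\log n}$ factor from the $\bound$ rounding simply promotes the exponent from $\level(u_1)-1$ to $\level(u_1)$, matching the stated $\seglen(u_1)$.

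The main obstacle is the absorption step: one must verify that the per-child rounding factor $2^{14(l_i - \level(v_{i,j}))/b_i}$ is dominated by the new product terms $\prod_{k=\level(v_{i,j})+1}^{l_i} 2^{14c\log k/k}$ generated when extending the product in the inductive bound upward to $\level(u_1)$. Substituting the definition of $b_i$, this reduces to checking
\[(l_i - \level(v_{i,j})) \frac{\log\log\lightweight(u_i)}{\log\lightweight(u_i)} \leq \sum_{k=\level(v_{i,j})+1}^{l_i} \frac{\log k}{k}.\]
I would verify this by splitting on whether $\level(v_{i,j}) \geq l_i/2$: in the former case each summand is $\Theta(\log l_i / l_i)$ and there are $l_i - \level(v_{i,j})$ of them, matching the left-hand side up to a constant, while in the latter case the right-hand side is $\Omega((\log l_i)^2)$ whereas the left-hand side is at most $\Oh(\log l_i)$ since $l_i \leq \log\lightweight(u_i) + 1$. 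Any sufficiently large constant $c$ therefore makes the inequality hold with room to spare, which closes the induction and also clarifies why the constant $c$ had to be chosen large in Claim~\ref{prop:size2}.
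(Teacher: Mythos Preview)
Your proposal follows exactly the paper's approach: induct on light depth, combine Lemma~\ref{lem:roundclasses2} with Lemma~\ref{lem:potentials} to get $\seglenb(v_{i,j}) \le \seglen(v_{i,j})2^{14(l_i-\level(v_{i,j}))/b_i}$, plug in the inductive bound, absorb the factor into the new product terms, and leave one $2^{1/\log n}$ for the $\bound$ rounding. The paper carries out precisely this chain of inequalities (and in fact leaves the ``absorption'' step you highlight as a one-line implicit step).

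One small correction to your verification of that absorption step. After substituting $b_i=\log\lightweight(u_i)/(c\log\log\lightweight(u_i))$, the factor $c$ appears on \emph{both} sides (once from $1/b_i$ on the left and once from the exponent $14c\log k/k$ on the right) and cancels, so enlarging $c$ does not buy you anything here. The inequality you wrote down,
\[
(l_i-\level(v_{i,j}))\,\frac{\log\log\lightweight(u_i)}{\log\lightweight(u_i)} \;\le\; \sum_{k=\level(v_{i,j})+1}^{l_i}\frac{\log k}{k},
\]
has to hold on its own. It does, for the reason you already identified: $l_i\le\lfloor\log\lightweight(u_i)\rfloor+1$, so $\log\log\lightweight(u_i)/\log\lightweight(u_i)\le \log k/k$ for every $k\le l_i$ once $\lightweight(u_i)$ is large enough (which is exactly the standing assumption). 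Your case split is fine as an argument; just drop the appeal to $c$.
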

\begin{proof}
By Lemma~\ref{lem:roundclasses2}, Lemma~\ref{lem:potentials} and induction on the light depth of a node.
\newpage Let $l_i = \min(\lfloor \log{\lightweight(u_i)} \rfloor + 1, \level(u_i))$,
and $b_i = \log{\lightweight(u_i)}/c\log{\log{\lightweight(u_i)}}$, then we have:
\begin{align*}
& \sum_{i=1}^{p} (2\lightweight(u_i) + \sum_{j\geq 1} \seglenb(v_{i,j})) 
 \leq \sum_{i=1}^{p} (2\lightweight(u_i) + \sum_{j\geq 1} \seglen(v_{i,j})2^{14(l_i-\level(v_{i,j}))/b_i}) \\
&\leq 2|T_{u_{1}}| + \sum_{i=1}^{p} \sum_{j\geq 1} 2|T_{v_{i,j}}|\level(v_{i,j})2^{\level(v_{i,j})/\log{n}}2^{14(l_i-\level(v_{i,j}))/b_i}\prod_{k=1}^{\level(v_{i,j})}2^{14c\log{k}/k} \\
&\leq 2|T_{u_{1}}| + (\level(u_1)-1) 2^{(\level(u_1)-1)/\log{n}} \sum_{i=1}^{p} \sum_{j\geq 1} 2|T_{v_{i,j}}| \prod_{k=1}^{l_i} 2^{14c\log{k}/k} \\
& \leq 2|T_{u_1}|\level(u_1)2^{(\level(u_1)-1)/\log{n}} \prod_{k=1}^{\level(u_1)}2^{14c\log{k}/k}
\end{align*}
Then, increase by a single factor of $2^{1/\log{n}}$ is needed for $\bound$ rounding.
\end{proof}

Thus in the first phase, at the end of processing of a heavy path, $\seglen(u_1)$ is set as follows as to satisfy the bound from 
Claim~\ref{prop:size2} :
\[ 2|T_{u_1}|\level(u_1)2^{\level(u_1)/\log{n}} \prod_{k=1}^{\level(u_1)}2^{14c\log{k}/k} . \]

\begin{algorithm}[h]
\begin{algorithmic}[1]
  \Function{Assign-IDs}{$P,s$}
  \State \textbf{Input:} heavy path $P = u_1,\ldots,u_p$. $c$ is a fixed constant, $\log{n}$ is known. 
  \State \textbf{Output:} $\start(u)$, $\bound(u)$ for every $u \in T_{u_1}$, $\start(u) \in [s,s + \seglen(u_1))$, $\nodespan(u_1) \leq \seglen(u_1)$.
  \State
  \State $A \gets s$
  \For{$i=1 \ldots p$}
    \State \textcolor{blue}{$b \gets \log{\lightweight(u_i)}/c\log{\log{\lightweight(u_i)}}$}
  	\State \textcolor{blue}{$A \gets A + 2\lightweight(u_i) - 1$}
  	\Comment{Reserving an additional interval}
  	\State \textcolor{blue}{$l \gets 2^{\lceil \log{\lightweight(u_i)} \rceil}$}
    \State \textcolor{blue}{$\start(u_i) \gets \lfloor A / l \rfloor \cdot l$}
    \Comment{$\start$ value gets enough trailing zeroes}
    \State \textcolor{blue}{$A \gets \start(u_i)+1$}
    \For{$j=1 \ldots \degree(u_i)-1$}
      \State \Call{Assign-IDs}{$P'$, $A$}, where $P'$ is a heavy path with $v_{i,j}$ as the head
      \State $A \gets A + \seglenb(v_{i,j})$
    \EndFor
  \EndFor
  \For{$i=1 \ldots p$}
  	\State \textcolor{blue}{Let $t$ be the smallest natural number such that $\lfloor 2^{t/\log{n}} \rfloor + \start(u_i) \geq A$}
    \State \textcolor{blue}{$\bound(u_i) \gets \lfloor 2^{t/\log{n}} \rfloor$}
  \EndFor
  \EndFunction
\end{algorithmic}
\caption{Second phase of the modified encoder, recursively assigning $\start$ and $\bound$ values.}
\label{alg:2phaseagain}
\end{algorithm}

\paragraph{Second phase --- creating labels.}
The second phase is very similar to the one from Section~\ref{sec:improved2}.
We start in the root with the accumulator set to $0$, and then process heavy paths in a top-down manner.
At node $u_i$ on a heavy path and with the accumulator being $A$, the procedure sets $\start(u_{i})$ to be the greatest value with at least $\lceil \log{\lightweight(u_i)} \rceil$ trailing zeroes not bigger than $A+2\lightweight(u_i)-1$, and then sets $A=\start(u_{i})+1$.
Next, we iterate over the light children $v_{i,1},v_{i,2},\ldots$ of $u_{i}$.
For each $v_{i,j}$ we call the recursive procedure with a copy of the current accumulator as an argument,
and then increase the accumulator by $\seglenb(v_{i,j})$.
Finally, we need to set proper $\bound(u_{i})$ for every $i=1,2,\ldots,p$,
so that $\start(u_{i})+\bound(u_{i})$ is at least as large as the final accumulator.
If the initial value of the accumulator for a heavy path $P$ is $s$,
then the final accumulator is $A = s + d$, where $d \leq \sum_{i=1}^{p} (2\lightweight(u_i) + \sum_{j\geq 1}\seglenb(v_{i,j}))$.
Finally, because $\start(u_i) \geq s$  and the possible values of $\bound(u_{i})$ are all numbers of the form $\lfloor 2^{t/\log{n}}\rfloor$,
we can always adjust $t$ so that $\start(u_{i}) + \bound(u_{i}) \leq s + d 2^{1/\log{n}}$.
Thus, by construction we obtain that $\nodespan(u_{1}) \leq \seglen(u_{1})$.
See Algorithm~\ref{alg:2phaseagain}.

\subsection{Resulting labels}
\paragraph{Final structure of a label.}
As in Section~\ref{sec:improved2}. Note that the stored $\log{\lightweight(u_i)}$ allows recovering $b$.
The number of trailing zeroes in $\start(u)$ is saved separately, and these zeroes are cut off
from the binary representation.

\paragraph{Length of a label.}
By Lemma~\ref{lem:size2} we have $\nodespan(\treeroot(T)) \leq \seglen(\treeroot(T)) = 
4n\log{n} \prod_{k=1}^{\log{n}}2^{14c\log{k}/k}$.
As we are interested in the logarithm of this value that affects the length of $\start$, we calculate:
\begin{align*}
\log{(\prod_{k=1}^{\log{n}}2^{14c\log{k}/k})} = \Oh(\sum_{k=1}^{\log{n}}\log{k}/k) = \Oh((\log{\log{n}})^2).
\end{align*}
Note that the rounding is not evenly distributed among levels. Inside small subtrees, where we cannot provide
many trailing zeroes, the values of $b$ are relatively small and thus roundings are quite large deep in the tree.
Regardless of that, every $\start$ could be stored on $\log{n} + \Oh((\log{\log{n}})^2)$ bits, but
every $\start(u)$ is guaranteed to have $\log{\lightweight(u)}$ trailing zeroes,
so we actually store this value on $\log{n} + \Oh((\log{\log{n}})^2) - \log{\lightweight(u)}$ bits.
Clearly, $|\bound(u)| = \Oh(\log{\log{n}})$.
For $b$ values, we use $c=2$, or rather exactly $b=\log{\lightweight(u)}/2(\log{\log{\lightweight(u)}}+3)$.
As the number of bits used to describe the routing table is at most twice the number of groups, and there are at most 
$b(\log{\log{\lightweight(u)}}+3)$ of them, $|\rt(u)| \leq \log{\lightweight(u)}$.
Overall, the labels consist of $\log{n} + \Oh((\log{\log{n}})^2)$ bits.

\subsection{Decoder}
\paragraph{Decoder.}
Decoding proceeds as in Section~\ref{sec:improved2}, after retrieving the value of $b$.

\paragraph{Light weight assumption.}
At the beginning of Section~\ref{sec:improved2}, we assumed that for every node $u$ $\lightweight(u) \geq c'$ for some constant $c'$.
As hinted, it could be realised in the following way.
Firstly, the encoder adds $c'+1$ artificial leaves to every node of the input tree.
Then the assumption holds for every original node, but the encoder needs to handle these additional nodes.
In a heavy path decomposition, artificial nodes fall into two categories.
Most of them are heads of heavy paths of length one, and remaining ones are bottom nodes
at the ends of some longer heavy paths.
Both can be handled very similarly, as they have no children to process.
For leaves, in the Algorithm~\ref{alg:1phaseagain}, $\rt(u)$ is set to null 
(we can use additional single bit in the label indicating whether $u$ does have any children),
and if an artificial node $u$ is a head of a heavy path then the encoder sets $\seglen(u)=1$.
In Algorithm~\ref{alg:2phaseagain}, ID of an artificial node $u$ is set to be
the current value of the accumulator ($\start(u) \gets A$), $\bound(u)$ is set to $1$,
and the accumulator is increased by one.

As we increase size of an input tree by a constant factor and our labeling scheme uses labels of length $\log{n}+\Oh((\log{\log{n}})^2)$,
adding artificial children increases the length of a label only by $\Oh(1)$ bits.
Moreover, as a labeling scheme is canonical, ports leading to the artificial children of a node $u$
are $\deg(u)+1,\ldots,\deg(u)+c'$.
Thus, if as the last step the encoder disregards the labels of artificial nodes and edges,
then we are left with a correct labeling for the input tree.
The decoder does not change.
Note that the concept of adding artificial leaves is used purely for simplicity of the analysis.

\section{Simple extensions}
\label{sec:extensions}
In Section~\ref{sec:preliminary} we presented a simple routing labeling scheme for bounded degree trees
with better lengths than general scheme.
Another extension is the case of a small depth.

\begin{corollary}
For trees of bounded depth there is a routing labeling scheme with labels of length $\log{n}+\Oh(\log{\log{n}})$ bits.
\end{corollary}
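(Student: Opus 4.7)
The plan is to invoke the intermediate labeling scheme from Section~\ref{sec:improved2} essentially verbatim, but with the parameter $b$ set to a sufficiently large constant (say $b = 6$) instead of the previous choice $b = \sqrt{\log n / \log \log n}$. The key observation is that when the depth of the input tree is at most a constant $D$, every heavy path head $u_1$ satisfies $\level(u_1) \leq D$, and hence the bound
\begin{align*}
\seglen(u_1) = |T_{u_1}| \cdot 2^{12 \level(u_1)/b}
\end{align*}
from Lemma~\ref{lem:span1} collapses to $O(|T_{u_1}|)$. Applied at the root, this gives $\nodespan(\treeroot(T)) = O(n)$, so every $\start(u)$ fits in $\log n + O(1)$ bits.

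The remaining parts of the label each cost only $O(\log \log n)$ bits: $\bound(u) = \lfloor 2^{t/b}\rfloor$ stored as $t$ with $t = O(b \log n)$ uses $O(\log \log n)$ bits; the routing table $\rt(u)$, encoded via Proposition~\ref{fact:plot}, has size $O(b \log \log \lightweight(u)) = O(\log \log n)$ because $b$ is a constant; and the metadata $\lfloor \log \lightweight(u) \rfloor$, $\level(u)$, and $\lceil \log (\deg(u)-1) \rceil$ each take $O(\log \log n)$ bits. Summing these contributions yields labels of length $\log n + O(\log \log n)$.

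Two bookkeeping points have to be verified. First, Lemmas~\ref{lem:roundclasses} and~\ref{lem:potentials} are used in the argument, and inspecting their proofs shows that both go through as long as $b \geq 6$, which is our choice. Second, the standing assumption $\lightweight(u) \geq 4$ can be enforced by prepending a constant number of artificial leaves to every node, exactly as at the end of Section~\ref{sec:not_so_final}; this multiplies $n$ by a constant and thus affects the label length by only an additive $O(1)$. The encoder and decoder from Section~\ref{sec:improved2} are reused without modification and the scheme remains canonical. There is no serious obstacle here: the whole argument is essentially a re-parameterization of the intermediate scheme which exploits the fact that the factor $2^{12 \level(u_1)/b}$, previously the main source of rounding loss, is now bounded by a constant.
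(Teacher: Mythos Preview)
There is a genuine gap: you have confused $\level(u)$ with the depth of $u$. Recall from Section~\ref{sec:preliminaries} that $\level(u) = \lfloor \log |T_u| \rfloor$ measures the size of the subtree rooted at $u$, not its distance from the root. A star on $n$ nodes has depth~$1$, yet its root has level $\lfloor \log n \rfloor$. Hence the claim that every heavy path head $u_1$ satisfies $\level(u_1) \leq D$ is simply false, and with constant $b$ the formula in Claim~\ref{prop:size1} (which the encoder actually \emph{sets}, not merely bounds) gives $\seglen(\treeroot(T)) = n \cdot 2^{12 \lfloor \log n \rfloor / b} = \Theta(n^{1+12/b})$, forcing $\start$ to occupy $(1+12/b)\log n$ bits rather than $\log n + \Oh(1)$.

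The deeper issue is that the rounding in Lemmas~\ref{lem:roundclasses} and~\ref{lem:potentials} is governed by the \emph{level difference} $l - \level(v_{i,j})$, which can be $\Theta(\log n)$ already at a single node of a depth-$1$ tree; with constant $b$ one such step inflates the span by a polynomial factor. What bounded depth \emph{does} buy is a bound on the light depth: at most $D$ light edges lie on any root-to-leaf path, so rounding is applied at most $D$ times. The paper exploits this, but through the scheme of Section~\ref{sec:not_so_final}, where $b$ is chosen per node as $b_i = \Theta(\log \lightweight(u_i) / \log\log \lightweight(u_i))$. Since $l_i \leq \log \lightweight(u_i) + 1$, the exponent $(l_i - \level(v_{i,j}))/b_i$ is $\Oh(\log\log n)$ regardless of the level gap, so each of the at most $D$ rounding steps contributes only a controlled factor. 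That keeps $\start$ within $\log n + \Oh(\log\log n)$ bits, while the remaining label parts were already that small.
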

\begin{proofs}Only cause of label size exceeding $\log{n}+\Oh(\log{\log{n}})$ for the scheme 
from Section~\ref{sec:not_so_final} is length of $start(u)$, enforced by rounding accumulated over levels.
But for trees of bounded depth, as $b$ is always at least 1,
segments' lengths in the described scheme can be adjusted to constraint overall roundings to a factor of $\Oh(2^{d})$, where $d$ is the depth of a tree.
\end{proofs}

\paragraph{Routing with larger local memory.}
A labeling scheme provides a symmetric solution for the routing problem while using little space, in which every node of the network
has an assigned label and during a routing query the decoder gets just labels of the current node and the destination node.
A label of the destination node serves as its address and travels through the network, in the header attached to a packet,
and the label of a node is stored in its a local memory as its identifier.
Typically, relaxation of this setting is used, allowing nodes to store locally not only a small label, but (much) more bits.
In such case, a further decrease in length of a label (address) attached to a packet is possible.

\begin{definition}
A routing labeling scheme \emph{with local tables} for a family of rooted trees $\mathcal{T}$ consists of an encoder and a decoder.
The encoder takes a tree $T\in \mathcal{T}$, then assigns a label $\nodelabel(u)$ and a local table $\lt(u)$ to every node $u\in T$.
Edges are numbered with port numbers as in the regular designer-port routing labeling scheme.
The decoder receives $\lt(u)$ and $\nodelabel(w)$, such that $u,w\in T$ for some $T\in \mathcal{T}$ and $u \not = w$.
The decoder should return the port number corresponding to the first edge on the path from $u$ to $w$. 
We are interested in minimising the maximum length of a label, and minimising the maximum length
of the local table as a secondary objective.
\end{definition}

Our main result gives a routing labeling scheme with both local tables and labels of size $\log{n}+\Oh((\log{\log{n}})^2)$.
Here we give two other trade-offs further decreasing size of a label at the expense of increasing local space used.

Note that routing labeling scheme from Section~\ref{sec:improved2} possibly makes use of every part of label $\nodelabel(u)$,
but checks only the value of $\start(w)$ from $\nodelabel(w)$.
Thus we can construct a routing labeling scheme with local tables by first running the encoder 
described in Section~\ref{sec:improved2} and then, for every node $u$,
setting $\lt(u)$ to be returned label of $u$ and storing as shrunk $\ell(u)$ only $\start(u)$.
Recall that this labeling scheme does not use hiding information by compressing trailing zeroes.
Additionally, there are no parts of the new label to be separated, so we can just set $\nodelabel(u)=\start(u)$.
This way, setting $b=\log{n}/\log{\log{n}}$, we achieve $\nodespan(\treeroot(T)) \leq n2^{12\log{n}/b}$, and the size of $\rt$ is $\Oh(\log{n})$.

\begin{corollary}
There is a routing labeling scheme with local tables having label length bounded by $\log{n}+\Oh(\log{\log{n}})$ and local tables size bounded by $\Oh(\log{n})$.
\label{cor:loc1}
\end{corollary}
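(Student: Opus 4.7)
The plan is to reuse the encoder of Section~\ref{sec:improved2} verbatim, but choose the parameter $b$ more aggressively and split its output between the transmitted label and the local table. Concretely, I would fix $b = \log n / \log\log n$, run the encoder of Section~\ref{sec:improved2} to obtain the label $\nodelabel'(u)$ consisting of the parts $\start(u), \bound(u), \rt(u), \lfloor \log \lightweight(u)\rfloor, \level(u), \lceil \log(\deg(u)-1)\rceil$, and then define the new scheme by $\lt(u) := \nodelabel'(u)$ and $\nodelabel(u) := \start(u)$.

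The first key observation is that the decoder of Section~\ref{sec:improved2}, as described there, only ever inspects $\start(w)$ from the destination's label and uses all other parts (routing table, bound, level, etc.) exclusively from the source's label. This is exactly the asymmetry that the ``with local tables'' model is designed to exploit: every piece of information we are dropping from $\nodelabel(w)$ is still available locally at $u$ as $\lt(u)$, so correctness of the decoder carries over unchanged. Since no trailing-zero compression from Section~\ref{sec:not_so_final} is used, the parts are self-delimiting via the simple encoding from Section~\ref{sec:preliminaries}, and $\start(u)$ alone needs no further framing.

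Next I would bound the lengths. By Lemma~\ref{lem:span1} we have
\[
\nodespan(\treeroot(T)) \le |T|\cdot 2^{12\log n / b} = n\cdot 2^{12 \log\log n},
\]
so $|\start(u)| \le \log n + 12\log\log n = \log n + \Oh(\log\log n)$, which gives the claimed bound on $|\nodelabel(u)|$. For the local table, $|\rt(u)| = \Oh(b\log\log n) = \Oh(\log n)$, and the remaining components $\bound(u)$, $\level(u)$, $\lfloor\log\lightweight(u)\rfloor$, $\lceil\log(\deg(u)-1)\rceil$ each occupy $\Oh(\log\log n)$ bits, so $|\lt(u)| = \Oh(\log n)$ in total.

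There is no real obstacle here; the only thing to double-check is the asymmetric-access property of the Section~\ref{sec:improved2} decoder, which is immediate from its pseudocode: the test $\start(w) \in [\start(u), \start(u)+\bound(u))$ and the subsequent walk over $\rt(u)$'s groups comparing $\start(w) - \start(u)$ to prefix sums of $\seglenb$ values read nothing else from $\nodelabel'(w)$. Hence the construction satisfies both stated size bounds and correctness follows directly from the corresponding guarantee of the Section~\ref{sec:improved2} scheme.
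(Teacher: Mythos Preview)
Your proposal is correct and matches the paper's own argument essentially line for line: the paper also runs the Section~\ref{sec:improved2} encoder with $b=\log n/\log\log n$, observes that the decoder reads only $\start(w)$ from the destination, sets $\nodelabel(u)=\start(u)$ and $\lt(u)$ to the full label, and then quotes $\nodespan(\treeroot(T))\le n\,2^{12\log n/b}$ together with $|\rt(u)|=\Oh(b\log\log n)=\Oh(\log n)$ to obtain the stated bounds.
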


\noindent Similarly, setting $b=\log{n}$ we get the following:

\begin{corollary}
There is a routing labeling scheme with local tables having label length bounded by $\log{n}+\Oh(1)$ and local tables size bounded by $\Oh(\log{n}\log{\log{n}})$.
\label{cor:loc2}
\end{corollary}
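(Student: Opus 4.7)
The plan is to reuse the encoder of Section~\ref{sec:improved2} essentially unchanged, but instantiated with the parameter $b=\log n$, and then split the resulting label into a part that travels (the shrunk $\nodelabel(u)$) and a part that stays locally (the table $\lt(u)$), exactly in the spirit of the argument preceding Corollary~\ref{cor:loc1}.

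First, I would recall the key structural feature of the Section~\ref{sec:improved2} scheme that is already exploited for Corollary~\ref{cor:loc1}: during a routing query, the decoder only ever reads $\start(w)$ from the destination's label, while $\bound(u)$, $\rt(u)$, $\lfloor\log\lightweight(u)\rfloor$, $\level(u)$ and $\lceil\log(\deg(u)-1)\rceil$ are all consulted for the source alone. This asymmetry means I am free to move every field except $\start$ out of $\nodelabel$ and into $\lt$ without breaking correctness. Accordingly, I would set $\nodelabel(u)=\start(u)$ and place the full Section~\ref{sec:improved2} label into $\lt(u)$.

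Next I would redo the size accounting with $b=\log n$. From Lemma~\ref{lem:span1} we have $\nodespan(\treeroot(T))\le n\cdot 2^{12\log n/b}$, which for $b=\log n$ is $n\cdot 2^{12}=\Oh(n)$. Hence $\start(u)$, and therefore the whole $\nodelabel(u)$, fits in $\log n+\Oh(1)$ bits, which is the first claimed bound. For the local table, the dominant contribution is the routing table $\rt(u)$, whose size in Section~\ref{sec:improved2} is $\Oh(b\log\log\lightweight(u))=\Oh(b\log\log n)$; with $b=\log n$ this is $\Oh(\log n\log\log n)$. The other fields ($\bound$, $\level$, $\lfloor\log\lightweight(u)\rfloor$, $\lceil\log(\deg(u)-1)\rceil$, plus the copy of $\start(u)$ itself) contribute only $\log n+\Oh(\log\log n)$ bits, which is absorbed. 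So $|\lt(u)|=\Oh(\log n\log\log n)$.

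There is no real obstacle here; the only thing worth double-checking is that the Section~\ref{sec:improved2} analysis still goes through at $b=\log n$, i.e.\ that the hypothesis $b\ge 6$ used in Lemma~\ref{lem:roundclasses} and Lemma~\ref{lem:potentials} is satisfied for $n$ large enough (it trivially is), and that the light-weight assumption from that section is handled in the same way as for the main theorem. Once that is noted, the decoder is literally the Section~\ref{sec:improved2} decoder, fed with $\lt(u)$ in place of $\nodelabel(u)$ and $\nodelabel(w)=\start(w)$, and canonicity is inherited verbatim.
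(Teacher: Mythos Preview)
Your proposal is correct and follows essentially the same route as the paper: instantiate the Section~\ref{sec:improved2} scheme with $b=\log n$, keep only $\start(u)$ as the traveling label, and put the rest in $\lt(u)$. The paper states this in a single line (``Similarly, setting $b=\log n$\ldots''), and your size computations for $\start(u)$ and $\rt(u)$ match exactly.
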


\section{Lower bound of $\log{n}+\Omega(\log{\log{n}})$}
\label{sec:lower_bound}
We first describe the well-known lower bound for ancestry labeling schemes, as the
lower bound for routing labeling schemes is just its minor adjustment.
We fix number of nodes, $n$, and define $T_i$ to be a tree consisting of $i$ paths of roughly
equal length hanging off a root.

Firstly, observe that all labels must be different, because every node is its own ancestor, and $u=w$ is the only case where the
answer for both queries $(\ell(u),\ell(w))$ and $(\ell(w),\ell(u))$ is positive.
We want to consider the assignment of labels by the encoder for $T_1,T_{2},\ldots$,
keeping track of the number of distinct labels that have already been used on some nodes.
For $T_1$ there is one path, and every node has a different label.
We count and mark all these labels.
For $T_2$, we want to argue that already marked labels may appear only on one of the two paths.
Indeed, assume that two marked (and necessarily different) labels $\ell_1$ and $\ell_2$ appeared on different paths.
But as they are marked, they appeared on one path in $T_1$, which means that the decoder has to
answer positively ancestry query for these two labels, in one or another order.
This is a contradiction, as the decoder must not answer positively ancestry query for two nodes from different paths.
Thus, one of the paths in $T_2$ must have been assigned only unmarked labels.
We count and mark all these new labels.
In general, considering $T_i$, there were labels from $i-1$ paths counted and marked in the previous trees.
This means that the encoder has to assign unmarked labels to at least one of the paths in $T_i$.
We choose one such a path, then count and mark labels from it.
At the end there are $\sum_{i=1}^{n} \lfloor n/i \rfloor = \Omega(n\log{n})$ different marked labels,
which leads to the desired lower bound on length of a label.

To move to a lower bound for routing, we just note that in considered trees every node
except the root has degree of at most 2.
Thus, we can artificially augment the labels created by a labeling scheme for routing in such trees, 
adding a single bit denoting which of at most two ports leads to a parent of a node.
This effectively allows answering ancestry queries, as $u$ is an ancestor of $w$ iff decoder is
answering a query $(\ell(u),\ell(w))$ with a port not leading to the parent of $u$.
If the decoder for a routing labeling scheme does not have to provide answer to queries in form of $(u,u)$,
we need some other way to enforce uniqueness of labels on paths.
It can be done by attaching a single dummy node to every node on a path.
Finally, we notice that we could use trees with degree two, by expanding high-degree root into a binary tree,
adding only a linear number of new vertices.
This gives us the following:

\begin{theorem}
Any labeling scheme for routing in trees on $n$ nodes, even with degree bounded by 2,
needs labels consisting of $\log{n} + \Omega(\log{\log{n}})$ bits.
\end{theorem}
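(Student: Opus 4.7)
My plan is to derive the bound in two stages: first establish the $\log n + \Omega(\log\log n)$ lower bound for the easier problem of ancestry labeling on a carefully chosen family of trees, and then transfer it to routing with a short reduction that preserves the degree-$\le 2$ constraint.

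For the ancestry lower bound, I would fix $n$ and consider, for each $i\in\{1,2,\ldots,n\}$, the tree $T_i$ consisting of $i$ root-to-leaf paths of length roughly $n/i$ hanging off a common root. The key observation is that any ancestry scheme must assign pairwise distinct labels, since the only pair $(u,w)$ for which both $(\ell(u),\ell(w))$ and $(\ell(w),\ell(u))$ evaluate to ``ancestor'' is $u=w$. I would then process $T_1,T_2,\ldots$ in order and maintain a set of ``marked'' labels --- those that have already appeared on some node in some previously considered tree. The crucial claim is that, when processing $T_i$, at least one of its $i$ paths must consist entirely of previously unmarked labels: otherwise two distinct marked labels $\ell_1,\ell_2$ would appear on two different paths of $T_i$, but by virtue of being marked they both previously appeared together on a single root-to-leaf path, forcing the decoder to have declared one an ancestor of the other --- a contradiction with their placement in $T_i$. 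Marking the labels on that fresh path contributes $\Omega(n/i)$ new labels, so summing gives $\sum_{i=1}^{n}\lfloor n/i\rfloor = \Omega(n\log n)$ distinct labels in total, and hence some label of length $\log n + \Omega(\log\log n)$.

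For the reduction to routing, I would observe that every node in the trees $T_i$ above (apart from the root) has degree at most $2$, and the root can be replaced by a balanced binary tree on the $i$ subpaths, adding only $O(n)$ new nodes and keeping the total size $\Theta(n)$ and the maximum degree $\le 2$. Given a routing scheme for this family, I would artificially extend each node's label with a single extra bit encoding which of its at most two ports leads to the parent; with this one bit the decoder can simulate ancestry queries, since $u$ is an ancestor of $w$ exactly when the routing decoder on $(\ell(u),\ell(w))$ returns a port different from the one marked as leading to the parent. To cope with the fact that a routing decoder is only required to answer queries with $u\ne w$ (so labels in a routing scheme need not be distinct), I would hang a single dummy leaf off every non-root node so that each internal routing query still forces label-distinctness along a path. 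Combining the previous ancestry lower bound with this $O(1)$-bit overhead yields the claimed bound for routing.

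The main delicate point, and where I would slow down in writing this up, is verifying that the transformations preserve the lower bound quantitatively: padding the trees with dummy leaves and replacing the root by a binary tree multiplies $n$ by at most a constant factor, so $\log n + \Omega(\log\log n)$ remains $\log n + \Omega(\log\log n)$; and the single augmentation bit per node contributes only $O(1)$ to the label length, which is absorbed in the $\Omega(\log\log n)$ term. Once these bookkeeping steps are carefully checked, the theorem follows directly from the ancestry argument.
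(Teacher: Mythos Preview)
Your approach is essentially the paper's: the same family $T_i$, the same marking argument, the single-bit augmentation reducing routing to ancestry, the dummy-leaf trick to force label distinctness along paths, and the replacement of the high-degree root by a binary tree to get the degree-$2$ statement.

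There is one genuine imprecision in your justification of the crucial claim. You assert that if two marked labels $\ell_1,\ell_2$ land on different paths of $T_i$, then ``by virtue of being marked they both previously appeared together on a single root-to-leaf path''. This is false in general: $\ell_1$ may have been marked while processing $T_3$ and $\ell_2$ while processing $T_7$, with no tree in which they ever sat together on one path. The argument the paper actually uses is structural: after handling $T_1,\ldots,T_{i-1}$ the marked set is a union of $i-1$ \emph{chains} (one per freshly marked path in each earlier tree), and any single chain can meet at most one path of $T_i$, since two labels in a chain are ancestry-comparable and hence cannot be assigned to incomparable nodes. A pigeonhole over these $i-1$ chains against the $i$ paths of $T_i$ then gives one entirely fresh path. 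Once you replace your sentence with this argument, the proof goes through exactly as you outlined.
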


\section{Conclusions}

We have designed a labeling scheme for routing in trees on $n$ nodes with labels of length
$\log{n} + \Oh((\log{\log{n}})^2)$. While this is a major step in determining the asymptotically
correct second-order term, we still have a gap between the (simple) lower bound of $\log{n}+\Omega(\log{\log{n}})$
and our upper bound of $\log{n} + \Oh((\log{\log{n}})^2)$. It doest not seem possible to bridge this
gap by working in our framework, as it appears that it incurs a multiplicative penalty of
$\log\log n'$, for every $n'=n,n^{1/2},n^{1/4},\ldots$, in the second-order term. It seems to
us that routing is harder than ancestry, and so $\Omega((\log\log n)^{2})$ might be the
right answer. However, showing this requires fixing a family of trees that are complicated
but at the same time possible to analyse. We have shown that trees of bounded degree or depth
admit a scheme with the second-order being $\Oh(\log{\log{n}})$, which suggests that this
family should consists of trees with logarithmic depth and many nodes having large ``entropy''
of (rounded) sizes of their subtrees. This seems hard to quantify and analyze,
and the bound of $\log{n} + \Oh(\log{\log{n}})$ in the model with local tables
of size $\Oh(\log{n})$ shows that some natural approaches cannot work.

\bibliographystyle{plain}
\bibliography{biblio}

\appendix
\newpage

\section{Constant time query}
\label{sec:const}

\paragraph{Technical aspects of roundings.}
In our labeling schemes we frequently used rounding up to $\lfloor 2^{t/b}\rfloor $, for integer $t$ and some fixed integer parameter $b$.
Numbers from range $[0 , n-1]$ can be stored, after such rounding, on $\log{(b\log{n})}=\log{b}+\log{\log{n}}$ bits.
This is convenient for the analysis, but not necessarily so for implementing operations such as taking powers or logarithms.
As a substitute, one could use a very similar rounding, which we call a \emph{two-parts representation}, that also
turns out to be more useful if we want to implement queries in constant time.
Instead of rounding up given number $x \in [0 , n-1]$ to the power of $2^{1/b}$,
we can store the first $\log{b}+2$ most significant bits of $x$ and also the length
of its binary representation on $\log{\log{n}}$ bits.
This way only $\log{b}+2+\log{\log{n}}$ bits are used, and bits of $x$ after the $(\log{b}+2)$-th one are lost
(we round up by just adding $1$ to the stored most significant bits),
which results in rounding by a factor of $1+1/2b$.
As $1+1/2b \leq e^{1/2b} < 2^{1/b}$, this is not more than in the previous method.
But now, two-parts representation operates only on powers and logarithms in base two.
This representation is basically floating-point numbers with precision parameterized by $b$,
used for integers only, and always rounding up.

\paragraph{Finding port number with prefix sums.}
This appendix gives an overview of a routing labeling scheme with a constant-time decoder, providing description of a construction and sketched proofs.
The additional ingredient is a data structure for storing some information concerning prefixes of children. Our goal will be to prove the following
theorem:

\begin{figure}
  \includegraphics[scale=1.0]{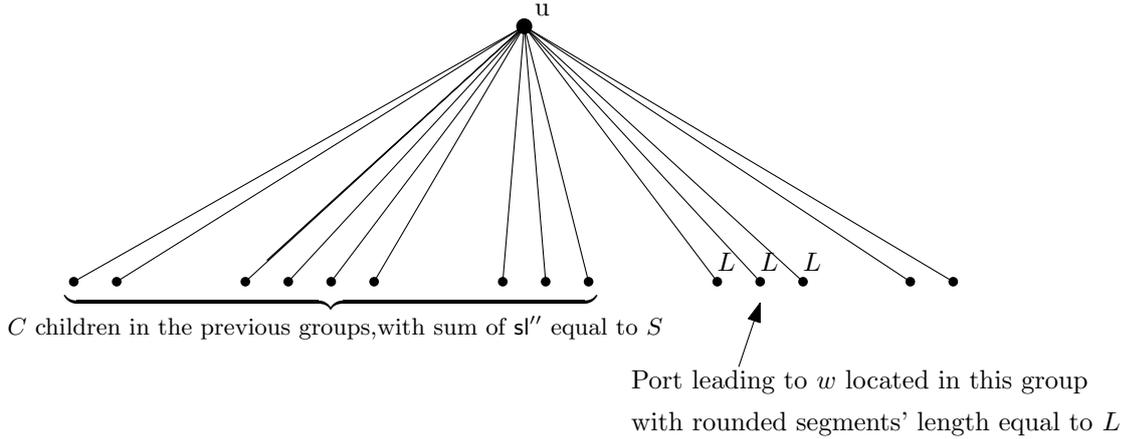}
  \caption{Information needed during a query: locating the correct group, the rounded size of a reserved segment there,
  and prefix sums of the number of ports and segment sizes in all previous groups.
  $\start(w) = \start(u) + S + kL + x$, and port $C+k+2$ is chosen.}
  \label{fig:querying2}
\end{figure}

\begin{theorem}
There exists a labeling scheme for routing in trees on $n$ nodes with labels of length $\log{n} + \Oh((\log{\log{n}})^3)$ bits,
the decoder answering queries in constant time, and the encoder working in polynomial time.
\label{th:const}
\end{theorem}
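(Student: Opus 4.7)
The plan is to take the scheme from Section~\ref{sec:not_so_final} and modify it so that, once we have located $u$ and identified the offset $\start(w)-\start(u)$, the decoder can determine the correct group of children and the index inside that group in constant time, at the cost of storing slightly more in $\rt(u)$ and consequently shrinking the parameter $b$.

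First, I would switch all rounded magnitudes to the two-parts representation described at the start of the appendix. This turns $\seglenb$, $\bound$, and all quantities appearing in $\rt(u)$ into fixed-width floating-point numbers on which addition, multiplication, division, and rounded exponentiation can be carried out in $O(1)$ time using standard word-RAM arithmetic. All asymptotic rounding factors remain the same (a $2^{1/b}$ bound is still valid), so the structural analysis of Section~\ref{sec:improved2} and Section~\ref{sec:not_so_final} carries over unchanged.

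Next, I would augment $\rt(u)$ with two auxiliary sequences indexed by the groups constructed in Algorithm~\ref{alg:groups}: a prefix-sum sequence of the number of light children up to and including each group, and a prefix-sum sequence of the total reserved segment length up to and including each group (both stored in two-parts representation). Given the offset $x=\start(w)-\start(u)-1$, the correct group is then the smallest index $k$ for which the $k$-th segment-prefix-sum exceeds $x$. Because the number of groups is $O(b\log\log n)$ and each prefix-sum uses $O(\log\log n)$ bits, I would pack all such prefix-sums into $O(1)$ machine words (arranged with a high-bit separator between adjacent fields, as in the classical fusion-tree trick) and locate $k$ in constant time with a single parallel subtraction, mask, and most-significant-bit extraction. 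Once $k$ is known, all children in group $k$ share a common reserved segment length $L$ stored directly in $\rt(u)$, so the index of the target child inside the group is $\lfloor (x-S)/L\rfloor$, where $S$ is the previous segment-prefix-sum; finally the port number is obtained by adding the previous port-prefix-sum plus one (for the heavy child).

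Finally I would rebalance the parameter $b$. The augmented routing table now uses $O(b(\log\log n)^2)$ bits instead of $O(b\log\log n)$ bits, because each of the $O(b\log\log n)$ groups carries $O(\log\log n)$ bits of prefix-sum data in addition to its rounded segment length. To fit $\rt(u)$ inside the $\log\lightweight(u)$ trailing zeros reserved by the framework of Section~\ref{sec:not_so_final}, I would take $b_i=\Theta\bigl(\log\lightweight(u_i)/(\log\log\lightweight(u_i))^2\bigr)$ instead of the previous $\Theta(\log\lightweight(u_i)/\log\log\lightweight(u_i))$. Repeating the analysis of Claim~\ref{prop:size2} with this smaller $b$ replaces the $14c\log k/k$ exponent accumulated per level by $\Theta((\log\log k)^2/k)$, whose sum over $k\le\log n$ evaluates to $O((\log\log n)^3)$. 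Thus $|\start(u)|\le \log n+O((\log\log n)^3)$ and all other parts of the label remain at $O(\log\log n)$ bits, giving the claimed bound.

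The main obstacle is the constant-time group-location step: it must work uniformly for all nodes despite $b$ (and therefore the number of groups and the field widths) varying from node to node. I expect to handle this by rounding the field width itself to the nearest power of two and storing it in the $O(\log\log n)$-bit header of the label, so that the decoder can compute the appropriate mask and separator pattern in constant time before performing the packed subtraction. The remaining steps — within-group division, port-prefix-sum lookup, and ancestry test — are direct constant-time operations on two-parts numbers.
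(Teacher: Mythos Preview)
Your high-level plan coincides with the paper's: replace the bit-by-bit routing table by per-group prefix sums on segment length and on child count, look up the correct group via a packed parallel-comparison (fusion-tree) rank query, and compensate for the extra $\Oh(\log\log n)$ bits per group by taking $b_i=\Theta\bigl(\log\lightweight(u_i)/(\log\log\lightweight(u_i))^2\bigr)$, which pushes the accumulated rounding to $\Oh((\log\log n)^3)$.

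There is, however, a real correctness gap. You propose to \emph{store} the prefix sums $S$ and $C$ in rounded two-parts form and then compute the child index as $\lfloor (x-S)/L\rfloor$ and the port as $C+\lfloor (x-S)/L\rfloor+2$. For this formula to return the correct port, the values $S$, $L$, and $C$ that the decoder reads must be \emph{exactly} the values the encoder used when laying out the intervals and numbering the ports; merely approximating the true prefix sums is not enough, since an off-by-one in $S$ or $C$ corrupts the division. The paper resolves this by making the encoder \emph{implement} the rounded prefix sums: after rounding $S_{j+1}$ it leaves a gap in the ID space so that the next group really does begin at offset $S_{j+1}$, and after rounding the child-count prefix sum it physically moves children from later groups into $g_j$ (inflating their segments to $L_j$) so that $C_{j+1}$ is achieved exactly. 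Both adjustments introduce additional $2^{\Oh(1/b)}$ blow-up per level, which must be folded into the $\seglen$ analysis. Your proposal does not mention either adjustment, and without them the decoder's arithmetic is unsound.

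A minor point: the per-level exponent with the new $b$ is $\Theta((\log k)^2/k)$, not $\Theta((\log\log k)^2/k)$ as you wrote (at level $k$ one has $\log\lightweight\approx k$, so $1/b\approx (\log k)^2/k$); your stated sum $\Oh((\log\log n)^3)$ is nonetheless the correct outcome for the former expression.
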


In Section~\ref{sec:not_so_final}, we had to decode the routing table bit by bit, which prevented us from answering queries in
constant time. Recall that we only need to consider the case that the decoder learned from $\ell(u)$ and $\ell(w)$ that
$w$ is in a subtree rooted at a light child $u$.
Let $v_w$ be the child of $u$ which is also an ancestor of $w$.
Children of $u$ were divided into groups, and children in the same group have equal reserved segment length ($\seglenb$).
Now four pieces of information are needed to locate the right port leading to $v_w$ from $u$:
\begin{itemize}
\item In which group of children of $u$ $v_w$ is; only as leading to the following values, the number of a group in itself is not interesting.
\item $L$, rounded segment length in this group.
\item $S$, prefix sum of reserved segments of all children in the prior groups.
\item $C$, the number of all children in the prior groups.
\end{itemize}
Consult Figure~\ref{fig:querying2}.
Then, as port $0$ leads to the parent of $u$ and $1$ to its heavy child,
the decoder needs to answer with a port numbered $1+C+ \lceil (\start(w)-\start(u)-S)/L \rceil$.

If a scheme is able to store prefix sums for lengths of reserved segments and sizes of groups (measured in the number of children),
then the decoder can use either binary search for locating the correct port or better use some static data structure allowing for
constant time $\rank$ queries, where for a fixed set $\rank(x)$ returns number of elements from this set less than $x$.
We use a structure based on a parallel comparison method from \cite{FusionTrees} to be described later.
\begin{lemma}
For any positive integers $w'$ and $k$ such that $w'k = \Oh(w)$, it is possible to store $k$ numbers of binary length
at most $w'$ on $\Oh(w'k)$ bits, while supporting $\rank$ queries in constant time.
Note that we might be using space smaller than one full machine word, so just its fragment.
\label{lem:rank2}
\end{lemma}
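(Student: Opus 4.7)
The plan is to exploit word-level parallelism in the spirit of the parallel-comparison trick from~\cite{FusionTrees}: since $w'k = \Oh(w)$, the entire representation lives in a constant number of machine words, so we have room to allocate an extra ``test bit'' per number and execute all $k$ comparisons at once via a single arithmetic operation.

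First I would preprocess by sorting the $k$ numbers and packing them into one word with fields of width $w'+1$; in each field the top bit (the \emph{test bit}) is set to $1$ and the lower $w'$ bits hold the number. Second, on a query $\rank(x)$ I would form in constant time a ``query word'' consisting of $k$ copies of the field $0 \cdot x$, obtained by multiplying $x$ by a precomputed constant whose binary representation has exactly one $1$ at the bottom of every field. Third, I would subtract the query word from the stored word. Because in each field the stored test bit is $1$ and the query test bit is $0$, a borrow consumes the test bit of field $i$ if and only if $a_i < x$, and crucially this borrow never escapes the field. Masking out everything but the test bits extracts, into a single $k$-bit string, the outcomes of all $k$ comparisons in parallel.

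Finally, since the $a_i$ are sorted, the resulting test-bit string has the monotone shape $0^r 1^{k-r}$ where $r = \rank(x)$. Thus the rank is just the position of the highest $0$-bit (equivalently, of the lowest $1$-bit) of this pattern, which is a most/least-significant-bit query on an $\Oh(w)$-bit word. Both operations are known to be computable in constant time on the word RAM, either directly or via the multiplication-and-masking trick of Fredman and Willard~\cite{FusionTrees}, completing the $\Oh(1)$-time implementation.

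The main obstacle is really just careful bookkeeping rather than a conceptual jump: verifying that the subtraction cannot cause spurious borrows between fields (the classical fusion-tree guarantee, ensured precisely by the dedicated test bit), handling ties among the $a_i$ so that equal stored elements fall on the correct side of the strict-$<$ boundary (by a one-time shift of $x$ or by breaking ties consistently during sorting), and dealing with the possibility that $(w'+1)k$ slightly exceeds a single machine word, which only adds a constant number of operations across $\Oh(1)$ words.
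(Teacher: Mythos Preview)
Your proposal is correct and follows essentially the same approach as the paper: both implement the Fredman--Willard parallel-comparison trick by packing the sorted keys into $(w'+1)$-bit fields with a leading sentinel bit, replicating the query via a single multiplication, subtracting, and masking to obtain a monotone bit pattern encoding all $k$ comparisons. The only cosmetic difference is the final step: the paper extracts $\rank(x)$ by a popcount realised through one more multiplication by the same spreading constant, whereas you phrase it as an MSB/LSB computation (and also mention the multiplication trick as an alternative); either works in $\Oh(1)$ time.
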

We will refer to this structure as a \emph{dictionary}.
Effectively, we use $\rank$ queries just to locate predecessors in constant time.

\paragraph{Creating prefix sums.}
The decoder from Section~\ref{sec:not_so_final}, in some sense, used just one bit in $\rt$ for every group.
Simple prefix sums need to be longer, thus slight increase in the label length.
Sums should not be too large, though --- we use $\Oh(\log{\log{n}})$ bits for each.
This will introduce yet another rounding for $\nodespan$ of nodes, which needs to be analysed.

Assume in advance that we are able to keep the size of the reserved segment for a node $u_i$ within
$\Oh(|T_{u_i}|\level(u_i)2^{\Oh((\log{\log{\lightweight(u_i)}})^3)})$ bound, so besides $\start(u_i)$ and $\bound(u_i)$
the size of binary representation of numbers in use is $\Oh(\log{\lightweight(u_i)})$,
as they depend only on the total size of subtrees of light children of $u_i$.
Then, we need to store the (rounded) prefix sums $\sum_{j=1}^{k} \seglenb({v_{i,j}})$ ---
and $\Oh(b\log{\log{\lightweight(u_i)}})$ of them, according to the division into groups;
recall there are $\Oh(b\log{\log{\lightweight(u_i)}})$ groups.
In other words, we want to ensure that the sum of the lengths of reserved segments for children from the first group is rounded,
then that the sum of the lengths of reserved segments for children from the first two groups is rounded, and so on.
Note that we round prefix sums, not individual segment's length, and increase of the length of a single
segment might be very significant --- in the worst case rounding is applied many times to 
almost the whole segment reserved for $u_1$.

As we plan to stick to rounding by a factor of $2^{\Oh(1/b)}$ per level,
we can afford rounding by a factor of $2^{\Oh(1/b^2\log{\log{\lightweight(u_i)}})}$ for every created prefix sum.
Thus, these prefix sums can be stored in rounded two-parts representation, in the form of $\Oh(\log{b}+\log{\log{\log{\lightweight(u_i)}}})$
most significant bits and then $\Oh(\log{\log{\lightweight(u_i)}})$ bits encoding the number of following zeroes.
With this many bits used for every prefix sum, so for every group,
whole description of $\rt$ takes $\Oh(b(\log{\log{\lightweight(u_i)}})^2 + b\log{b}\log{\log{\lightweight(u_i)}})$ bits,
when Lemma~\ref{lem:rank2} is used.
Note that we have two-parts representation of numbers, with some most significant bits and then a number of following zeroes,
while a dictionary from Lemma~\ref{lem:rank2} operates just on the usual binary representation of numbers.
We will deal with this small issue later, for now, let us assume it can be overcome.

By setting $b=\log{\lightweight(u_i)}/c(\log{\log{\lightweight(u_i)}})^2$ for some constant $c$,
we would fit the whole $\rt(u_i)$ in the $\log{\lightweight(u_i)}$ available bits created during encoding
by storing the trailing zeroes of $\start(u_i)$ separately.
Then, as $b$ is divided by an additional $\log{\log{\lightweight(u_i)}}$
(recall that in Section~\ref{sec:not_so_final} $b=\log{\lightweight(u_i)}/c\log{\log{\lightweight(u_i)}}$ was used),
it can be checked with similar inequalities bounding $\seglen$ as before,
only taking into the account the constant number of additional $2^{1/b}$ factors for every level from creating prefix sums,
that we will get $\log{n}+\Oh((\log{\log{n}})^3)$ bits as the length of a label.

But, as additional information, we need prefix sums on groups' sizes to know how many ports need to be skipped.
Analogically, we can afford rounding by a factor of $2^{\Oh(1/b^2\log{\log{\lightweight(u_i)}})}$ for every group.
Therefore, rounding prefix sum at every group by storing $\Oh(\log{b}+\log{\log{\log{\lightweight(u_i)}}})$ most significant bits
and then the number of following zeroes is going to be sufficient ---
total increase in reserved segments' length will be just a factor of $2^{\Oh(1/b)}$,
and such a prefix sum takes $\Oh(\log{b}+\log{\log{\lightweight(u_i)}})$ bits to be stored.
As we are increasing the sizes of groups, we do not want to add dummy nodes to any group but the last one,
so rounding of the number of children is done by increasing the reserved segments for a necessary number of children
in the following groups.
More precisely, after rounding the size of a group $g$ up to $z$, until $g$ consists of exactly $z$ light children of $u_i$,
the reserved segment of a single child in any of the further groups is artificially increased to the size of segments in $g$,
and then this child is moved to $g$.
This way, dummy nodes will have to be used for the last group only, which cannot borrow children from the further groups,
but in this case it is not an issue, as the decoder will never answer a query
with a port leading to a dummy child inside the last group.

The total increase in $\seglen(u_i)$ is indeed by a factor of $2^{\Oh(1/b)}$,
as the already existing groups were increased by at most this value.
These prefix sums are built in parallel to prefix sums of segment lengths,
so during processing of a single group first we round prefix sum on groups' sizes,
then prefix sum on segment lengths, then proceed to the next group.

Finally, together with the prefix sum for the number of children, we also need to store the size of reserved segments in a given group.
Rounded by $2^{\Oh(1/b)}$, it needs $\Oh(\log{b}+\log{\log{\lightweight(u_i)}})$ bits
to be stored in two-parts representation.

To sum up, the entry for a group $g_j$ consists of three elements: 
\begin{itemize}
\item As a key for the dictionary, value $S_j$ being the rounded sum of all segments used for groups $1, \ldots, (j-1)$.
Then the reserved segment of the first child in $g_j$ is starting at $\start(u_i)+S_j+1$.
$S_j$ is stored in rounded two-parts representation, with $\Oh(\log{b})$ most significant bits saved,
which takes space of $\Oh(\log{b}+\log{\log{\lightweight(u_i)}})$ bits. 
\item $C_j$, number of children of $u_i$ already processed in the previous groups.
Through moving children between groups this number was made to have just $\Oh(\log{b})$ significant bits, and then all zeroes,
thus taking $\Oh(\log{b}+\log{\log{\lightweight(u_i)}})$ bits to store exactly.
\item $L_j$, the size of the reserved segments in $g_j$, also rounded and stored on $\Oh(\log{b}+\log{\log{\lightweight(u_i)}})$ bits.
\end{itemize}
These three values enable locating the sought port number in a way described at the beginning.
We make them accessible through $\rank$ queries on $S_j$ values --- the answer to this query is interpreted as an index in an array,
and there three values are stored together.
As any of these values for a single group takes $\Oh(\log{b}+\log{\log{\lightweight(u_i)}})$ bits,
we still can use $b=\log{\lightweight(u_i)}/c(\log{\log{\lightweight(u_i)}})^2$ to achieve
labels with length $\log{n}+\Oh((\log{\log{n}})^3)$ bits.

\begin{algorithm}[h]
\begin{algorithmic}[1]
  \Function{Assign-$\seglen$}{$P$}
  \State \textbf{Input:} heavy path $P = u_1,\ldots,u_p$. $c$ is a fixed constant.
  \State \textbf{Requirement:} all nodes in the subtrees hanging off $P$ are already processed
  \State
  \For{$i=1 \ldots p$}	
  	\State $b_i \gets \log{\lightweight(u_i)}/c(\log{\log{\lightweight(u_i)}})^2$
    \State $C \gets$ \Call{Construct-classes}{$u_i$, $b_i$}
    \For{$j=1 \ldots \degree(u_i)-1$}
      \State Set $\seglenp(v_{i,j})$ to be the boundary value of the class of $v_{i,j}$ in $C$
    \EndFor
    \State $G \gets$ \Call{Construct-groups}{$u_i$, $b_i$}
    \For{$j=1 \ldots \degree(u_i)-1$}
      \State Set $\seglenb(v_{i,j})$ to be the size of the largest child in the group of $v_{i,j}$ in $G$
    \EndFor
    \For{$j=1 \ldots |G|$}
      \State Let $g_j$ be $j$-th group
      \State Set rounding precision to $\Theta(\log{b_i})$ most significant bits
      \State Round to $L_j$ the length of segments in $g_j$
      \State Round to $C_{j+1}$ the prefix sum on sizes of groups up to $j$
      \State Increase size of $g_j$ if necessary
      \State Round to $S_{j+1}$ the prefix sum on the length of segments in groups up to $j$ 
    \EndFor
    \State Create the dictionary on $S_j$ values for $u_i$
    \State Create an array with tuples $(S_j, C_j, L_j)$   
  \EndFor
  \State Set $\seglen(u_1)$ to an appropriate upper bound value
  \EndFunction
\end{algorithmic}
\caption{High-level description of the first phase of the encoder for a labeling scheme achieving a constant time query.}
\label{alg:consttime}
\end{algorithm}

Now the assignment of the values in the first phase of the encoder, as in Algorithm~\ref{alg:1phaseagain}, becomes a bit more involved.
In a given node, $\seglen$ values of all light children are gathered, then rounding in classes and groups happens.
Then the groups are considered one by one.
Firstly, we round the length of the reserved segment for a group $g_j$ (this corresponds to $L_j$).
Secondly, the size of a group might be increased, by moving some children from the further groups 
(with increasing sizes of their reserved segments to $L_j$).
This is to ensure that the prefix sum on groups' sizes ($C_{j+1}$) can be stored exactly on small number of bits.
Then a recursive assignment of the values in the subtrees of children from this group happens.
At the end the prefix sum on segments' lengths is rounded (this corresponds to $S_{j+1}$).
Observe that gaps between reserved segments are introduced --- some additional intervals of 
possible values for IDs are skipped when a prefix sum is rounded up.
$S_j$ prefix sums are counting also these gaps, not only 'useful' reserved segments.
A very high-level pseudocode for this phase is presented in Algorithm~\ref{alg:consttime}.
The accumulator in the second phase of the encoder is increased according to the sequence of $S_j$
and light weights of nodes.

\paragraph{Storing prefix sums.}
First, we are going to prove Lemma~\ref{lem:rank2}.

\begin{proof}
(Lemma~\ref{lem:rank2})
We just use the method from the paper of Fredman and Willard about fusion trees~\cite{FusionTrees}.
Recall that we have a collection of $k$ numbers $a_1 \leq \ldots \leq a_k$, each of binary length $w'$ (possibly with leading zeroes).
We store them explicitly separated with ones, as a binary string $s$ of form $s=1 \circ a_1 \circ 1 \circ a_2 \circ \ldots \circ 1 \circ a_k$.
For a query $\rank(x)$, first by computing the most significant bit in constant time we check whether binary length of $x$ is longer than $w'$.
In that case, $x$ is larger than every $a_i$.
Now we assume that $x$ has length of $w'$ bits, possibly with leading zeroes.
We multiply $x$ by $(0^{w'} \circ 1)^k$ to get $(0 \circ x)^k$,
that is $k$ blocks of $x$ separated with additional single zeroes.
Then we subtract this value from $s$.
Observe that this let us subtract $0 \circ x$ from each $1 \circ a_i$, with no interference from carrying.
First bit of block $i$ of length $w'+1$ will be 1 if and only if $x \leq a_i$.
After subtracting, we AND the result with $(1 \circ 0^{w'})^k$ to get only this first bit in each block.
The numbers were sorted, so now we just need to find the first 'interesting' bit which flipped to 0 from 1.
Finding the most significant bit can be implemented with a constant number of standard operations, but
we note that for this monotonic case, equivalently, we need to find the number of 1-bits.
To achieve this in our case, we can multiply by $(0^{w'} \circ 1)^k$ --- all the bits set to 1 will collide in the first block of the result,
without a carry from other blocks for values of $w'$ and $k$ used later, so we can find the sum of ones by looking at this first block.
If we subtract the resulting sum from $k$, we get $\rank(x)$.

Note that value of $(0^{w'} \circ 1)^k$ can be produced in constant time by computing
$(1 \ll (k(w'+1)))/((1 \ll (w'+1))-1)$, using C-like $\ll$ notation for a shift to the left.
\end{proof}

Finally, we should go back to the issue with using a dictionary while having numbers in rounded two-parts representation.
In the following, by light abuse of notation, we use $s$ as either a number or a binary string representing this number,
possibly padded with some leading zeroes.
Assume that we have two numbers $x_1$ and $x_2$ in two-parts representation, with the same number of most significant bits stored.
Let $x_1=m_1 2^{e_1}$ and $x_2=m_2 2^{e_2}$, with $|m_1|=|m_2|$.
Now, if the shorter of binary strings $e_1$ and $e_2$ is padded with leading zeroes to have the same length as the longer one,
we can compare $x_1$ and $x_2$ by comparing concatenated parts of their representation:

\begin{fact}
Assuming $|m_1|=|m_2|$ and $|e_1|=|e_2|$, $x_1 > x_2$ iff $e_1 \circ m_1 > e_2 \circ m_2$, as firstly exponents are compared, and only when they are equal
stored most significant bits are compared.
\end{fact}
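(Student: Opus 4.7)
The plan is to establish the equivalence by a case split on whether $e_1 = e_2$. A preliminary observation is that since $|e_1| = |e_2|$ and $|m_1| = |m_2|$, the two concatenations $e_1 \circ m_1$ and $e_2 \circ m_2$ are binary strings of the same total length, so comparing them lexicographically coincides with comparing the integers they represent; concretely, interpreting $e \circ m$ as an integer yields $e \cdot 2^{|m|} + m$, where the symbols on the right-hand side denote the integer values encoded by the corresponding bit strings.

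In the easy case $e_1 = e_2$, the inequality $x_1 > x_2$ reduces to $m_1 > m_2$ after dividing out the common factor $2^{e_1}$. On the string side, the two concatenations share their first $|e_1|$ bits, so $e_1 \circ m_1 > e_2 \circ m_2$ likewise reduces to $m_1 > m_2$, and the equivalence is immediate.

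The remaining case, treated without loss of generality as $e_1 > e_2$, is where the real content lies. The plan is to exploit the fact that in the two-parts representation each $m_i$ stores the most significant bits of a nonzero integer, so its leading bit is $1$ and thus $m_1 \geq 2^{|m_1|-1}$, while trivially $m_2 \leq 2^{|m_2|}-1$. Combining these bounds with $|m_1| = |m_2|$ and $e_1 \geq e_2 + 1$ should yield
\[ x_1 = m_1 \cdot 2^{e_1} \geq 2^{|m_1|-1+e_1} \geq 2^{|m_2|+e_2} > m_2 \cdot 2^{e_2} = x_2, \]
so $x_1 > x_2$. Concurrently, since $e_1 > e_2$ as integers of equal bit-length, $e_1 \circ m_1 > e_2 \circ m_2$ holds regardless of the mantissa parts.

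The one delicate point, and the main obstacle if there is any, is the implicit normalization assumption that $m_1$ starts with a $1$. Without it, a large exponent paired with a mantissa of leading zeros could make the string comparison disagree with the numerical one, breaking the fact. Fortunately this does not arise here, because the two-parts representation stores the most significant bits of the underlying value by definition, and the zero value only requires a trivial separate check.
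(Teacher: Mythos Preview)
Your proposal is correct and follows precisely the reasoning the paper sketches; the paper itself gives no separate proof beyond the clause ``as firstly exponents are compared, and only when they are equal stored most significant bits are compared'' embedded in the statement. Your case split on $e_1 = e_2$ versus $e_1 \neq e_2$ is exactly that informal argument made rigorous, and you correctly isolate the one nontrivial point the paper leaves implicit: the normalization assumption that each $m_i$ has a leading $1$, which holds because the two-parts representation stores the most significant bits of a nonzero value.
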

Recall that created prefix sums have the same number of stored most significant bits, and we can add leading zeroes to
the exponents to make their lengths equal, which is also necessary for placing them in a dictionary.
Therefore, we can use a dictionary from Lemma~\ref{lem:rank2} to store created prefix sums and answer $\rank$ queries in constant time.
The answer to a query is used as an index in the array storing tuples $(S,C,L)$.

Let us sum up the process of obtaining the sought prefix sums:
\begin{enumerate}
\item The decoder computes $q=\start(w)-\start(u)$ and translates it into two-parts representation,
by changing all bits after fixed number of most significant ones to zeroes,
then constructs a string of fixed length by concatenating these two parts and padding with leading zeroes.
\item The resulting number is used in a $\rank$ query to the dictionary.
\item The acquired rank is used to access an index in an array, where values of $L, C, S$ are stored.
With them, the decoder can answer a query.
\end{enumerate}

All steps can be executed in constant time.
Overall, both the dictionary and the array storing prefix sums take $\Oh(\log{\lightweight(u_i)})$ bits,
with $\Oh(\log{\lightweight(u_i)}/\log{\log{\lightweight(u_i)}})$ elements of size $\Oh(\log{\log{\lightweight(u_i)}})$).
For big enough constant $c$ in the formula for $b$ they can be fit into $\log{\lightweight(u_i)}$ bits of space available in the label of every node $u_i$.
Recall that we use $b=\log{\lightweight(u_i)}/c(\log{\log{\lightweight(u_i)}})^2$, thus increasing constant $c$ decreases number of groups
(there are $\Oh(b\log{\log{\lightweight(u_i)}})$ of them),
which in turn decreases size of structures used, allowing us to fit them in $\log{\lightweight(u_i)}$ bits.
As a side effect, though, $b$ decreases and the rounding becomes more rough, increasing the range of the numbers used for IDs.
Namely, the constant in the exponent in $2^{\Oh((\log{\log{\lightweight(u_i)}})^3)}$ factor increases.
But this affects only the constant in the second-order term of the bound on the length of a label,
and we only use the property that the binary length of all numbers stored in our structures is $\Oh(\log{\lightweight(u_i)})$ anyway.

\end{document}